\journal{Journal of \LaTeX\ Templates}
\newcommand{\remove}[1]{}
\newtheorem{theorem}{Theorem}[section]
\newtheorem{lemma}{Lemma}[section]
\newtheorem{remark}{Remark}[section]
\newtheorem{prop}{Proposition}[section]
\def\E{{\mathbb{E}}}
\def\R{{\mathbb{R}}}
\DeclareMathOperator*{\argmin}{arg\,min}
\algnewcommand{\algorithmicgoto}{\textbf{go to}}%
\algnewcommand{\Goto}[1]{\algorithmicgoto~\ref{#1}}
\begin{document}
%%%%%%%%%%%%%%%%%%%%%%%%%%%%%%%%%

\begin{frontmatter}

\title{Service Scheduling for Random Requests with Fixed Waiting Costs}
%\tnotetext[mytitlenote]{Fully documented templates are available in the elsarticle package on \href{http://www.ctan.org/tex-archive/macros/latex/contrib/elsarticle}{CTAN}.}

%% Group authors per affiliation:
\author{Ramya Burra, Chandramani Singh and Joy Kuri}% <-this % stops a space
%\thanks{The authors are with the Department
%of Electronic Systems Engineering, Indian Institute of Science Bangalore, 560012 India. Emails: {\{{burra,chandra,kuri}\}@iisc.ac.in} Telephone: +91 80 22932495}% <-this % stops a space
%\thanks{The first and second authors acknowledge support from Research Fellowships of Visvesvaraya PhD Scheme and INSPIRE Faculty Research Grant (DSTO-1363).}}
\address{Department of ESE, Indian Institute of Science, Bangalore}

%\fntext[myfootnote]{Since 1880.}

%% or include affiliations in footnotes:
%\author[mymainaddress,mysecondaryaddress]{Elsevier Inc}
%\ead[url]{www.elsevier.com}

%\author[mysecondaryaddress]{Global Customer Service\corref{mycorrespondingauthor}}
\cortext[mycorrespondingauthor]{Corresponding author}
%\ead{\{{burra,chandra,kuri}\}@iisc.ac.in}
\ead{burra,chandra,kuri@iisc.ac.in}
%\address[mymainaddress]{1600 John F Kennedy Boulevard, Philadelphia}
%\address[mysecondaryaddress]{360 Park Avenue South, New York}
\begin{abstract}
%\color{black}
We study service scheduling problems in a slotted system in which agents arrive with service requests according to a Bernoulli process and have to leave within two slots after arrival, service costs are quadratic in service rates, and there are also waiting costs. We consider fixed waiting costs. We frame the problems as average cost Markov decision processes. While the studied system is a linear system with quadratic costs, it has state dependent control. Moreover, it also possesses a non-standard cost function structure in the case of fixed waiting costs, rendering the optimization problem complex. Here, we characterize optimal policy. We also consider a system in which the agents make scheduling decisions for their respective service requests keeping their own cost in view. We again consider fixed waiting costs and frame this scheduling problem as a stochastic game. Here, we provide Nash equilibrium.
%In the case of quadratic waiting costs, we provide an explicit expression showing that the optimal policy is linear in the system state. 
%\color{black}
\end{abstract}

\begin{keyword}
Service Scheduling, Quadratic service cost, Fixed waiting cost
\end{keyword}

\end{frontmatter}

%\linenumbers
\remove{

%%%%%%%%%%%%%%%%%%%%%%%%%%%%%%%%%%%%%%%%%%%%%%%%%%%%%%%%%%%%%%%%%%%%%%%%%%%%%
%\title{Service Scheduling for Random Requests with Fixed and Quadratic Waiting Costs}

%\author{Ramya Burra, Chandramani Singh and Joy Kuri% <-this % stops a space
%\thanks{The authors are with the Department
%of Electronic Systems Engineering, Indian Institute of Science Bangalore, 560012 India. Emails: {\{{burra,chandra,kuri}\}@iisc.ac.in} Telephone: +91 80 22932495}% <-this % stops a space

\remove{
%\author{\IEEEauthorblockN{Ramya Burra\IEEEauthorrefmark{1},
%Chandramani Singh\IEEEauthorrefmark{2} and Joy Kuri\IEEEauthorrefmark{3} }
%\IEEEauthorblockA{Department of Electronic Systems Engineering,
%IISc\\
%Bangalore\\
%Email: \IEEEauthorrefmark{1}burra@iisc.ac.in,
%\IEEEauthorrefmark{2}chandra@iisc.ac.in,
%\IEEEauthorrefmark{3}kuri@iisc.ac.in}}
\author{\IEEEauthorblockN{Ramya Burra,
Chandramani Singh and Joy Kuri }\\
\IEEEauthorblockA{Department of ESE, \\
Indian Institute of Science Bangalore\\
Email: {\{{burra,chandra,kuri}\}@iisc.ac.in}}}
%\authorrunning{Burra et al.}   % abbreviated author list (for running head)
%

%%%% list of authors for the TOC (use if author list has to be modified)
%\author{Ramya Burra,  Chandramani Singh,Joy Kuri, and  Eitan Altman}
% Use \authorrunning{Short Title} for an abbreviated version of
% your contribution title if the original one is too long
%
\title{Service Scheduling for Random Requests with Fixed and Quadratic Waiting Costs}
%
%\titlerunning{Service Scheduling}  % abbreviated title (for running head)
%                                     also used for the TOC unless
%                                     \toctitle is used
%
%\author{Ramya Burra \and Chandramani Singh \and Joy Kuri}  %\inst{1}
%
%\authorrunning{Burra et al.}   % abbreviated author list (for running head)
%
%%%% list of authors for the TOC (use if author list has to be modified)
%\tocauthor{Ramya Burra, Chandramani Singh, and  Joy Kuri}
%
%\institute{Department of ESE, Indian Institute of Science Bangalore, India \\
%\email{{{burra,chandra,kuri}}@iisc.ac.in}}}
}

\maketitle              % typeset the title of the contribution

\begin{abstract}
We study service scheduling problems in a slotted system in which jobs arrive according to a Bernoulli process and have to leave within two slots after arrival, service costs are quadratic in service rates, and there are also waiting costs. We consider fixed and quadratic waiting costs. We frame the problems as average cost Markov decision processes. While the studied system is a linear system with quadratic costs, it has state dependent control. Moreover, it also possesses a non-standard cost function structure in the case of fixed waiting costs, rendering the optimization problem complex. In the case of fixed waiting costs, we provide the optimal policy when the parameters satisfy certain conditions. We also propose an approximate policy. In the case of quadratic waiting costs, we obtain explicit optimal policies in the case when all the jobs are of same size. In particular, we show that the optimal policy is linear in the system state. %When the job sizes can take two or more distinct values, we provide an algorithm that yields the optimal policy. 
We also consider scenarios where the jobs intend to minimize their own service and waiting costs. We frame these problems as stochastic games and analyze Nash equilibrium policies. We also present a comparative numerical study of different waiting costs and performance criteria.
\end{abstract}
}
\section{Introduction}
    Service scheduling problems have been widely studied in the literature. They apply to a wide range of applications like speed scaling in CPUs, scheduling of charging of electric vehicles (EVs), job scheduling in mobile edge computing (MEC), etc. In all these applications, service costs, measured in terms of energy consumption, increase with quantum of service. For instance, server energy consumption in cloud computing increases as a convex function of the quantum of service (see~\cite{Lin-2011-datacenters},~\cite{Ren-2012-energy-cloud}). Similarly, in the context of EV charging, the energy cost can be modelled as a quadratic function of the service offered~\cite{b15}. So, when quanta of services exceed certain thresholds, one may want to defer a part of service requests, saving energy cost in lieu of increased latency. However, large latencies must also be avoided. 

We capture the above conflicting objectives through a model having soft and hard deadlines. It is desirable to complete service requests by their soft deadlines. The service requests can be deferred beyond their soft deadlines, but then they also incur waiting costs. The waiting cost
behaves as a disincentive for deferring service to avoid excessive latencies. Of course, service requests \textit{must} be completed before their hard deadlines. We aim at deriving service scheduling policies that optimize the time average sum of service and waiting costs.
    
    Our framework is general that, as seen in Section~\ref{sec:system-model}, can be applied to many contexts like scheduling charging of EVs, job scheduling in data centers, etc.%(see Figure~\ref{fig:applications}). 
    In all these applications, both hard and soft deadlines arise naturally. For instance, an EV owner would like to get her vehicle charged at the earliest~\cite{b18} and may also have a hard deadline before which the vehicle must be charged. We discuss the applications in Section~\ref{subsec:applications}.%However,  low latency is the key performance  index for most of the real-time applications. For instance, in MEC, edge devices execute many latency-critical applications~\cite{Liu-et-al-2019}.  
   \remove{
In mobile edge computing, various latency-critical applications are pushed on to the edge of the network. These edge nodes share various resources like computing power, storage facility etc. MEC devices must offload services to neighboring MEC devices or Mobile Cloud Computing (MCC) servers to process all assigned tasks within their specified delay requirements~\cite{Liu-et-al-2019}. Hence it is very crucial to optimally schedule services on different MEC devices. Also, low latency is the key parameter index for most of the real-time applications. Therefore, it is desirable for scheduling policies to generate a low latency schedule. Similarly, from~\cite{b4} it can be noted that server power consumption in cloud computing increases as a convex function of the load. Therefore, to save on long term average power, delay-tolerant jobs need to be deferred.  In the context of EV charging, electricity cost of a charging station could rise quite steeply as the load increases. So to save on electricity cost a portion of service might be deferred. However, deferring service might result in an increase in queue length at the charging station. So, it is immensely important to minimize the service cost and penalize latency in all such cases.%Congestion control is the key to the smooth functioning of transportation networks. }
}

Optimal scheduling that intends to minimize the service costs balances service over time. However, since deferring services also incur waiting costs, balancing the quanta of services is sub-optimal. We study service scheduling in slotted systems with Bernoulli service arrivals, quadratic service costs, and service delay guarantees. We initially consider fixed waiting costs. However, in certain applications, service requests incur delay penalties depending on the amount of deferred service. To account for such a scenario, we also consider quadratic waiting costs subsequently. In particular, we consider the cases where the service requests can stay for two slots but incur fixed waiting costs in second slots. We see that this service scheduling problem is a special case of constrained linear quadratic control. We study optimal scheduling and Nash equilibria for selfish agents. These problems consider both service and waiting costs into account. We analyze optimal and equilibrium policies.
\remove{
\begin{figure}[!ht]
	
	\centering
	
	\includegraphics[width=1\textwidth]{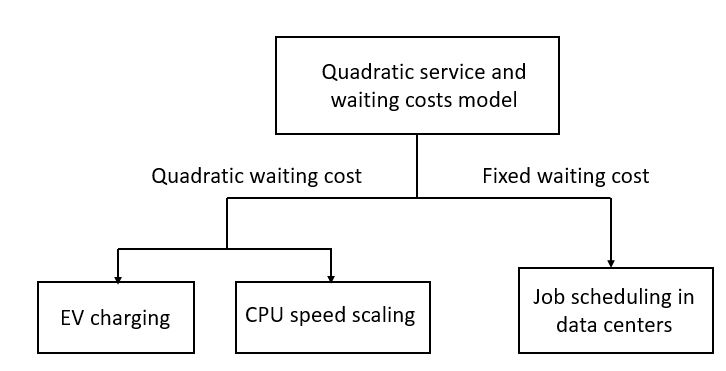}
	\caption{A few applications covered by our model.}
	\label{fig:applications}
\end{figure}
\color{black}}
\subsection{Related work}
%In case of charging EVs optimal scheduling minimizes the total charging cost.
In~\cite{Bae-Kwasinski-12}, the authors propose a centralized algorithm to minimize the total charging cost of EVs. It determines the optimal amount of charging to be received at various charging stations en route. There is another line of work which intends to minimize waiting times at the charging stations. For instance, in~\cite{Gusrialdi-et-al-14}  the authors propose a distributed scheduling algorithm that uses local information of traffic flows measured at the neighbouring charging stations to uniformly utilize charging resources along the highway and minimize the total waiting time. In our work, we consider minimizing both charging and waiting costs simultaneously. More precisely we look at two variations of waiting costs i.e., fixed waiting cost and quadratic waiting costs.
In the context of traffic routing and scheduling, the authors in~\cite{b1} consider a scenario where agents compete for a common link to ship their demands to a destination. They
obtain the optimal and equilibrium flows in the presence of polynomial congestion cost.

In~\cite{b10}, we consider routing on a ring network in the presence of quadratic congestion costs and also linear delay costs when traffic is redirected through the adjacent nodes.  However, the problems in~\cite{b10} are one-shot optimization problems as these do not have a temporal component.

Scheduling for minimizing energy costs has also been considered
in the context of CPU power consumption~\cite{b2}, big data processing~\cite{b3},
production scheduling in plants~\cite{b7}. In~\cite{b26}, the authors propose an optimal online algorithm for job arrivals with deadline uncertainty. In this work, they consider convex processing cost. They also derive competitive ratio for the proposed algorithm. None of these studies accounts for waiting costs of jobs as considered in our work.

In an earlier work~\cite{b28}, we studied service scheduling for Bernoulli job arrivals, quadratic service costs and linear waiting costs. We obtained a piece-wise linear optimal policy. We also studied Nash equilibrium in this setting. In~\cite{b27}, we extend the above study to a scenario where job sizes can take distinct values, and job arrivals constitute a Markov chain. In both these works we discuss linear waiting costs.

\color{black}
The authors in~\cite{ref2018} consider a single server slotted system with impatient customers. Impatience of customers can be seen as their having stochastic deadlines. The authors assume that the customers have geometric sojourn times but fixed one-slot service time. They consider three costs, a fixed customer holding cost per slot, a fixed cost of losing a customer, a fixed service cost, for each customer. At the beginning of each slot, if the queue is nonempty, the server has to decide whether to serve a customer. The simple service discipline and cost structure allow the authors to derive a simple rule.  The authors in~\cite{ref2012} generalized the above model by considering exponential service times and $\gamma$-Cox distributed customer sojourn times. They consider two customer classes with different arrival rates and different linear customer waiting costs but no other costs. On each service completion, if customers of both the classes are waiting, the server has to decide which customer class to choose for service.  However, the authors have only performed numerical value iteration and have obtained regions in which the first or the second customer classes are chosen. None of these works consider the case of rational customers.
\color{black}

Linear systems with quadratic cost have been widely studied in control theory. For instance,
in infinite horizon unconstrained linear quadratic control, the optimal policy is found to be
linear in system state and is given by the {\it Riccati equation}~\cite{b20}.
%The authors in~\cite{b22} study constrained linear quadratic control, and show that
%the optimal policy is piece-wise linear and the value function is piece-wise quadratic.
We have at our disposal control problems with state-dependent constraints. Moreover, in case of fixed waiting costs the problems do not conform to standard assumptions, e.g., positive definiteness of the control weighing matrix. In~\cite{eitan-shimkin-98}, the authors obtain a Nash equilibrium for a stochastic game where each arriving customer observes the current load and has to choose between a shared system whose service rate decreases with the number of customers or a constant service rate system. The optimal choice for each customer depends on the decisions of previous ones and the subsequent ones, through their effect on the current and future load in the shared server.
%We propose an optimal policy for scheduling problems where the goal is to minimize quadratic processing cost/energy along with penalizing delay. Our policy also meets the hard deadlines set by the users.
%\vspace{-0.15in}
%
\subsection{Our Contribution}
%\subsubsection*{Fixed Waiting Costs}
\begin{enumerate}
\item We study service scheduling for Bernoulli job arrivals, quadratic
service costs and service delay guarantee of two slots. For this problem, we provide the optimal scheduling policy. 
\item We then consider a scenario where each service request comes from a rational agent
who is interested in optimizing his/her own service. For this problem we obtain a symmetric Nash equilibrium of the associated stochastic game.
\end{enumerate}

\subsection{Applications and Motivation}
\label{subsec:applications}
We now illustrate how our framework can be used to model job or service scheduling problems in a variety of networks or resource sharing systems. We present an overview of applications in Table~\ref{table:applications}.
\begin{table}[htb]
%\footnotesize

%\big
\caption{Applications}
\label{table:applications}
    \centering
%\resizebox{\columnwidth}{!}{%
%\begin{tabular}{ |p{3cm}||p{4cm}|p{4cm}|  }
 \begin{tabular}{|c|c|c|}
 %\hline
%\multicolumn{3}{|c|}{List of Contribution} \\
 \hline
 Applications     & Service request & Service Cost \\
 \hline
 CPU speed scaling    & CPU cycles needed to   & Consumed energy\\
 &execute the arriving job   & \\
  \hline
  EV charging &   Energy demanded by  & Consumed energy\\
 &  the arriving vehicle  & \\
 \hline
 Job scheduling &   VM resources needed to & Consumed energy \\
in Data Centres & execute the arriving job  & \\
 %Optimal scheduling &--& Exact policy,\\
 %(General arrivals) & &(Section~VI) \\
 \hline
\end{tabular}
%}
\end{table}
\begin{enumerate}[a.]
\remove{
\item We can think of this framework as modeling traffic scheduling of transport companies where they share a network of roads~\cite{b1}. All the vehicles start at a common entry point, and are also destined to a common point. Here the agents are transport companies and their demands are amount of traffic. Further, the service cost represents congestion cost~(or, the resulting latency) on the road and there is also a waiting cost.
    %We assume that the road has large enough capacity to render the latency along it independent of the load.
    The congestion cost is linear in the traffic on the road. Note that we allow traffic to be distributed over two slots.
}
\item \textit{CPU speed scaling}: Here, the agents are jobs and  the service requests are number of CPU cycles. Further, the service cost is CPU power, which is a convex function of processor speed and there is also penalty for delaying service to a job. In \cite{b2}, the authors consider a version without delay penalties and propose off-line and on-line algorithms for minimum-energy schedule.

%\item \textit{EV charging}: Here, the agents represent vehicles. The per unit charging cost in a slot depends on the charge drawn in that slot. Each vehicle can wait up to two slots to be charged, however, the charging station also pays a cost for keeping the vehicles waiting~(say, in the form of subsidy to these vehicles). Authors in~\cite{CS-minimal-waiting} emphasizes the fact that EV users might face driver discomfort and may feel degrade in travel efficiency due to longer waiting times and longer charging period. The authors in~\cite{b18} also state that users' satisfaction depends on
% how quickly charging is completed. The charging station would like to minimize the combination of charging and waiting costs.  In a scenario where each vehicle owner would like to minimize the sum of its charging and waiting costs, we arrive at a non-cooperative game among the owners.
\item \textit{Job scheduling in Data Centres}: Here, the agents are jobs and the service requests are Virtual Machine(VM) resources that are specified in terms of CPU power, storage etc.~\cite{b3},~\cite{Data-center-routing},~\cite{Edge-Computing-System}. Jobs need to be served by fixed deadlines to meet their service level agreements~(SLAs). Therefore, the service cost is CPU power. Moreover, the jobs need fixed storage for their entire execution times in the system. Our fixed waiting costs can be used to account for the additional storage costs beyond the slots in which the jobs arrive. Note that our formulation assumes that the jobs can wait for at most one extra slot and they all have same CPU and storage requirements.
\end{enumerate}
\subsubsection{Motivation for Different Performance Criteria} 
In many cases, network (or, resource) managers schedule service requests to optimize time-average service and waiting costs while respecting their deadlines. For instance, in the examples of job scheduling in CPUs or in data centers, service schedulers may want to optimize average power and storage costs. These objectives are captured by the proposed optimal scheduling problem.   

On the other hand, in some contexts the strategic agents who bring service requests to the system dictate their service schedules. Their scheduling decisions are aimed at minimizing their respective service and waiting costs. Such scenarios can naturally be modeled using non-cooperative stochastic games. \remove{For instance, in the transportation network example if each transport company is interested in minimizing scheduling cost of only its own vehicles, a stochastic game emerges~\cite{b1}. }For instance, if the EV owners in the EV charging example strive to minimizing their respective charging and waiting costs a stochastic game emerges.

%\subsubsection{Motivation for Different Waiting Cost Structures}
\subsubsection{Motivation for Fixed Waiting Cost Structure}In several systems of interest, agents can enter the system or leave only at slot boundaries, e.g., from~\cite{LangTong_scheduling_commitment}, compute tasks derive utility only at slot boundaries. In such tasks that complete only at slot boundaries, the current operating job will be present in the system until its next slot boundary irrespective of the amount of pending service. Thus the waiting cost is fixed and does not depend on the amount of deferred service. Similarly, in data centers, the job in execution would hold a certain amount of fixed storage~\cite{Data-center-routing}. That storage is not released till the job exits the system. Thus we intend to capture the fixed storage costs in fixed waiting costs. In some other systems, service requests have soft deadlines; missing soft deadlines is tolerable but not desirable. The
authors in ~\cite{Soft_deadline} propose the notion of \textit{tardiness} which is the difference between the service requests’ actual service completion times and their soft deadlines. In our formulation, each request has a soft deadline of one slot and a hard deadline of two slots. The fixed waiting cost models the tardiness of a service request that is not completely served in its first slot. These scenarios motivate fixed waiting costs proposed in Section~\ref{sec:system-model-fixed-cost}. 

\remove{
In several systems, agents are admitted at slot boundaries, but they can
leave soon as their services are complete, e.g., consider EVs at EV Charging stations. Then the waiting period of an agent can depend on the amount of the deferred service. It is reasonable to consider waiting costs that depend on the amount of the deferred service in such cases. In~\cite{EV_waiting_cost}, the authors introduce a non-decreasing convex penalty on EVs' average waiting time. We have considered linear waiting costs in our earlier work~\cite{b28}. The authors in~\cite{quadratic_EV_dissatisfaction} consider systems where the service facility need not deliver complete service leading to dissatisfaction of agents which they capture using a dissatisfaction cost - a quadratic function of the unfinished amount of service. In a similar context, the authors in~\cite{LangTong_Restless_MAB} consider a convex dissatisfaction cost. Quadratic waiting costs capture users' higher sensitivity to incremental delays while still rendering the problems in the class of linear systems with quadratic costs. In Sections~\ref{sec:quad-waiting-cost} and ~\ref{sec:nash-equilibrium} we alter the system model of Section~\ref{sec:system-model-fixed-cost} to have quadratic waiting costs instead of fixed waiting costs and analyze the resulting scheduling problems.}
\begin{table}[htb]
%\footnotesize

%\big
\caption{Characteristics of Fixed waiting costs}
\label{table:Various-waiting-costs}
    \centering
%\resizebox{\columnwidth}{!}{%
%\begin{tabular}{ |p{3cm}||p{4cm}|p{4cm}|  }
 \begin{tabular}{|c|c|}
 %\hline
%\multicolumn{3}{|c|}{List of Contribution} \\
 \hline
    & Fixed waiting cost\\
 \hline
Best suited    & Departures happening only at   \\
 &slot boundaries   \\
  \hline
  Sensitivity towards &   Not sensitive (fixed cost for \\
 service deferred &  any positive deferred service)   \\
 \hline
Application &   Job scheduling  \\
 & in Data Centres  \\
 %Optimal scheduling &--& Exact policy,\\
 %(General arrivals) & &(Section~VI) \\
 \hline
\end{tabular}
%}
\end{table}
%%\vspace{-0.1in}
\color{black} 

\remove{
\subsubsection*{Quadratic Waiting Costs}
\begin{enumerate}
\item We also study optimal scheduling in the presence of quadratic waiting costs. Here we derive the optimal scheduling policy. %We also provide an algorithm that yields the optimal control for general service requirements.
\item We obtain a symmetric Nash equilibrium for the associated stochastic game.
\end{enumerate}
}
We also present a comparative numerical study to illustrate the impact of various waiting cost structures and performance criteria~(optimal scheduling vs strategic scheduling by selfish agents). 

Table~\ref{table:contributions} shows organization of our contribution.

\begin{table}[htb]
%\footnotesize
%\color{black}
%\big
\caption{Organization of contributions}
\label{table:contributions}
  \centering
%\resizebox{\columnwidth}{!}{%
%\begin{tabular}{ |p{3cm}||p{4cm}|p{4cm}|  }
 \begin{tabular}{|c||c|c|}
 %\hline
%\multicolumn{3}{|c|}{List of Contribution} \\
 \hline
 Versions     & Fixed Waiting Cost \\
 \hline
 Optimal scheduling    & Special cases: Section ~\ref{subsec:special_cases}  \\
 & General case: Section~\ref{sec:fixed-waiting-cost-total-optimal}    \\
 
  \hline
 Nash equilibrium &   Special cases: Section~\ref{subsec:nash-special-case}

   \\
  &  General case: Section~\ref{sec:fixed-waiting-cost-total-nash} \\
 \hline
 %Optimal scheduling &--& Exact policy,\\
 %(General arrivals) & &(Section~VI) \\
 %\hline
\end{tabular}
%}
\end{table}
\section{System Model}
\label{sec:system-model}
We consider a time-slotted system where time is divided into discrete slots. \color{black}The length of the slot depends on the application, e.g., in the case of CPU speed scaling the slots are of the order of $ms$ where in the case of job scheduling the slots many of the order of several tens of minutes. \color{black} Agents arrive over slots to a service facility. Every agent is characterized by its arrival time, deadline, and the amount of service it requires. Each service request has to be wholly served before its deadline. So service can be scheduled such that portions of the agents’ required service are served in the future slots before their respective deadlines. Serving requests incur a cost, with the cost per unit service in a slot depending on the quantum of service delivered in that slot. \color{black}Though the service facility has enough capacity to serve all the agents in the system, some of the service may be deferred to save on the service cost. \color{black}We consider two scheduling problems: one where the service facility makes scheduling decisions to optimize the overall time-average cost and the other where the agents make scheduling decisions for their respective service requests to minimize their costs. Below we present the system model and both the problems formally.
\subsection{Service request model} Agents with service requests arrive according to an i.i.d. Bernoulli$(p)$ process; $~ p \in (0,1)$. We assume that all the agents require equal amount of service, denoted as $\psi$. Further, each request can be met in at most two slots, i.e., a fraction
the service request arriving in a slot could be deferred to the next slot. \color{black}As every agent leaves at the end of two slots, in any slot there can be a maximum of two agents. Hence the system remains stable. It is assumed that the service facility can serve up to $2\psi$ units in a slot. \color{black} %The words agents and service requests are used interchangeably in the rest of the paper.
%\color{black}
%\vspace{-0.2in}
\subsection{Cost model} \label{sec:system-model-fixed-cost}The cost consists of two components:
\begin{itemize}
\item {\it Service cost:} \color{black}The service cost per unit service in a slot is a linear function of the total service
offered in that slot. Thus the total service cost in a slot is square
of the total offered service in that slot. \color{black}For instance, in the context of EV charging, per unit electricity cost is modelled as a linear function of the load~\cite{b15},~\cite{He_Linear_perunit_electricity}.
\item {\it Waiting cost:} Each service incurs a fixed waiting cost $d >0$ when a portion of the service is deferred to the next slot. This waiting cost can be interpreted as the penalty for not serving the service request in the same slot in which it has arrived. We introduce the waiting cost to strike a balance between service cost and latency. The constant $d$ can be seen as relative weight of waiting cost vis-s-vis service cost for instance e.g., higher $d$ indicates that the users are more sensitive to latency. %We also study quadratic waiting costs which are described in appropriate sections later.
\end{itemize}
Let, for $k \geq 1$,  $x_k$ be the remaining demand from slot $k-1$ to
slot $k$; $x_1 = 0$. This demand must be met in slot $k$.
Also, for $k \geq 1$, let $v_k$ be the extra service offered in slot $k$ over $x_k$.
Clearly, $v_k \in [0,\psi]$ and is $0$ if there is no new request in slot $k$.
A {\em scheduling policy} $\overline{\pi} = (\pi_k,k \geq 1)$
is a sequence of functions $\pi_k:[0,\psi] \rightarrow [0,\psi]$
such that if there is a service request in slot $k$ then $\pi_k(x_k)$ gives the amount of service deferred
from slot $k$ to slot $k+1$.
In other words,
\begin{equation*}
x_{k+1} = \begin{cases}
\pi_k(x_k) = \psi - v_k, & \text{if a request arrives in slot $k$}, \\
0, & \text{otherwise.}
\end{cases}
\end{equation*}
We consider the following two scheduling problems.

\color{black}
\subsubsection{Optimal Scheduling}
\label{subsec:opt}
We aim to minimize the time-averaged cost of the service facility. Here, waiting cost is imposed by the service facility to reduce the latency of the individual service requests. More precisely, we want to determine the scheduling policy $\overline{\pi}$
that minimizes
\begin{equation}
\lim_{T \to \infty} \frac{1}{T}\sum_{k=1}^T\E[(x_k + v_k)^2 + d \mathbbm{1}_{\{v_{k}\in (0,\psi) \}}].
\label{eqn:average-cost}
\end{equation}
We obtain the optimal solution in Section ~\ref{sec:fixed-waiting-cost}.

%\begin{remark}
At first glance,  the optimization problem appears
to be a special case of the well-studied {\em constrained linear
quadratic control Markov decision problems}. In particular, if we define binary variables
$e_k, k\geq 1$, as
\begin{equation*}
e_k = \begin{cases}
\psi, & \text{if slot $k$ has a request}, \\
0, & \text{otherwise},
\end{cases}
\end{equation*}
then $(x_k,e_k)$ can be considered to be the system state in slot
$k$. The total service in slot $k$, $\bar{u}_k \in [x_k,x_k+e_k]$,
and $w_k = e_{k+1}$ can be considered the action and the noise
in slot $k$, respectively. Then state evolution happens as
$(x_{k+1},e_{k+1}) = (x_k + e_k - \bar{u}_k, w_k)$
and the single stage cost is $d \mathbbm{1}_{x_k + e_k - \bar{u}_k>0} + \bar{u}_k^2$. {\em We see that
the actions are subject to state dependent constraints
and the single stage costs are not expressible in the form
$(x_k,e_k)^TQ(x_k,e_k) + \bar{u}_k^2$ with $Q$ a
positive semidefinite matrix}. Thus the problem
does not conform to the standard framework.
%\end{remark}

%\color{black}
\remove{In the context of CPU speed scaling, the parameters introduced above could be mapped as follows.
\begin{table}[htb]
%\footnotesize
\color{black}
%\big
\caption{CPU speed scaling}
\label{table:contributions}
  \centering
%\resizebox{\columnwidth}{!}{%
%\begin{tabular}{ |p{3cm}||p{4cm}|p{4cm}|  }
 \begin{tabular}{|c|c|}
 %\hline
%\multicolumn{3}{|c|}{List of Contribution} \\
 \hline
 $x_k$   & Number of CPU cycles pending in slot $k$ from the job arrived in slot $k-1$   \\
 \hline
 $v_k$   & Number of CPU cycles offered in slot $k$ to the job arrived in slot $k$   \\
  \hline
 $e_k$   & Number of CPU cycles requested in slot $k$ by the job that arrived in slot $k$   \\
 \hline
 %Optimal scheduling &--& Exact policy,\\
 %(General arrivals) & &(Section~VI) \\
 %\hline
\end{tabular}
%}
\end{table}\color{black}}

\subsubsection{Equilibrium for Selfish Agents}
\color{black}Recall that, in our model each agent comes with a service request, all service requests being of the same size. \color{black}Here, we consider rational agents, each determining how much of its request should be deferred. Further, each agent is aiming at minimizing his/her own service and waiting costs.
We can frame this problem as a non-cooperative dynamic game among the agents. Here, the waiting cost is imposed by every individual agent in the system to minimize their respective waiting times. In this context, let us refer to $\pi_k$ as a strategy
of the agent who arrives in slot $k$~(if there is one) and $\overline{\pi} = (\pi_k,k \geq 1)$
as a strategy profile.\footnote{Notice that $\pi$ consists of a strategy
for each slot but there may not be any agent in a slot to use the corresponding strategy.} \color{black}If an agent $k$ sees the system state as $x$, then the agent chooses the action $\pi_k(x)$. Then the total demand served in that slot is~$x+\psi-\pi_k(x)$, which is per unit cost. Therefore, the total service cost levied on \textit{the agent} is $(\psi-\pi_k(x))(x+\psi-\pi_k(x))$.\color{black}
The expected cost of an agent who arrives in slot $k$, if it sees a remaining demand $x$,
is
\begin{align}
c_k(x,\overline{\pi}) = ~&(\psi-\pi_k(x))(\psi-\pi_k(x)+x) + \pi_k(x)(\pi_k(x)+ p (\psi-\pi_{k+1}(\pi_k(x))) )\nonumber \\
&+d \mathbbm{1}_{\pi_k(x)>0} .\label{eqn:selfish-agent-cost-expression}
\end{align}
A strategy profile $\overline{\pi}$ is called a {\it Nash
equilibrium} if
\[
c_k(x,\overline{\pi}) \leq c_k(x,(\mu,\overline{\pi}_{-k}))
\]
for all $k \geq 1$, $x \in [0,\psi]$ and strategies $\mu:[0,\psi] \to [0,\psi]$.
\footnote{$(\mu,\overline{\pi}_{-k}) \triangleq (\pi_1,\dots,\pi_{k-1},\mu,\pi_{k+1},\dots)$.}
We focus on symmetric Nash equilibria of the form $(\pi,\pi,\dots)$
and obtain one such equilibrium in Section~\ref{sec:nash-equilibrium-fixed}.

\color{black}In the context of Job scheduling in data centers, the parameters introduced above could be mapped as follows.
\begin{enumerate}
    \item $x_k$: CPU power pending in slot $k$ for the job arrived in slot $k-1$.
    \item $v_k$: CPU power offered in slot $k$ to the job arrived in slot $k$.
    \item $e_k$: Total CPU power requested in slot $k$ by the job that arrived in slot $k$.
\end{enumerate}
\color{black}
\remove{

\subsection{Applications:}
We now illustrate how this framework can be used to model a variety
of job or service scheduling problems.
\begin{enumerate}

\item We can think of this framework as modeling traffic scheduling of transport companies where they share a network of roads~\cite{b1}. All the vehicles start at a common entry point, and are also destined to a common point. Here the agents are transport companies and their demands are amount of traffic. Further, the service cost represents congestion cost~(or, the resulting latency) on the road and there is also a waiting cost.
    %We assume that the road has large enough capacity to render the latency along it independent of the load.
    The congestion cost is linear in the traffic on the road. If each transport company is interested in minimizing scheduling cost of only its own vehicles, we end up with a noncooperative game.
     %Each transport company can divide its traffic over 2-slots.
     In \cite{b1}, the authors consider indivisible traffic scenario and formulate this problem as a stochastic game.  Note that we allow traffic to be distributed over two slots.

\item We can use this framework to model minimum energy job scheduling in CPUs~\cite{b2}. Here the agents are jobs and  the service requests are number of CPU cycles. Further, the service cost is CPU power, which is as a convex function of processor speed and there is also penalty for delaying service to a job. In \cite{b2}, the authors consider a version without delay penalties and propose off-line and on-line algorithms for minimum-energy schedule.

\item We can also use this framework to model scheduling charging of electric vehicles at a charging station~\cite{b18}. Here, the agents represent vehicles. The per unit charging cost in a slot depends on the charge drawn in that slot. Each vehicle can wait up to two slots to be charged, however, the charging station also pays a cost for keeping the vehicles waiting~(say, in the form of subsidy to these vehicles). The charging station would like to minimize the combination of charging and waiting costs.  In a scenario where each vehicle owner would like to minimize the sum of its charging and waiting costs, we arrive at a non-cooperative game among the owners.
\item This framework can also be used to model peak shaving in power grids. Here the agents are appliances and service requests are electricity requirements. Electricity generation cost is often modelled as a quadratic function of the instantaneous load~\cite{b16}. We can use our formulation to schedule time-shiftable appliances.
\end{enumerate}
}

\section{OPTIMAL SCHEDULING}
\label{sec:fixed-waiting-cost}
We first show that the optimal scheduling problem can
be transformed into a stochastic shortest path problem.
Let $A_i, i \geq 1$ be the successive slots that
have service requests but do not have service
requests in the preceding slots. More precisely,
\begin{equation*}
A_i = \begin{cases}
\min \{k: \text{slot $k$ has a request}\}, & \text{ if } i = 1, \\
 \min \left\{k > A_{i-1}: \text{slot $k$ has a request but}\right.\\
\left. \hspace{0.3in}\text{$k-1$ does not}\right\}, & \text{ if } i \geq 2.\\
\end{cases}
\end{equation*}
Then $A_i, i \geq 1$ can be seen to be {\it renewal instants}
of a delayed renewal process. \color{black}The following lemma gives the mean of renewal lifetimes, $A_{i+1} - A_i, i \geq 1$. \color{black}
\begin{lemma}
\label{lemma:expectation-Ai}
 $\E(A_{i+1}-A_i)=\frac{1}{p(1-p)}.$
\end{lemma}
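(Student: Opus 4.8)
The plan is to split a renewal cycle $A_{i+1}-A_i$ into two geometrically distributed pieces and finish with linearity of expectation. Recall that, by the definition of the renewal instants, slot $A_i$ carries a request while slot $A_i-1$ does not (or does not exist, when $i=1$), and the occupancies of slots $A_i+1, A_i+2,\dots$ are i.i.d.\ Bernoulli$(p)$ and independent of the history up to slot $A_i$. This is what makes the cycle starting at $A_i$ have the same law for every $i\ge 1$, so it suffices to analyse one generic cycle.

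First I would isolate the maximal run of request-slots beginning at $A_i$: let $G\ge 1$ be such that slots $A_i,\dots,A_i+G-1$ all carry requests and slot $A_i+G$ does not. Since each new slot independently carries a request with probability $p$, $G$ is geometric with $\Pr(G=j)=p^{j-1}(1-p)$, hence $\E G=\tfrac{1}{1-p}$. Next I would isolate the run of empty slots that follows: let $H\ge 1$ be such that slots $A_i+G,\dots,A_i+G+H-1$ are empty and slot $A_i+G+H$ carries a request; then $\Pr(H=j)=(1-p)^{j-1}p$ and $\E H=\tfrac1p$. Now slot $A_i+G+H$ carries a request whose predecessor (slot $A_i+G+H-1$) is empty, so it is a renewal instant, and it is the \emph{first} one after $A_i$ because every intermediate slot is either a request-slot preceded by a request-slot or an empty slot. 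Therefore $A_{i+1}-A_i=G+H$, and by linearity of expectation $\E(A_{i+1}-A_i)=\E G+\E H=\tfrac{1}{1-p}+\tfrac1p=\tfrac{1}{p(1-p)}$.

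There is no real obstacle here; the only points needing a line of justification are that $A_i+G+H$ is indeed the first renewal after $A_i$ (so the two runs together exhaust the cycle and nothing in between qualifies), and that the decomposition applies uniformly for all $i\ge 1$ — which it does, since the only ``delay'' in the process is in locating $A_1$, after which every cycle has the shape (run of requests) $\to$ (run of empties) $\to$ (request). As an alternative one could write a one-step recursion over a two-state summary of the previous slot's occupancy and solve the resulting linear system, but the direct two-run decomposition above is shorter and more transparent.
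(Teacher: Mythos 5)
Your proof is correct, and it takes a genuinely different route from the paper's. The paper first derives the full distribution of the cycle length, $P(A_{i+1}-A_i=j)=\sum_{k=1}^{j-1}p^{k}(1-p)^{j-k}$ (the sum over the position $k$ at which the initial run of request-slots ends), and then evaluates $\sum_j j\,P(A_{i+1}-A_i=j)$ by manipulating geometric series; the resulting algebra passes through a factor $\frac{1}{1-2p}$, so it formally degenerates at $p=\tfrac12$ and only gives the answer there by continuity. You instead decompose the cycle as $A_{i+1}-A_i=G+H$, where $G$ is the length of the maximal run of request-slots starting at $A_i$ and $H$ is the length of the following run of empty slots, observe that $G$ and $H$ are geometric with means $\frac{1}{1-p}$ and $\frac1p$, and conclude by linearity of expectation. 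Your key structural observation --- that every intermediate slot is either a request-slot preceded by a request-slot or an empty slot, so $A_i+G+H$ really is the next renewal --- is exactly the same combinatorial fact the paper encodes in its expression for $P(A_{i+1}-A_i=j)$, but you exploit it at the level of the decomposition rather than the distribution. Your argument is shorter, avoids summing the double series, works uniformly in $p\in(0,1)$ without the $p=\tfrac12$ caveat, and as a bonus yields the distribution of the two runs separately; the paper's approach yields the explicit pmf of the cycle length, which is more information than the lemma needs.
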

\begin{proof}
%Proof is exactly same as the proof in~\cite[Section B in supplementary material]{our-journal}.
See~\ref{Appendix:expectation-Ai}.
\end{proof}
Hence, from the {\it Renewal Reward Theorem}~\cite{renewal-reward-theorem},
\begin{align*}
\lefteqn{\lim_{T \to \infty} \frac{1}{T}\sum_{k=1}^T\E[(x_k + v_k)^2 + d \mathbbm{1}_{\{v_{k}\in (0,\psi) \}}]}\\
  &= \frac{\E\left[\sum_{k = A_i}^{A_{i+1}-1}\left((x_k + v_k)^2 + d \mathbbm{1}_{\{v_{k}\in (0,\psi) \}}\right)\right]}{\E[A_{i+1} - A_i]} \\
 &= p(1-p)\E\left[\sum_{k = A_i}^{A_{i+1}-1}\left((x_k + v_k)^2 + d \mathbbm{1}_{\{v_{k}\in (0,\psi)\}}\right)\right].
\end{align*}
 So, we can focus on minimizing the aggregate
 cost over a ``renewal lifetime'' $A_{i+1} - A_i$.
But we do not incur any cost after service completion
of the last customer in this lifetime. We can thus
frame the problem as {\it stochastic shortest path problem} where terminal state corresponds to
absence of request in a slot.
\remove{
\begin{remark}
Average cost optimality problems have equivalent stochastic shortest formulations and we often solve the latter ones to get a solution to the former ones~\cite{b21}. We obtain a simpler connection in the service scheduling problem as renewal cycle length does not depend on policy.
\end{remark}
}
\paragraph*{Stochastic shortest path formulation} We let $x_k$ be the system state
at any slot $k$ and $t$ be a special {\it terminal state}
which is hit if there is no new request in a slot.
Let $x_{k+1}$ also denote the action in slot $k$.
Clearly, the single stage cost before hitting
the terminal state is $(x_k + \psi-x_{k+1})^2 + d\mathbbm{1}_{\{x_{k+1} > 0\}} $.
Given the state-action pair in slot $k$, $(x_k,x_{k+1})$, the next state
is the terminal state with probability $1-p$ and the terminal
cost is $x_{k+1}^2$.

Let $J:[0, \psi] \rightarrow \R_+$ be the optimal cost function
for the problem. It is the solution of the following Bellman's equation: For all $x \in [0,\psi]$,
\begin{align*}
J(x) =  \min\left\{(\psi+x)^2 + p J(0), \right.&\min_{u \in (0, \psi]} \{(\psi-u+x)^2 
+ d + pJ(u) +(1-p)u^2\}\}.
\end{align*}
Notice that the term under the inner minimization at $u = 0$ exceeds the first term by $d$. Hence we can change the constraint on $u$ in the inner minimization to $[0,\psi]$ without altering the solution $J(\cdot)$. In other words, $J(\cdot)$ is also the solution to the following equation:
\begin{align}
J(x) =  \min\left\{(\psi+x)^2 + p J(0),\right. &\min_{u \in [0, \psi]} \{(\psi-u+x)^2 
+ d + pJ(u) &\left. \left. +(1-p)u^2\right\}\right\}. \label{eqn:fixed-cost-bellman}
\end{align}
The optimal cost is attained by a stationary policy of the
form $(\pi^{\ast},\pi^{\ast},\dots)$ where $\pi^{\ast}(x)$ minimizes
the right hand side in the above equation for all $x$. For brevity, we use $\pi^{\ast}$
to refer to this policy. Let us define the "$k$-stage problem" as the one that allows at most $k+1$
 service requests. More precisely, here the system is {\it forced to enter}
 the terminal state after $k+1$ service requests if it has not already done so.
 Let $J_k(\cdot)$ be the optimal cost function
 of the $k$-stage problem. Clearly,
\remove{
Let $\pi^{\ast}$ be the optimal stationary policy for this problem.
Let us also define the "$k$-stage problem" as in Section~\ref{sec:Optimal-scheduling} and call the corresponding optimal cost function $J_k(\cdot)$. Then}
\begin{equation}
J_0(x)=\min\{(\psi+x)^2,\min_{u\in [0,\psi]}\{(\psi+x-u)^2+d+u^2\}\}
\label{eqn:j-prime-0}
\end{equation}
and
\begin{align}
J_k(x) = \min\{(\psi+x)^2 &+pJ_{k-1}(0), \min_{u\in [0,\psi]}\{(\psi+x-u)^2  d+pJ_{k-1}(u)+(1-p)u^2\}\}. \label{eqn:j-prime-k}
\end{align}
\remove{
Let $\pi_k(\cdot)$ be the optimal controls
of the $k$-stage problems~(i.e., optimal controls in~\eqref{eqn:j-prime-0}-\eqref{eqn:j-prime-k}).
In the following we argue that $\pi_k(\cdot)$s are piece-wise linear discontinuous functions that are hard to fully characterize. We thus cannot follow the approach of deriving $\pi^\ast(\cdot)$ via taking limit of $\pi_k(\cdot), k \geq 0$.
We, however, obtain the optimal policy when the parameters satisfy certain conditions. Subsequently, we propose an approximate policy that is an upper bound on the optimal policy and also equals the optimal policy for a certain range of parameters.}

\color{black}
Observe that $J_0(x) > x^2$ from~\eqref{eqn:j-prime-0}. The first and second terms in the right hand side of~\eqref{eqn:j-prime-k} are greater than the first and second terms, respectively, in the right hand side of~\eqref{eqn:j-prime-0}. So, $J_1(x) >J_0(x)$. Inductively, we can see that $J_k(x)>J_{k-1}(x),\forall~x$. So the sequence $J_k(\cdot)$s converge. We now outline the approach of determining the optimal policy. Let $\pi_k(\cdot)$ be the optimal controls  of the $k$-stage problems~(i.e., optimal controls in~\eqref{eqn:j-prime-0}-\eqref{eqn:j-prime-k}). In the following we argue that $\pi_k(\cdot)$s are piece-wise linear discontinuous functions that are hard to fully characterize. We thus cannot follow the approach of deriving $\pi^\ast(\cdot)$ via taking limit of $\pi_k(\cdot), k \geq 0$. We obtain optimal policy under certain conditions in Proposition~\ref{prop:exact-solution-2-regions}. We also propose an approximate policy $\bar{\pi}(x)$ which forms an upper bound on the optimal policy (see Proposition~\ref{lemma:optimal-approx-0}). We then show that when the parameters does not satisfy the above mentioned conditions $\bar{\pi}(0)=0$, implying $\pi^\ast(0)=0$(see Proposition~\ref{lemma:optimal-approx-0}). So in this region no service is deferred. This way we characterize the optimal policy for all the settings. The detailed analysis follows below.
\color{black}

Let us define $J_{-1}(x) \coloneqq x^2$. We can then unify~\eqref{eqn:j-prime-0} and~\eqref{eqn:j-prime-k}, i.e., we can use~\eqref{eqn:j-prime-k} to describe $J_k(\cdot), k \geq 0$. \color{black}We hypothesize that $J_k(\cdot)$s are quadratic functions and define, for all $k \geq 0$,\color{black}
\begin{equation}
\label{eqn:reference-fixed-cost}
pJ_{k-1}(u)+(1-p)u^2={a}_k u^2+{b}_k u+{c}_k.
\end{equation}
where ${a}_k,{b}_k$ and ${c}_k$ are defined at appropriate places. \color{black}Our hypothesis is clearly true for $k = 0$. In the following we see that it holds for all $k \geq 1$ as well. \color{black}
Also observe that for all $k \geq 0$,
\[
\pi_{k}(x) = \argmin_{u\in [0,\psi]}\{(\psi+x-u)^2 + d+pJ_{k-1}(u)+(1-p)u^2\}
\]
if the minimum value is less than $(\psi+x)^2 + pJ_{k-1}(0)$ and $\pi_{k}(x) = 0$ otherwise. Let us define 
\begin{equation}
\label{eqn:theta-def}
\theta(a,b) \coloneqq \sqrt{d(1+a)} + \frac{b}{2} - \psi \text{ for } a,b \geq 0.  
\end{equation}
We begin with the following observation which we will repeatedly use. \color{black}We use the following lemma later to show that the optimal policy does not defer any service up to certain value of pending service beyond which it defers strictly positive amount.\color{black} 
\begin{lemma}
\label{lma:control-fixed-delay-cost}
\color{black}Let $\pi(x)$ be defined as follows
\begin{equation*}
\pi(x) = 
\begin{cases}
\argmin_{u\in [0,\psi]}\{(\psi+x-u)^2 + d+ au^2 + bu +c\},\\
~~~~~\text{ if }  \min_{u\in [0,\psi]}\{(\psi+x-u)^2 + d+ au^2 + bu +c\} \le (\psi+x)^2+c \\
0,\text{ otherwise}.
\end{cases}
\end{equation*}\color{black}
If $a\psi+\frac{b}{2} \ge \min\{\psi,\theta(a,b)\}$, then
\[
\pi(x) = \begin{cases}
0,&\text{ if } x \le \theta(a,b) \\
\bigg[\frac{x+\psi-\frac{b}{2}}{1+a}\bigg]^\psi,&\text{ otherwise}
\end{cases}
\]
else,
\[
\pi(x) = \begin{cases}
0,&\text{ if } x \le \frac{(a-1)\psi+b}{2}+\frac{d}{2\psi} \\
\psi,&\text{ otherwise}.
\end{cases}
\]
\remove{
If $\frac{2\psi-\frac{b}{2}}{1+a} \le \psi$, then,
\[
\pi(x) = \begin{cases}
0,&\text{ if } x \le x'' \\
\frac{x+\psi-\frac{b}{2}}{1+a},&\text{ otherwise}.
\end{cases}
\]
}
%where, $\tilde{x}=\frac{d+(a-1)\psi^2+b\psi}{2\psi}.$
\end{lemma}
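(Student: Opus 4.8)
The plan is to analyze the unconstrained minimizer of the strictly convex quadratic $g(u) = (\psi+x-u)^2 + d + au^2 + bu + c$ and then account for (i) the box constraint $u \in [0,\psi]$ and (ii) the comparison with the ``no-deferral'' value $(\psi+x)^2 + c$. First I would compute $g'(u) = -2(\psi+x-u) + 2au + b$, so the stationary point is $u^\star(x) = \frac{x+\psi-b/2}{1+a}$, which is increasing in $x$. Since $g$ is convex with leading coefficient $1+a > 0$ (using $a \ge 0$), the constrained minimizer over $[0,\psi]$ is the projection $[u^\star(x)]_0^\psi$; in the stated formula the lower truncation at $0$ is implicit because we will see $u^\star(x) \le 0$ exactly on the regime where the ``otherwise'' branch ($\pi(x)=0$) is selected.

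Next I would handle the threshold where deferring becomes worthwhile. Comparing the minimum of $g$ against $(\psi+x)^2 + c$: when the unconstrained optimum is interior, substitute $u = u^\star(x)$ and simplify; the difference $g(u^\star(x)) - (\psi+x)^2 - c$ is an affine (in fact the computation collapses nicely) expression in $x$ that vanishes precisely at $x = \theta(a,b) = \sqrt{d(1+a)} + b/2 - \psi$. This is the calculation where the definition of $\theta$ in~\eqref{eqn:theta-def} is reverse-engineered: one checks that $u^\star(\theta(a,b)) = \sqrt{d/(1+a)}$ and that at this point the two candidate values are equal, and for $x > \theta(a,b)$ deferral strictly wins. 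When instead the unconstrained optimum would exceed $\psi$, i.e. $u^\star(x) > \psi$ (equivalently $x > (1+a)\psi + b/2 - \psi$), the constrained optimum sits at the boundary $u=\psi$, and then the comparison $g(\psi) \le (\psi+x)^2 + c$ becomes $(a\psi^2 + b\psi + d) \le 2\psi x$, i.e. $x \ge \frac{(a-1)\psi + b}{2} + \frac{d}{2\psi}$ — this yields the second (bang-bang) case of the lemma.

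The role of the hypothesis $a\psi + b/2 \gtrless \min\{\psi, \theta(a,b)\}$ is to decide which of the two pictures occurs, and this is the step I expect to be the main obstacle — not because any single inequality is hard, but because one must carefully verify that the regions dovetail. In the first case ($a\psi + b/2 \ge \min\{\psi,\theta(a,b)\}$) I would argue that the threshold $\theta(a,b)$ for ``start deferring'' is reached before the boundary regime $u^\star(x)=\psi$ kicks in, so that $\pi$ transitions continuously $0 \to$ interior $\to$ capped at $\psi$, and the formula $[\frac{x+\psi-b/2}{1+a}]^\psi$ captures both the interior and the capped pieces; one also checks $u^\star(\theta(a,b)) \ge 0$ so there is no spurious lower truncation in the active region. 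In the complementary case the interior branch is never selected — the value function comparison always favors either $u=0$ or $u=\psi$ — so $\pi$ jumps directly from $0$ to $\psi$ at the crossover $x = \frac{(a-1)\psi+b}{2} + \frac{d}{2\psi}$; here I must confirm that this crossover point lies in a range where $u^\star > \psi$, using the failure of the hypothesis, so that $u=\psi$ really is the constrained optimum once deferral is chosen. Throughout, I would lean on strict convexity of $g$ to guarantee uniqueness of the minimizer, and on monotonicity of $u^\star(x)$ and of the value-difference in $x$ to get clean single-threshold structure; the only genuine bookkeeping is tracking the three competing quantities $0$, $\theta(a,b)$, $\psi$ and the boundary-activation level $(1+a)\psi + b/2 - \psi$ against each other under each hypothesis.
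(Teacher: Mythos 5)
Your proposal is correct and follows essentially the same route as the paper's proof: compute the unconstrained minimizer $u^\star(x)=\frac{x+\psi-b/2}{1+a}$ of the convex quadratic, project onto $[0,\psi]$, show the value comparison against $(\psi+x)^2+c$ switches at $x=\theta(a,b)$ when the minimizer is interior and at $x=\frac{(a-1)\psi+b}{2}+\frac{d}{2\psi}$ when it is capped at $\psi$, and use the hypothesis on $a\psi+\frac{b}{2}$ versus $\min\{\psi,\theta(a,b)\}$ to decide which regime governs (the paper organizes this as three cases on $\frac{b}{2}$ relative to $(1-a)\psi$ and $\psi$, but the content is identical). Only a cosmetic slip: your intermediate inequality for the capped comparison should read $a\psi^2+b\psi+d-\psi^2\le 2\psi x$, though the threshold you state afterwards is correct.
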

\begin{proof}
See~\ref{appendix:control-fixed-delay-cost}	
%See~\cite[Appendix~I-B]{b29}.
\end{proof}
\begin{remark}
    If $a\psi+\frac{b}{2} \ge \psi$, then $\frac{x+\psi-\frac{b}{2}}{1+a} \le \psi,\forall x \in [0,\psi]$. Therefore,
    \begin{equation}
    \label{eqn:control-fixed-delay-cost}
        \pi(x)=
        \begin{cases}
        0, &\text{ if } x \le \theta(a,b)\\
        \frac{x+\psi-\frac{b}{2}}{1+a}, &\text{otherwise}.
        \end{cases}
    \end{equation}
\end{remark}
Let us define
\begin{equation}
\label{eqn:bar-a-def}
\bar{a}_i = \begin{cases}
1, & \text{if } i = 0, \\
1-\frac{p}{1+\bar{a}_{i-1}}, & \text{otherwise,}
\end{cases}
\end{equation}
and
\begin{equation}
\label{eqn:bar-b-def}
\bar{b}_i = \begin{cases}
2p\psi, & \text{if } i = 0, \\
\frac{p(2\bar{a}_{i-1}\psi+\bar{b}_{i-1})}{1+\bar{a}_{i-1}} & \text{otherwise.}
\end{cases}
\end{equation}
\color{black}We show that the sequences $\bar{a}_k,\bar{b}_k, k \geq 0$ have the following monotonicity properties.We use these properties in deriving the optimal policy (e.g., see the proof of Proposition~\ref{prop:exact-solution-2-regions}).\color{black} 
%We will use the below properties of sequences $\bar{a}_k,\bar{b}_k, k \geq 0$ in the following analysis.
\begin{lemma}
\label{lemma:bar-a-b-convergence}
$(a)$ $\bar{a}_k, k\geq 0$ is a decreasing sequence and converges to $\bar{a}_\infty := \sqrt{1-p}$.\\
	$(b)$ $\bar{b}_k, k\geq 0$ is a decreasing sequence and converges to \\
	$
	\bar{b}_{\infty} := \frac{2p\psi}{1+\sqrt{1-p}}.
	$
\label{lemma:monotonicity-ak-bk-bar}
\end{lemma}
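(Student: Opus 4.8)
The plan is to treat each recursion as the iteration of a scalar (resp.\ bivariate) map and run a standard monotone‑convergence argument. For part $(a)$, write $\bar a_k = f(\bar a_{k-1})$ with $f(a) = 1 - \frac{p}{1+a}$. On $(0,\infty)$ this $f$ is strictly increasing, since $f'(a) = \frac{p}{(1+a)^2} > 0$, and its fixed points satisfy $a(1+a) = 1+a-p$, i.e.\ $a^2 = 1-p$, whose only nonnegative root is $\sqrt{1-p}$. Because $p \in (0,1)$ we have $\bar a_0 = 1 > \sqrt{1-p}$ and $\bar a_1 = 1 - \frac p2 < \bar a_0$. Since $f$ is increasing and fixes $\sqrt{1-p}$, a straightforward induction propagates both $\bar a_{k} < \bar a_{k-1}$ and $\bar a_{k} > \sqrt{1-p}$; hence $\bar a_k$ is decreasing and bounded below, so it converges, and by continuity of $f$ the limit is a fixed point that is $\ge \sqrt{1-p}$, i.e.\ exactly $\sqrt{1-p}$.

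For part $(b)$, write $\bar b_k = g(\bar a_{k-1}, \bar b_{k-1})$ with $g(a,b) = \frac{p(2a\psi + b)}{1+a}$. One computes $\partial_b g = \frac{p}{1+a} > 0$ and $\partial_a g = \frac{p(2\psi - b)}{(1+a)^2}$, so $g$ is strictly increasing in $b$ always, and strictly increasing in $a$ whenever $b < 2\psi$. The crucial observation is that the iterates never leave the region $b < 2\psi$: since $\bar b_0 = 2p\psi < 2\psi$, it is enough to show $\bar b_k$ is decreasing, and I would prove this by induction run simultaneously with the bound $\bar b_k < 2\psi$. The base case is $\bar b_1 = p\psi(1+p) < 2p\psi = \bar b_0$. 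For the inductive step, given $\bar a_{k-1} < \bar a_{k-2}$ (from $(a)$), $\bar b_{k-1} < \bar b_{k-2}$ and $\bar b_{k-2} < 2\psi$, monotonicity of $g$ in each argument (legitimate because $\bar b_{k-2} < 2\psi$ and all $\bar a_j > 0$) gives $\bar b_k = g(\bar a_{k-1},\bar b_{k-1}) < g(\bar a_{k-2},\bar b_{k-2}) = \bar b_{k-1} < 2\psi$. Thus $\bar b_k$ decreases and is bounded below by $0$, hence converges; letting $k \to \infty$ in the recursion with $\bar a_{k-1} \to \sqrt{1-p}$ yields $\bar b_\infty\,(1 + \sqrt{1-p} - p) = 2p\sqrt{1-p}\,\psi$, and using $(\sqrt{1-p})^2 = 1-p$ to rewrite $1+\sqrt{1-p}-p = \sqrt{1-p}\,(1+\sqrt{1-p})$ collapses this to $\bar b_\infty = \frac{2p\psi}{1+\sqrt{1-p}}$.

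The only genuine subtlety, and the step I expect to require the most care, is the coupling in part $(b)$: the needed monotonicity of $g$ in its first argument holds only on $\{b < 2\psi\}$, but that region‑invariance is itself part of the conclusion one is proving. Packaging the two assertions (``$\bar b_k$ decreasing'' and ``$\bar b_k < 2\psi$'') into one induction, together with the already‑established behaviour of the $\bar a$‑coordinate from part $(a)$, resolves this cleanly, after which identifying the limit is a one‑line manipulation relying on $\bar a_\infty^2 = 1-p$.
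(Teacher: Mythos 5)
Your proof is correct. Part $(a)$ is essentially identical to the paper's argument (monotone increasing map, $\bar a_1<\bar a_0$, induction, limit is the nonnegative fixed point of $a^2=1-p$); you add the explicit lower bound $\sqrt{1-p}$, which the paper leaves implicit. Part $(b)$ reaches the same conclusion by a genuinely different induction. The paper never invokes the region $\{b<2\psi\}$: from the hypothesis $\bar b_i<\bar b_{i-1}$ it extracts the equivalent closed form $\bar b_{i-1}>\frac{2p\psi\,\bar a_{i-1}}{1+\bar a_{i-1}-p}$, substitutes it back into the recursion, and uses only the monotonicity of $a\mapsto\frac{2p\psi a}{1+a-p}$ together with $\bar a_i<\bar a_{i-1}$ to get $\bar b_{i+1}<\bar b_i$. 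You instead exploit joint monotonicity of $g(a,b)=\frac{p(2a\psi+b)}{1+a}$, which requires carrying the extra invariant $\bar b_k<2\psi$ through the induction because $\partial_a g>0$ only there; you correctly identify this as the delicate point and package it into the same induction, and the chain $g(\bar a_{k-1},\bar b_{k-1})\le g(\bar a_{k-1},\bar b_{k-2})\le g(\bar a_{k-2},\bar b_{k-2})$ does go through since $\bar b_{k-2}<2\psi$. Your approach is arguably more transparent and parallels the technique the paper itself uses elsewhere (the bound $b_k^\ast<2\psi$ in the quadratic-cost section), at the cost of one extra invariant; the paper's algebraic substitution is slicker but less obviously motivated. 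Your identification of the limit $\bar b_\infty=\frac{2p\psi}{1+\sqrt{1-p}}$ via $1+\sqrt{1-p}-p=\sqrt{1-p}\,(1+\sqrt{1-p})$ is also more explicit than the paper's appeal to the unique fixed point. No gaps.
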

\begin{proof}
%See~\cite[Appendix~I-C]{b29}.
See~\ref{appendix:bar-a-b-convergence}
\end{proof}
\color{black}
The following proposition shows that $\pi^\ast(\cdot)$ is in general a discontinuous piece-wise linear function with increasing slopes. We also know all the affine functions that constitute $\pi^\ast(\cdot)$, but do not know the jump epochs. %In other words optimal policy exhibits discontinuity.
\color{black}
\begin{prop}
\label{prop:pi-star}
The optimal policy $\pi^\ast(\cdot)$ of~\eqref{eqn:fixed-cost-bellman} is of the form%~(a few of the intervals $(\bar{x}_i,\bar{x}_{i+1}]$ can be empty sets)
\begin{equation}
\label{eqn:exact-policy-fixed-delay}
{\pi^*}(x)=
\begin{cases}
0, &\text{ if } 0 \leq x \leq \bar{x}_0  \\
\frac{x+\psi-\frac{\bar{b}_i}{2}}{1+\bar{a}_i}, &\text{  if } \bar{x}_i < x \leq \bar{x}_{i+1}, i\ge 0\\
\frac{x+\psi-\frac{\bar{b}_{\infty}}{2}}{1+\bar{a}_{\infty}}, &\text{  if } \bar{x}_{\infty} < x \leq \psi
\end{cases}
\end{equation}
where $\bar{x}_i,i\ge 0$ are functions of $\bar{a}_i,i\ge 0$ and $\bar{b}_i,i\ge 0$.
\end{prop}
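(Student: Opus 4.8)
The plan is to guess a piecewise‑quadratic candidate value function $\widetilde J:[0,\psi]\to\R_+$ together with the candidate policy in~\eqref{eqn:exact-policy-fixed-delay}, to verify that the pair solves Bellman's equation~\eqref{eqn:fixed-cost-bellman}, and to conclude by the verification theorem for stochastic shortest path problems. This route is clean here because the terminal state is entered with probability $1-p>0$ in every slot, regardless of the control applied; hence every stationary policy is proper, Bellman's equation has a unique bounded solution, and any pair $(\widetilde J,\widetilde\pi)$ that satisfies it must consist of the optimal cost and an optimal policy.

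I would build the candidate one affine piece of $\widetilde\pi$ at a time. On $[0,\bar x_0]$, with $\bar x_0=\theta(\bar a_0,\bar b_0)$, put $\widetilde\pi\equiv 0$ and $\widetilde J(x)=(\psi+x)^2+p\widetilde J(0)$; for this piece a substitution gives $p\widetilde J(u)+(1-p)u^2=\bar a_0 u^2+\bar b_0 u+c_0$, with $\bar a_0,\bar b_0$ exactly as in~\eqref{eqn:bar-a-def}--\eqref{eqn:bar-b-def}. For $i\ge 0$, on $(\bar x_i,\bar x_{i+1}]$ put $\widetilde\pi(x)=\frac{x+\psi-\bar b_i/2}{1+\bar a_i}$ and let $\widetilde J$ be the quadratic that Lemma~\ref{lma:control-fixed-delay-cost} produces for running coefficients $a=\bar a_i$, $b=\bar b_i$; the right endpoint $\bar x_{i+1}$ is pinned down by requiring that $\widetilde\pi$ restricted to $(\bar x_i,\bar x_{i+1}]$ not overshoot past $\bar x_i$, i.e.\ $\frac{\bar x_{i+1}+\psi-\bar b_i/2}{1+\bar a_i}=\bar x_i$, which expresses $\bar x_{i+1}$ explicitly through $\bar x_i,\bar a_i,\bar b_i$, hence via~\eqref{eqn:bar-a-def}--\eqref{eqn:bar-b-def} through $\bar a_0,\bar b_0$ alone. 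One iterates while $\bar x_{i+1}<\psi$; once the $\bar x_i$ reach $\psi$ or stabilise at $\bar x_\infty$, the construction is completed by the limiting piece $\widetilde\pi(x)=\frac{x+\psi-\bar b_\infty/2}{1+\bar a_\infty}$ on $(\bar x_\infty,\psi]$, which is a fixed point of the one‑step rule.

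The induction step is the algebraic heart. Assume $\widetilde J$ on $(\bar x_{i-1},\bar x_i]$ equals $\frac{\bar a_{i-1}}{1+\bar a_{i-1}}(\psi+u)^2+\frac{\bar b_{i-1}}{1+\bar a_{i-1}}(\psi+u)+\text{const}$ (the base $i=0$ being the $[0,\bar x_0]$ formula, which is this expression with the degenerate value ``$\bar a_{-1}=\infty$''). Feeding this into~\eqref{eqn:reference-fixed-cost} yields $p\widetilde J(u)+(1-p)u^2=\bar a_i u^2+\bar b_i u+c_i$ with exactly the $\bar a_i,\bar b_i$ of~\eqref{eqn:bar-a-def}--\eqref{eqn:bar-b-def}; Lemma~\ref{lma:control-fixed-delay-cost} applied with $a=\bar a_i$, $b=\bar b_i$ then gives that on $(\bar x_i,\bar x_{i+1}]$ the inner minimiser in~\eqref{eqn:fixed-cost-bellman} is $\frac{x+\psi-\bar b_i/2}{1+\bar a_i}$ and that the minimum value is precisely the quadratic assigned to $\widetilde J$ there, closing the recursion. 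Here Lemma~\ref{lemma:bar-a-b-convergence} is used twice: $\bar a_i\ge\sqrt{1-p}$ and $\bar b_i\ge\frac{2p\psi}{1+\sqrt{1-p}}$ combined with the identity $\sqrt{1-p}\,\psi+\frac{p\psi}{1+\sqrt{1-p}}=\psi$ give $\bar a_i\psi+\bar b_i/2\ge\psi$ for every $i$, so Lemma~\ref{lma:control-fixed-delay-cost} is always in its non‑saturated branch and $[\,\cdot\,]^\psi$ never binds on $[0,\psi)$; and $\bar a_i\downarrow\bar a_\infty$, $\bar b_i\downarrow\bar b_\infty$ make the slopes $\frac{1}{1+\bar a_i}$ increase and let the construction settle into the last piece, where one checks $\frac{2\psi-\bar b_\infty/2}{1+\bar a_\infty}=\psi$ so $\widetilde\pi$ stays feasible at $x=\psi$. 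It then remains to check the two comparisons the local construction does not settle: that $(\psi+x)^2+p\widetilde J(0)$ is the smaller of the two branches of~\eqref{eqn:fixed-cost-bellman} exactly on $[0,\bar x_0]$ (where $\theta(\bar a_0,\bar b_0)$ enters), and that on each $(\bar x_i,\bar x_{i+1}]$ the inner minimiser has no incentive to slide into an adjacent quadratic piece of $\widetilde J$, a one‑sided‑derivative check using continuity of $\widetilde J$ across breakpoints and the monotone slopes. Then $(\widetilde J,\widetilde\pi)$ solves~\eqref{eqn:fixed-cost-bellman}, so $\widetilde\pi=\pi^\ast$ has the stated form.

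The main obstacle is the breakpoint bookkeeping: one must run the simultaneous induction ``$\widetilde\pi$ on $(\bar x_i,\bar x_{i+1}]$ stays in $(\bar x_{i-1},\bar x_i]$'' in a self‑consistent way, check that the lower end of the image also lands correctly, and cope with the parameter dependence of the picture — small $d$ produces infinitely many pieces accumulating at some $\bar x_\infty<\psi$, moderate‑to‑large $d$ leaves only the first one or two pieces with $\bar x_\infty=\psi$ (the regime of Proposition~\ref{prop:exact-solution-2-regions}), and very small $d$ even makes $\bar x_0\le 0$, so the initial ``defer nothing'' region disappears. The coefficient algebra and the feasibility and branch comparisons are routine once the monotonicity of Lemma~\ref{lemma:bar-a-b-convergence} is in hand.
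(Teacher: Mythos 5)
Your overall strategy (build a piecewise-quadratic candidate value function whose pieces generate the coefficients $\bar a_i,\bar b_i$ via \eqref{eqn:reference-fixed-cost}, then verify Bellman's equation and invoke properness of all policies) is sound in spirit and close to what the paper does by value iteration. The verification framework is fine: termination occurs with probability $1-p>0$ per slot under every policy, so all stationary policies are proper and a solution of \eqref{eqn:fixed-cost-bellman} is the optimal cost.

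The gap is in your determination of the breakpoints. You pin $\bar x_{i+1}$ by the condition $\frac{\bar x_{i+1}+\psi-\bar b_i/2}{1+\bar a_i}=\bar x_i$, i.e.\ ``the policy on piece $i$ must not overshoot past $\bar x_i$.'' That is a necessary containment condition, but it is not what determines where the optimal policy jumps. Since $\widetilde J$ is piecewise quadratic, the continuation cost $p\widetilde J(u)+(1-p)u^2$ is a different quadratic $\bar a_j u^2+\bar b_j u+c_j$ on each interval $(\bar x_{j-1},\bar x_j]$, and the inner minimization in \eqref{eqn:fixed-cost-bellman} splits into constrained minimizations over these intervals. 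The jump of $\pi^\ast$ occurs at the state $x$ where the two branch minima are \emph{equal in value}; this is a root of a quadratic equation involving the constants $c_j$ as well as $\bar a_j,\bar b_j$ (the paper's comparison $A_0x^2+B_0x+C_0=A_1x^2+B_1x+C_1$), and it generically lies strictly inside the ambiguous interval $[x',x'']$ where one unconstrained minimizer is below the previous breakpoint and the other above. Your rule places $\bar x_{i+1}$ at the endpoint $x''$ of that interval. Consequently, for states between the true crossing point and your $\bar x_{i+1}$, the Bellman minimum over all of $[0,\psi]$ is attained on the next quadratic piece, your candidate $\widetilde\pi$ does not attain the minimum, and the pair $(\widetilde J,\widetilde\pi)$ fails to satisfy \eqref{eqn:fixed-cost-bellman}; the verification theorem then gives you nothing. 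A related, smaller omission is that the constants $c_i$ and $\widetilde J(0)$ must be fixed by a global self-consistency (fixed-point) condition that your construction never closes. Since the proposition only asserts the \emph{form} of $\pi^\ast$ with breakpoints left implicit, the fix is either to define the $\bar x_i$ by the value-crossing condition (and prove existence of such a root, as the paper does via the sign of the product of roots), or to follow the value-iteration route directly, tracking the piecewise-quadratic structure of $J_k$ and showing each $\pi_k$ has the claimed form.
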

\begin{proof}
%See~\cite[Appendix~I-D]{b29}.
See~\ref{appendix:pi-star}
\end{proof}
 We now provide intuition behind the form of the optimal policies as given by the above proposition. Recall that the  waiting cost $d$ is fixed irrespective of the amount of deferred service whereas the marginal service cost~\cite{roughgarden_2007} in a slot increases with the amount of service offered in the slot. Hence, for optimality, service is deferred only when the marginal service cost in the slot dominates the sum of $d$ and expected marginal service cost in the subsequent slot. Further, given that some service has to be deferred from a slot to the next slot, amount of deferred service is chosen to optimize the service costs in the two slots causing a jump in the optimal policies. Subsequent jumps in the optimal policies can also be attributed to similar phenomenon. Finally, owing to increasing marginal service costs, the optimal policies tend to defer services more aggressively at higher values of pending services. This is why slopes of the successive line segments in the optimal policies increase monotonically. 

We provide the exact optimal policies for a couple of special cases in Section~\ref{subsec:special_cases}. As we do not know the jump epochs in Proposition~\ref{prop:pi-star} we propose an approximate policy in Section~\ref{sec:fixed-waiting-cost-approx}. However, this approximate policy helps us characterize the optimal policy for all cases~(see Section~\ref{sec:fixed-waiting-cost-total-optimal}).

\color{black}\subsection{Optimal policy for Special Cases}\color{black}
\label{subsec:special_cases}
Let us notice that~\eqref{eqn:j-prime-k} for $k \ge 0$ constitute value iteration starting with $J_{-1}(x) = x^2$. We can instead perform value iteration starting with a different function. From~\cite[Chapter~2, Proposition~1.2(b)]{b21}, in any such iteration, $J_k(\cdot)$ will converge to the optimal cost function $J(\cdot)$ and $\pi_k(\cdot)$ will converge to $\pi^\ast(\cdot)$. \color{black}The following proposition shows that, starting with certain initial functions, limits of $\pi_k(\cdot)$ can be obtained in certain special cases.\color{black}

\begin{prop}
\label{prop:exact-solution-2-regions}
$(a)$~If $\psi < \frac{\sqrt{2d}}{(2-p)}$, $\pi^*(x)=0$ for all $x \in [0, \psi]$.\\
$(b)$~If $\psi > \frac{\sqrt{d(1+\bar{a}_{\infty})}}{\bar{a}_{\infty}}$,
\[
\pi^*(x)=\frac{x+\psi-\frac{\bar{b}_{\infty}}{2}}{1+\bar{a}_{\infty}}, \text{ for all }x \in [0, \psi].
\]	
\end{prop}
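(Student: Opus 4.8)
The plan is to exploit the observation made just before the statement: by \cite[Chapter~2, Proposition~1.2(b)]{b21}, value iteration~\eqref{eqn:j-prime-k} converges to $J(\cdot)$ and its minimizers converge to $\pi^\ast(\cdot)$ \emph{for any choice of the initializing function} $J_{-1}(\cdot)$. So I would pick the initializer so cleverly that each iterate's minimizer $\pi_k(\cdot)$ is \emph{constant in $k$} and already equal to the claimed policy; passing to the limit then finishes the proof. In both parts the right initializer turns out to be a quadratic whose "reference coefficients'' in~\eqref{eqn:reference-fixed-cost} are a fixed point of the recursions~\eqref{eqn:bar-a-def}--\eqref{eqn:bar-b-def}, so that Lemma~\ref{lma:control-fixed-delay-cost} keeps returning the same control at every stage.

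For part~$(a)$ I would initialize with $J_{-1}(x) = (\psi+x)^2 + \tfrac{p\psi^2}{1-p}$, i.e.\ the cost-to-go of the "never defer'' policy (it solves $V(x)=(\psi+x)^2+pV(0)$). A one-line computation gives $pJ_{-1}(u)+(1-p)u^2 = u^2 + 2p\psi\,u + c_0$, so in~\eqref{eqn:reference-fixed-cost} the coefficients are $a_0 = 1 = \bar{a}_0$ and $b_0 = 2p\psi = \bar{b}_0$. Since $a_0\psi + b_0/2 = (1+p)\psi \ge \psi$, the first branch of Lemma~\ref{lma:control-fixed-delay-cost} applies and $\pi_0(x)=0$ whenever $x \le \theta(1,2p\psi) = \sqrt{2d}-(1-p)\psi$. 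The hypothesis $\psi < \tfrac{\sqrt{2d}}{2-p}$ is exactly $\psi < \sqrt{2d}-(1-p)\psi = \theta(\bar{a}_0,\bar{b}_0)$, so $\pi_0(x)=0$ for all $x\in[0,\psi]$; hence the no-deferral branch of~\eqref{eqn:j-prime-k} is chosen and $J_0(x)=(\psi+x)^2+pJ_{-1}(0)=J_{-1}(x)$. The induction closes with $J_k\equiv J_{-1}$ and $\pi_k\equiv 0$ for all $k$, so $\pi^\ast\equiv 0$.

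For part~$(b)$ I would initialize with the quadratic $J_{-1}(x) = \tfrac{\bar{a}_\infty}{1+\bar{a}_\infty}x^2 + \tfrac{2\psi}{1+\bar{a}_\infty}x + \gamma$ (the constant $\gamma$ is irrelevant). Using $\bar{a}_\infty=\sqrt{1-p}$, equivalently the fixed-point identity $\bar{a}_\infty(1+\bar{a}_\infty)=1+\bar{a}_\infty-p$, one checks $pJ_{-1}(u)+(1-p)u^2 = \bar{a}_\infty u^2 + \bar{b}_\infty u + p\gamma$, so the reference coefficients are $a_0=\bar{a}_\infty$, $b_0=\bar{b}_\infty$. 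The same identity gives $\bar{a}_\infty\psi+\bar{b}_\infty/2 = \psi$, so by Lemma~\ref{lma:control-fixed-delay-cost} together with~\eqref{eqn:control-fixed-delay-cost} (the cap onto $[0,\psi]$ being inactive), $\pi_0(x)=0$ for $x\le\theta(\bar{a}_\infty,\bar{b}_\infty)$ and $\pi_0(x)=\tfrac{x+\psi-\bar{b}_\infty/2}{1+\bar{a}_\infty}$ otherwise. Now $\psi-\bar{b}_\infty/2 = \psi\,\tfrac{1+\bar{a}_\infty-p}{1+\bar{a}_\infty} = \psi\bar{a}_\infty$, hence $\theta(\bar{a}_\infty,\bar{b}_\infty)=\sqrt{d(1+\bar{a}_\infty)}-\psi\bar{a}_\infty$, and the hypothesis $\psi > \tfrac{\sqrt{d(1+\bar{a}_\infty)}}{\bar{a}_\infty}$ is exactly $\theta(\bar{a}_\infty,\bar{b}_\infty)<0$. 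Thus $\pi_0(x)=\tfrac{x+\psi-\bar{b}_\infty/2}{1+\bar{a}_\infty}$ for all $x\in[0,\psi]$. Substituting this minimizer into~\eqref{eqn:j-prime-k} and completing the square yields $J_0(x) = \tfrac{\bar{a}_\infty}{1+\bar{a}_\infty}(\psi+x)^2 + \tfrac{\bar{b}_\infty}{1+\bar{a}_\infty}(\psi+x) - \tfrac{\bar{b}_\infty^2}{4(1+\bar{a}_\infty)} + d + p\gamma$, and since $\bar{b}_\infty = 2\psi(1-\bar{a}_\infty)$ this again has the form $\tfrac{\bar{a}_\infty}{1+\bar{a}_\infty}x^2 + \tfrac{2\psi}{1+\bar{a}_\infty}x + \gamma_0$. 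So the induction closes with $\pi_k$ equal to the claimed affine map for every $k$ (and $\gamma_k = p\gamma_{k-1}+\mathrm{const}$, which converges), and letting $k\to\infty$ gives $(b)$.

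The step I expect to be the main obstacle is the algebra in part~$(b)$: checking that one application of value iteration to the guessed quadratic returns a quadratic with \emph{the same} $x^2$- and $x$-coefficients (only the constant changing), which is precisely what keeps Lemma~\ref{lma:control-fixed-delay-cost} producing the same affine control at every stage. This rests on $(\bar{a}_\infty,\bar{b}_\infty)$ being an exact fixed point of~\eqref{eqn:bar-a-def}--\eqref{eqn:bar-b-def} (Lemma~\ref{lemma:bar-a-b-convergence}) together with the elementary identities $\bar{a}_\infty\psi+\bar{b}_\infty/2=\psi$ and $\bar{b}_\infty = 2\psi(1-\bar{a}_\infty)$, and on recasting the two hypotheses as $\psi\le\theta(\bar{a}_0,\bar{b}_0)$ and $\theta(\bar{a}_\infty,\bar{b}_\infty)\le 0$ respectively. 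Everything else that is needed — convergence of value iteration and the boundary behaviour in Lemma~\ref{lma:control-fixed-delay-cost} — is already in place.
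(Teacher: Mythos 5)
Your proposal is correct and takes essentially the same route as the paper's own proof: run value iteration from an initial quadratic whose reference coefficients in~\eqref{eqn:reference-fixed-cost} are already at the relevant fixed point ($(\bar{a}_0,\bar{b}_0)$ for part~$(a)$, $(\bar{a}_\infty,\bar{b}_\infty)$ for part~$(b)$), invoke Lemma~\ref{lma:control-fixed-delay-cost} with the hypotheses rewritten as $\psi<\theta(\bar{a}_0,\bar{b}_0)$ and $\theta(\bar{a}_\infty,\bar{b}_\infty)<0$ respectively to see that every stage returns the claimed control, and pass to the limit using the initialization-independence of value iteration. The only cosmetic difference is that you also pin down the constant term so that $J_k$ is literally stationary, whereas the paper only observes that the $u^2$- and $u$-coefficients (which alone determine the policy) are unchanged from one iteration to the next.
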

\begin{proof}
	See~\ref{appendix:exact-solution-2-regions}.
%See Appendix A.
\end{proof}
\subsection{Approximate Policy}
\label{sec:fixed-waiting-cost-approx}
Let us consider a fictitious problem wherein an agent with demand $\psi$ arrives with probability $p$ and there is no arrival with probability $1-p$ but a fixed additional cost $d$ is incurred for each service request whether or not a portion of the request is deferred to the subsequent slot.
The optimal cost function for this fictitious problem, $J'(\cdot)$, is solution of the following Bellman's equation.
\[
J'(x)= \min_{u \in [0, \psi]}
\left\{(\psi-u+x)^2 + d+ pJ'(\psi) + (1-p)u^2 \right\}
\]
This fictitious problem can be seen as a special case of the linear waiting cost problem in~\cite[Section~III]{b28} with $d = 0$ but with a fixed additional cost $d$ per request.
Hence, following the analysis in~\cite[Appendix~C]{b28}~(also see~\cite[Section~3, Theorem~3.1(a)]{b28}), its optimal policy is
\[
\pi'(x)=\frac{x+\psi-\frac{\bar{b}_{\infty}}{2}}{1+\bar{a}_{\infty}},
\]
where $\bar{a}_\infty,\bar{b}_\infty$ are as in Lemma~\ref{lemma:bar-a-b-convergence}. Further, the optimal cost function satisfies
\[
pJ'(x) + (1-p)x^2 = \bar{a}_\infty x^2 + \bar{b}_{\infty} x + \bar{c}_{\infty},
\]
where $\bar{c}_{\infty}$ is a certain constant. Let us now define the following
cost function
\begin{align}
\tilde{J}(x)=\min\left\{(\psi+x)^2 \right.& +pJ'(0),\min_{u\in [0,\psi]}(\psi+x-u)^2 \nonumber \\
&\left. +d+pJ'(u)+(1-p)u^2\right\} \label{eqn:approx-cost}
\end{align}
Also note that
\begin{align}
\label{eqn:a-b-infinity-psi}
\bar{a}_\infty \psi + \frac{\bar{b}_{\infty}}{2}
&= \sqrt{1-p}\psi + \frac{p\psi}{1+\sqrt{1-p}}\nonumber\\
&= \psi.
\end{align}
\color{black}Hence, from~\eqref{eqn:control-fixed-delay-cost}, the optimal control of the cost function $\tilde{J}(x)$, say $\tilde{\pi}(\cdot)$, is given by\color{black}
\begin{equation}
\label{eqn:pi-tilde}
\tilde{\pi}(x)=
\begin{cases}
0, & \text{ if } x \le \theta(\bar{a}_\infty,\bar{b}_\infty)\\
\frac{x+\psi-\frac{\bar{b}_{\infty}}{2}}{1+\bar{a}_{\infty}}, &\text{ otherwise}.
\end{cases}
\end{equation}
We propose to use the following policy for our fixed waiting cost problem.
\begin{equation}
\label{eqn:approx-policy}
\bar{\pi}(x)=
\begin{cases}
0, & \text{ if } \psi < \frac{\sqrt{2d}}{(2-p)}, \\
\frac{x+\psi-\frac{\bar{b}_{\infty}}{2}}{1+\bar{a}_{\infty}}, & \text{ if }
\psi > \frac{\sqrt{d(1+\bar{a}_{\infty})}}{\bar{a}_{\infty}}, \\
\tilde{\pi}(x), & \text{ otherwise.}
\end{cases}
\end{equation}
We do not have any performance bound for the proposed policy. \color{black}However, we show below that, for any given backlog, we defer more under this policy than under the optimal policy.\color{black}
\begin{prop}
\label{lemma:optimal-approx-0}
$\bar{\pi}(x) \geq \pi^*(x)$ for all $x \in [0,\psi]$.
\end{prop}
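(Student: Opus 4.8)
The plan is to verify the inequality separately on the three parameter regimes appearing in the definition~\eqref{eqn:approx-policy} of $\bar\pi$. When $\psi<\frac{\sqrt{2d}}{2-p}$, Proposition~\ref{prop:exact-solution-2-regions}$(a)$ gives $\pi^*(x)=0$ for all $x$, while $\bar\pi(x)=0$ by definition, so the two policies coincide; when $\psi>\frac{\sqrt{d(1+\bar{a}_\infty)}}{\bar{a}_\infty}$, Proposition~\ref{prop:exact-solution-2-regions}$(b)$ gives $\pi^*(x)=\frac{x+\psi-\bar{b}_\infty/2}{1+\bar{a}_\infty}=\bar\pi(x)$, and again the policies coincide. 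Hence both extreme regimes are immediate, and the whole content lies in the intermediate regime $\frac{\sqrt{2d}}{2-p}\le\psi\le\frac{\sqrt{d(1+\bar{a}_\infty)}}{\bar{a}_\infty}$, in which $\bar\pi=\tilde\pi$ as given by~\eqref{eqn:pi-tilde}; there it remains to show $\pi^*(x)\le\tilde\pi(x)$ for every $x\in[0,\psi]$.

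In the intermediate regime I would first bound $\pi^*$ from above by a single affine function. By Proposition~\ref{prop:pi-star}, whenever $\pi^*(x)>0$ it equals $\frac{x+\psi-\bar{b}_i/2}{1+\bar{a}_i}$ for some $i\in\{0,1,\dots,\infty\}$. Since $\frac{\bar{b}_\infty}{2}=\psi-\bar{a}_\infty\psi\le\psi$ by~\eqref{eqn:a-b-infinity-psi}, the quantity $x+\psi-\frac{\bar{b}_\infty}{2}$ is nonnegative, so applying the monotonicity $\bar{a}_i\ge\bar{a}_\infty$ and $\bar{b}_i\ge\bar{b}_\infty$ from Lemma~\ref{lemma:bar-a-b-convergence} (to increase the numerator and then to decrease the denominator) yields $\frac{x+\psi-\bar{b}_i/2}{1+\bar{a}_i}\le\frac{x+\psi-\bar{b}_\infty/2}{1+\bar{a}_\infty}$; together with the trivial bound at the points where $\pi^*(x)=0$ this gives $\pi^*(x)\le\frac{x+\psi-\bar{b}_\infty/2}{1+\bar{a}_\infty}$ for all $x\in[0,\psi]$. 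Comparing with~\eqref{eqn:pi-tilde}, it now suffices to prove that $\pi^*$ defers nothing on $[0,\theta(\bar{a}_\infty,\bar{b}_\infty)]$, equivalently that the first breakpoint $\bar{x}_0$ of Proposition~\ref{prop:pi-star} satisfies $\bar{x}_0\ge\theta(\bar{a}_\infty,\bar{b}_\infty)$: indeed, for $x>\theta(\bar{a}_\infty,\bar{b}_\infty)$ we then get $\tilde\pi(x)=\frac{x+\psi-\bar{b}_\infty/2}{1+\bar{a}_\infty}\ge\pi^*(x)$, and for $x\le\theta(\bar{a}_\infty,\bar{b}_\infty)$ we get $\pi^*(x)=0\le\tilde\pi(x)$. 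The hypothesis $\psi\le\frac{\sqrt{d(1+\bar{a}_\infty)}}{\bar{a}_\infty}$ is precisely what makes $\theta(\bar{a}_\infty,\bar{b}_\infty)\ge 0$, so this last claim is not vacuous; in particular it delivers $\pi^*(0)=0$, the fact used afterwards.

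The inequality $\bar{x}_0\ge\theta(\bar{a}_\infty,\bar{b}_\infty)$ is the crux, and I expect it to be the main obstacle. The idea is to identify $\bar{x}_0$ from the analysis underlying Proposition~\ref{prop:pi-star}: just below $\bar{x}_0$ the optimal policy defers nothing, so on that part of the state space $J(\cdot)$ agrees with the ``serve everything'' quadratic, and hence the continuation term $pJ(u)+(1-p)u^2$ that governs the deferral decision near $\bar{x}_0$ is the quadratic with coefficients $(\bar{a}_0,\bar{b}_0)=(1,2p\psi)$ of~\eqref{eqn:bar-a-def}--\eqref{eqn:bar-b-def}. Because $\bar{a}_0\psi+\frac{\bar{b}_0}{2}=(1+p)\psi\ge\psi$, Lemma~\ref{lma:control-fixed-delay-cost} (in the form~\eqref{eqn:control-fixed-delay-cost}) places the deferral threshold at $\theta(\bar{a}_0,\bar{b}_0)$, and since $\theta(a,b)$ is increasing in both arguments by~\eqref{eqn:theta-def} and $(\bar{a}_0,\bar{b}_0)\ge(\bar{a}_\infty,\bar{b}_\infty)$ by Lemma~\ref{lemma:bar-a-b-convergence}, this gives $\bar{x}_0\ge\theta(\bar{a}_0,\bar{b}_0)\ge\theta(\bar{a}_\infty,\bar{b}_\infty)$. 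The delicate point, and the place where the regime bounds $\frac{\sqrt{2d}}{2-p}\le\psi\le\frac{\sqrt{d(1+\bar{a}_\infty)}}{\bar{a}_\infty}$ must be used, is verifying that the continuation genuinely is this quadratic over the whole range of deferral amounts relevant near $\bar{x}_0$ (equivalently, that $\pi^*$ does not jump directly into its own deferral region at $\bar{x}_0$); should that fail, the fallback is to lower-bound $pJ(u)+(1-p)u^2-pJ(0)$ using the quadratic identity $pJ'(u)+(1-p)u^2=\bar{a}_\infty u^2+\bar{b}_\infty u+\bar{c}_\infty$ recorded before~\eqref{eqn:pi-tilde} (which also gives $J\le J'$) together with the elementary bound $J(u)>u^2$, and thereby still force the advantage of deferring at backlog $x$ in~\eqref{eqn:fixed-cost-bellman} to be nonpositive for every $x\le\theta(\bar{a}_\infty,\bar{b}_\infty)$.
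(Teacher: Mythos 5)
Your reduction is sound and matches the paper's: on the two extreme parameter regimes $\bar\pi$ coincides with $\pi^\ast$ by Proposition~\ref{prop:exact-solution-2-regions}, and on the region where $\tilde\pi(x)>0$ the monotonicity $\bar{a}_k\ge\bar{a}_\infty$, $\bar{b}_k\ge\bar{b}_\infty$ from Lemma~\ref{lemma:bar-a-b-convergence} (together with $\psi-\bar{b}_i/2\ge 0$) does give $\frac{x+\psi-\bar{b}_i/2}{1+\bar{a}_i}\le\frac{x+\psi-\bar{b}_\infty/2}{1+\bar{a}_\infty}$, which is exactly the paper's Case~2. The problem is the crux you yourself flag: showing $\pi^\ast(x)=0$ for all $x\le\theta(\bar{a}_\infty,\bar{b}_\infty)$. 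Your primary argument assumes that near $\bar{x}_0$ the continuation $pJ(u)+(1-p)u^2$ is the single quadratic with coefficients $(\bar{a}_0,\bar{b}_0)$, but the minimization in the Bellman equation ranges over all $u\in[0,\psi]$ and $J$ is only piecewise quadratic, so the candidate minimizer may land in a piece of $J$ with different coefficients; you acknowledge this but do not resolve it. The fallback does not go through as sketched: $J\le J'$ bounds the continuation from \emph{above}, whereas you need a \emph{lower} bound on $pJ(u)+(1-p)u^2-pJ(0)$; and the bound $J(u)>u^2$ only yields $pJ(u)+(1-p)u^2>u^2$, which is strictly \emph{smaller} than $\bar{a}_\infty u^2+\bar{b}_\infty u$ for all $u\in(0,2\psi)$ (since $\bar{b}_\infty/(1-\bar{a}_\infty)=2\psi$), so it cannot force the deferral advantage to be nonpositive up to $\theta(\bar{a}_\infty,\bar{b}_\infty)$.

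The paper closes this gap by induction along the value iteration started from $J_0(x)=(x+\psi)^2$. At stage $k$ the continuation $pJ_{k-1}(u)+(1-p)u^2$ is piecewise quadratic with pieces $a_{kj}u^2+b_{kj}u+c_{kj}$; the minimization over $[0,\psi]$ is split over the subintervals on which each piece is active, each sub-minimization is \emph{relaxed} to a minimization over all of $[0,\psi]$ (defining fictitious cost functions such as $J_{21},J_{22}$), and each relaxed problem is shown to have deferral threshold at least $\theta(a_{kj},b_{kj})\ge\theta(\bar{a}_\infty,\bar{b}_\infty)$ — using $a_{kj}\ge\bar{a}_\infty$, $b_{kj}\ge\bar{b}_\infty$ and, for the pieces with a larger additive constant, the fact that $c_{kj}>c_{k1}$ only pushes the threshold further up. Since the true $\pi_k(x)$ is zero whenever all the relaxed controls are zero, this yields $\pi_k(x)=0$ for $x\le\theta(\bar{a}_\infty,\bar{b}_\infty)$ at every stage, hence for $\pi^\ast$. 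Some argument of this piecewise type (or an equivalently careful induction) is needed; as written, your proof of the key inequality $\bar{x}_0\ge\theta(\bar{a}_\infty,\bar{b}_\infty)$ is incomplete.
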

\begin{proof}
%See~\cite[Appendix~I-F]{b29}.
	See~\ref{appendix:optimal-approx-0}.
\end{proof}
\begin{remark}
	Note that $\bar{\pi}(x)=0$ implies $\pi^*(x)=0$, i.e., the proposed approximate policy and the optimal policy agree when $\bar{\pi}(x)=0$. 
\end{remark}
 
\subsection{Optimal Policy for the general case} 
\label{sec:fixed-waiting-cost-total-optimal}
\color{black}The following theorem completely characterizes the optimal policy.
\begin{theorem}
~\label{thm:opt}
The optimal actions are given as follows %$\pi^\ast(x)$ is given by 
\begin{enumerate}
\item if $\psi > \frac{\sqrt{d(1+\bar{a}_{\infty})}}{\bar{a}_{\infty}}$, then the optimal actions are taken in accordance with $\pi^\ast(x)$ as given by Proposition 3.2(b).
\item $\psi \le \frac{\sqrt{d(1+\bar{a}_{\infty})}}{\bar{a}_{\infty}}$, $\pi^\ast(0)=0$, and therefore none of the requests have their services deferred.%$\pi^\ast(x)=0$.
\end{enumerate}
\end{theorem}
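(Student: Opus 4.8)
The plan is to handle the two cases of the theorem separately. Case~1 is immediate: the hypothesis $\psi > \frac{\sqrt{d(1+\bar a_\infty)}}{\bar a_\infty}$ is exactly the hypothesis of Proposition~\ref{prop:exact-solution-2-regions}(b), so the asserted closed form of $\pi^\ast(\cdot)$ is just that proposition restated and nothing further is required.

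The substance is in Case~2, and the idea is to reduce it to the already-established dominance result. By Proposition~\ref{lemma:optimal-approx-0} we have $0 \le \pi^\ast(0) \le \bar\pi(0)$, so it suffices to prove $\bar\pi(0) = 0$. I would read this off the definition~\eqref{eqn:approx-policy}: since Case~2 assumes $\psi \le \frac{\sqrt{d(1+\bar a_\infty)}}{\bar a_\infty}$, its middle branch does not apply, so either $\bar\pi \equiv 0$ (first branch, and then we are done) or $\bar\pi = \tilde\pi$ (third branch). In the latter case,~\eqref{eqn:pi-tilde} gives $\tilde\pi(0) = 0$ precisely when $\theta(\bar a_\infty,\bar b_\infty) \ge 0$, so everything hinges on showing that the Case~2 hypothesis is equivalent to $\theta(\bar a_\infty,\bar b_\infty) \ge 0$.

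That equivalence is the one genuine computation. Using $\bar a_\infty\psi + \frac{\bar b_\infty}{2} = \psi$ from~\eqref{eqn:a-b-infinity-psi} to substitute $\frac{\bar b_\infty}{2} = \psi(1-\bar a_\infty)$ into the definition of $\theta$ in~\eqref{eqn:theta-def},
\[
\theta(\bar a_\infty,\bar b_\infty) = \sqrt{d(1+\bar a_\infty)} + \psi(1-\bar a_\infty) - \psi = \sqrt{d(1+\bar a_\infty)} - \bar a_\infty\psi,
\]
which is nonnegative iff $\bar a_\infty\psi \le \sqrt{d(1+\bar a_\infty)}$, i.e.\ (dividing by $\bar a_\infty = \sqrt{1-p} > 0$, cf.\ Lemma~\ref{lemma:bar-a-b-convergence}) iff $\psi \le \frac{\sqrt{d(1+\bar a_\infty)}}{\bar a_\infty}$ --- exactly the Case~2 hypothesis. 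Hence $\tilde\pi(0) = 0$, so $\bar\pi(0) = 0$ in every sub-case, and Proposition~\ref{lemma:optimal-approx-0} forces $\pi^\ast(0) = 0$.

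Finally, to obtain the conclusion that no request ever has its service deferred, I would run the state recursion under $\pi^\ast$: starting from $x_1 = 0$ and using $x_{k+1} \in \{0,\pi^\ast(x_k)\}$, an induction with $\pi^\ast(0)=0$ shows $x_k = 0$ for all $k$, so $\pi^\ast(\cdot)$ is only ever evaluated at $0$ and defers nothing. Essentially all the difficulty has already been absorbed into Proposition~\ref{lemma:optimal-approx-0}; what remains to be careful about is only the branch bookkeeping in~\eqref{eqn:approx-policy} and the elementary rearrangement of $\theta$, so I do not anticipate a real obstacle at this stage.
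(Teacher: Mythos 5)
Your proposal is correct and follows essentially the same route as the paper: both cases reduce Case~2 to showing $\bar\pi(0)=\tilde\pi(0)=0$ via the equivalence $\theta(\bar a_\infty,\bar b_\infty)\ge 0 \iff \psi \le \sqrt{d(1+\bar a_\infty)}/\bar a_\infty$ and then invoke Proposition~\ref{lemma:optimal-approx-0} to force $\pi^\ast(0)=0$. You merely spell out the computation of $\theta$ and the branch bookkeeping that the paper leaves as "easily checked," and make the final no-deferral conclusion explicit via the state recursion.
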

\begin{proof}
Following the definitions of $\bar{a}_{\infty}, \bar{b}_{\infty}$ and $\theta(\bar{a}_{\infty}, \bar{b}_{\infty})$ (see Lemma~\ref{lemma:bar-a-b-convergence} and \eqref{eqn:theta-def}) it can be easily checked that $\theta(\bar{a}_{\infty}, \bar{b}_{\infty})\ge 0$ if and only if $\psi \le \frac{\sqrt{d(1+\bar{a}_{\infty})}}{\bar{a}_{\infty}}$. Hence, if $\psi \le \frac{\sqrt{d(1+\bar{a}_{\infty})}}{\bar{a}_{\infty}}$, $\bar{\pi}(0) = \tilde{\pi}(0) = 0$ from~\eqref{eqn:pi-tilde} and~\eqref{eqn:approx-policy}, and so, $\pi^\ast(0) = 0$ from Proposition 3.3. Notice that when $\pi^\ast(0) = 0$ none of the requests have their services deferred under the optimal policy. 

We thus have complete characterization of the optimal scheduling in all the cases.
\remove{
\begin{enumerate}
\item if $\psi > \frac{\sqrt{d(1+\bar{a}_{\infty})}}{\bar{a}_{\infty}}$, the scheduling decisions are taken in accordance with Proposition 3.2(b).
\item $\psi \le \frac{\sqrt{d(1+\bar{a}_{\infty})}}{\bar{a}_{\infty}}$, none of the requests have their services deferred.
\end{enumerate}}
\end{proof}
\color{black}We illustrate the optimal and the approximate policies via a few examples in Figure~\ref{fig:image1}. We choose $\psi = 2, d= 1$ and  $p = 0.5, 0.7$ and $0.85$ for illustration. When $p = 0.5$, the parameters meet the hypothesis of Proposition~\ref{prop:exact-solution-2-regions}(b), and hence, the optimal policy is provided by the proposition. For $p = 0.7$ and $0.85$, the optimal policies have been computed by value iteration which involves discretization of the state and action spaces and hence is subject to quantization error. For both these cases the approximate policies are given by~\eqref{eqn:approx-policy}. When $p = 0.7$, $\bar{x}_i = \bar{x}_0 > \theta(\bar{a}_{\infty},\bar{b}_{\infty})$ for all $i \geq 1$~(see Proposition~\ref{prop:pi-star}), and hence, the optimal and the approximate policies coincide for $x \geq  \bar{x}_0$. For both, $p = 0.7$ and $0.85$, the optimal policies exhibit jumps and are piece-wise linear with the slopes of successive line segments increasing as claimed in Proposition~\ref{prop:pi-star}).  For both these cases the approximate policies upper bound the optimal policies as shown in Proposition~\ref{lemma:optimal-approx-0}. \textit{As expected, for the same pending service, the deferred service decreases as the expected quantum of service in the next slot increases, i.e., as  $p$ increases.}  
\color{black}

\begin{figure}[!ht]
	
	\centering
	
	\includegraphics[width=0.6\textwidth]{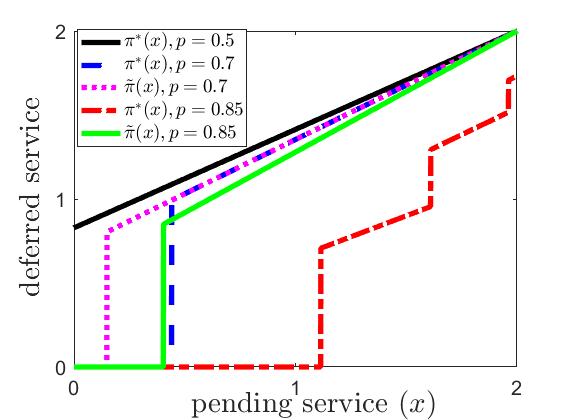}
	\caption{The optimal and the approximate policies for $\psi=2,d=1,p \in \{0.5,0.7,0.85\}$. For $p=0.5$ the optimal and the approximate policies are same.}
	\label{fig:image1}
\end{figure}
\remove{
\begin{figure}[!ht]
	
	\centering
	
	\includegraphics[width=0.29\textwidth]{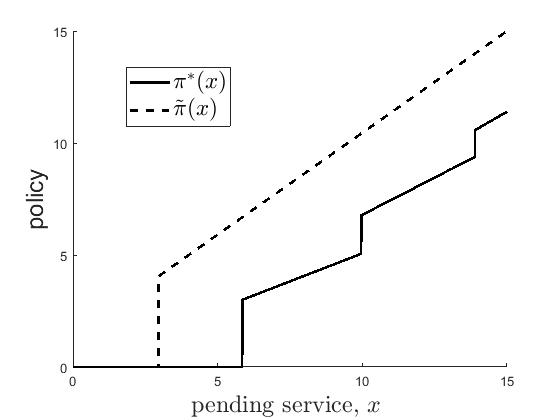}
	\caption{The optimal and the approximate policies for $p=0.99,\psi=15,d=18$.}
	\label{fig:twoslot2}
\end{figure}
\begin{figure}[!ht]
	
	\centering
	
	\includegraphics[width=0.29\textwidth]{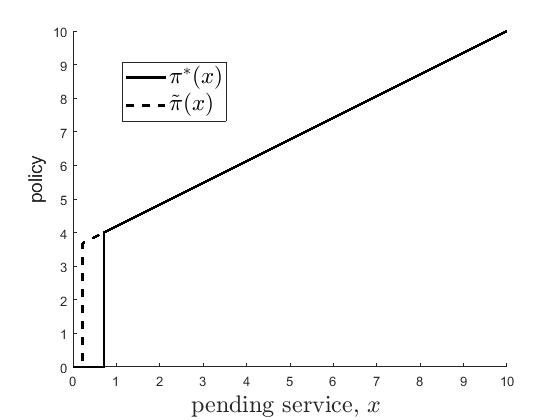}
	\caption{The optimal and the approximate policies for $p=0.7,\psi=10,d=21$.}
	\label{fig:twoslot3}
\end{figure}
}
\vspace{-0.17in}
\subsection*{More general models}We agree that the our model is quite simple and does not capture many attributes of real problems. However, evidently, analysis and optimization of this simple model also is very complex. Further, the optimal solution to this model can lead to heuristics for more general models. We briefly discuss here one such generalization allowing more general demand arrival processes. Assume that, in each slot, with probabilities $p_i$ demands $\psi_i$ arrive where $i = 1,2,\dots,N$, and with probability $1 - \sum_{i=1}^N p_i$ no demand arrives. We can formulate a fictitious problem with i.i.d. Bernoulli arrivals with arrival constant demand $\bar{\psi} = (\sum_{j=1}^Np_j\psi_j)/\bar{p}$ and demand arrival probability $\bar{p} = \sum_{i=1}^N p_i$. We can then use the optimal policy associated with this fictitious problem for our original problem. Such heuristics are proposed and analyzed in~\cite{our-journal} in the context of linear waiting costs. Let us define $\bar{\psi} \coloneqq (\sum_{j=1}^Np_j\psi_j)/\bar{p}$ and consider a fictitious problem wherein an agent with demand $\bar{\psi}$ arrives with probability $\bar{p}$ and there is no arrival with probability $1 - \bar{p}$. Let us further define the following
\[\bar{a}_\infty := \sqrt{1-\bar{p}}\] and \[\bar{b}_{\infty} := \frac{2\bar{p}\bar{\psi}}{1+\sqrt{1-\bar{p}}}.\]
Also, let $\tilde{\pi}(\psi_i,x)$ denote the suggested action for pending work of $x$ units and new arrival of $\psi_i$ units.
Now using this fictitious problem setup and Theorem~3.1 as follows

The optimal actions are given as follows %$\pi^\ast(x)$ is given by 
\begin{enumerate}
\item if $\bar{\psi} > \frac{\sqrt{d(1+\bar{a}_{\infty})}}{\bar{a}_{\infty}}$, then 
\[
\tilde{\pi}(\psi_i,x)=\bigg[\frac{x+\psi_i-\frac{\bar{b}_{\infty}}{2}}{1+\bar{a}_{\infty}}\bigg]^{\psi_i}, \text{ for all }x \in [0, \psi_i], \forall i \in \{1,2,\dots,N\}
\]
\item $\bar{\psi} \le \frac{\sqrt{d(1+\bar{a}_{\infty})}}{\bar{a}_{\infty}}$, $\tilde{\pi}(\psi_i,0)=0$, and therefore none of the requests has their services deferred.%$\pi^\ast(x)=0$.
\end{enumerate} 
\section{NASH EQUILIBRIUM}
\label{sec:nash-equilibrium-fixed}
In this section we provide a Nash equilibrium for the non-cooperative game among the selfish agents~(see Section~\ref{sec:system-model}). Specifically, we look at symmetric Nash equilibria where each agent's strategy is a piece-wise linear function of the remaining demand of the previous player.

%We omit the proofs for lack of space.
Let $C:[0,\psi]\to \mathbb{R}_+$ give the optimal cost for a player as a function of the pending demand given that all other players use strategy $\pi':[0,\psi]\to[0,\psi]$. Clearly, $C(x)$ is given by the following equation for all $x\in[0,\psi]$.
\begin{align*}
&C(x)=\min\{(\psi+x)\psi,\min_{u \in [0,\psi]}\{(\psi-u)(\psi-u+x)+u(u+p(\psi-\pi'(u)))+d\}\}
\end{align*}
We call $\bar{\pi}'=(\pi',\pi',..)$ a symmetric Nash equilibrium if ${\pi}'(x)$ attains the optimal cost in the above optimization problem for all $x $, i.e., if
\[
\pi'(x) = \argmin_{u \in [0,\psi]}\{(\psi-u)(\psi-u+x)+u(u+p(\psi-\pi'(u)))+d\}
\]
if the minimum value is less than $(\psi+x)^2 + c$ and $\pi'(x) = 0$ otherwise, for all $x \in [0,\psi]$. We characterize one such Nash equilibrium in the following. As in section~\ref{sec:fixed-waiting-cost} we define $k$-stage problems, where the tagged player has atmost $k$ service requests after it, before the terminal state is hit. Let $C_k(\cdot)$ be the tagged user's optimal cost in the $k$-stage problem and $\pi'_k(\cdot)$ be the corresponding optimal strategy. Then
\begin{align}
    \label{eqn:nash-c-0-cost-fixed}
C_0(x)=&\min\{(\psi+x)\psi \min_{u \in [0,\psi]}\{(\psi-u)(\psi-u+x)+u^2+d\}\}
\end{align}
and for all $k \ge 1$,
\begin{align}
    \label{eqn:fixed-nash-j-prime-k}
C_k(x)=&\min\{(\psi+x)\psi,  \min_{u \in [0,\psi]}\{(\psi-u)(\psi-u+x)+d+u(u
+p(\psi-\pi'_{k-1}(u)))\}.
\end{align}
We can see $C(x)$ as the limit of $C_k(x)$ as $k$ approaches infinity. Furthermore, the limit of the optimal strategy of $k$-stage problems yield a symmetric Nash equilibrium. \color{black}We now outline the approach of determining the Nash equilibrium, policy. We obtain Nash equilibrium policy under certain conditions in Lemma~\ref{lemma:pi-prime-x-less-than-xinfinity}
and Proposition~\ref{thm:fixed-nash-equilibrium}. Later we characterize total Nash equilibrium policy in Theorem~\ref{thm:nash}.
\color{black}
\remove{
For the ease of notation let us define the following
\begin{equation}
\label{eqn:game-reference-fixed-cost}
\pi'_{k}(x)={a}_k x+{b}_k.
\end{equation}
where ${a}_k,{b}_k$ are defined at appropriate places.
Let us define $\pi_{-1}(x) \coloneqq \psi$. We can then unify~\eqref{eqn:nash-c-0-cost-fixed} and~\eqref{eqn:fixed-nash-j-prime-k}, i.e., we can use~\eqref{eqn:fixed-nash-j-prime-k} to describe $J_k(\cdot), k \geq 0$. It can be observed that $a_{-1}=0,b_{-1}=\psi$}
\subsection{A symmetric Nash equilibrium for special case}
\label{subsec:nash-special-case}
We first focus on symmetric Nash equilirium in a few special cases. We then use these results to obtain symmetric Nash equilibria for all the cases (see Section~\ref{sec:fixed-waiting-cost-total-nash}). We begin with defining sequences
$\tilde{a}_k,\tilde{b}_k,k \ge -1$ as follows
\begin{align}
\tilde{a}_k &= \begin{cases}
0, & \text{if } k = -1 \\
\frac{1}{2(2-p\tilde{a}_{k-1})}, & \text{otherwise}
\end{cases} \label{eqn:fixed-a-case2-nash} \\
\tilde{b}_k &= \begin{cases}
0, & \text{if } k = -1 \\
\frac{(2-p)\psi+p\tilde{b}_{k-1}}{2(2-p\tilde{a}_{k-1})}, & \text{otherwise}
\end{cases} \label{eqn:fixed-b-case2-nash}
\end{align}
We state a few properties of the above sequences.
\begin{lemma}
\label{lemma:fixed-nash-properties-a-b}
%\begin{enumerate}
    $(a)$ The sequence $\tilde{a}_k, k \ge -1$ converges to \[\tilde{a}_{\infty}\coloneqq \frac{1}{p}-\frac{\sqrt{4-2p}}{2p}.\] Also,  $\frac{1}{4}< \tilde{a}_{\infty}<\frac{1}{3}$.\\
   $(b)$ The sequence $\tilde{b}_k, k \ge -1$ converges to \[\tilde{b}_{\infty}\coloneqq \frac{\tilde{a}_{\infty}(2-p)\psi}{1-\tilde{a}_{\infty}p}.\]
%\end{enumerate}
\end{lemma}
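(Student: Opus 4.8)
The plan is to treat part (a) as a one--dimensional monotone iteration and part (b) as a nonautonomous affine iteration whose coefficients converge to a strict contraction. Throughout write $f(a):=\tfrac{1}{2(2-pa)}$, so that $\tilde a_k=f(\tilde a_{k-1})$ for $k\ge 0$ by \eqref{eqn:fixed-a-case2-nash}.

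For part (a): since $\tilde a_{-1}=0$ we have $\tilde a_0=\tfrac14$. On $[0,\tfrac13]$ the map $f$ is well defined, continuous and strictly increasing, because $f'(a)=\tfrac{p}{2(2-pa)^2}>0$ there (as $pa\le\tfrac13<2$). Checking endpoints, $f(\tfrac14)=\tfrac{1}{4-p/2}$ and $f(\tfrac13)=\tfrac{1}{4-2p/3}$ both lie in $(\tfrac14,\tfrac13)$ for every $p\in(0,1)$, so by monotonicity $f$ maps $[\tfrac14,\tfrac13]$ into itself. As $\tilde a_0=\tfrac14\le f(\tfrac14)=\tilde a_1$, induction with monotonicity of $f$ gives that $(\tilde a_k)_{k\ge0}$ is nondecreasing and stays in $[\tfrac14,\tfrac13]$; hence it converges to some $L\in[\tfrac14,\tfrac13]$ fixed by $f$. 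The equation $L=f(L)$ is $2pL^2-4L+1=0$, with roots $\tfrac1p\pm\tfrac{\sqrt{4-2p}}{2p}$; the larger root exceeds $\tfrac1p>1$, so $L=\tilde a_\infty=\tfrac1p-\tfrac{\sqrt{4-2p}}{2p}$. For the strict bounds, rewrite $\tilde a_\infty=\tfrac{2-\sqrt{4-2p}}{2p}$ and isolate the radical: $\tilde a_\infty>\tfrac14$ is equivalent to $4-p>2\sqrt{4-2p}$, which upon squaring (both sides positive) reduces to $p^2>0$; and $\tilde a_\infty<\tfrac13$ is equivalent to $6-2p<3\sqrt{4-2p}$, which upon squaring reduces to $2p(3-2p)>0$. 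Both hold for $p\in(0,1)$.

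For part (b): rewrite \eqref{eqn:fixed-b-case2-nash} as $\tilde b_k=\eta_k+\gamma_k\tilde b_{k-1}$, where $\gamma_k:=\tfrac{p}{2(2-p\tilde a_{k-1})}$ and $\eta_k:=\tfrac{(2-p)\psi}{2(2-p\tilde a_{k-1})}$. Since $\tilde a_{k-1}\in[0,\tfrac13]$ for all $k$, one has $\gamma_k\le\bar\gamma:=\tfrac{p}{2(2-p/3)}<\tfrac13$ and $\eta_k\le\psi=:M$ uniformly (using $2(2-p\tilde a_{k-1})\ge 2-p$); hence $\tilde b_k\le M+\bar\gamma\tilde b_{k-1}$, and since $\tilde b_{-1}=0$ this forces $\tilde b_k\le M/(1-\bar\gamma)$, so $(\tilde b_k)$ is bounded. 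Using part (a) together with the identity $\tilde a_\infty=\tfrac{1}{2(2-p\tilde a_\infty)}$, we get $\gamma_k\to\gamma_\infty:=p\tilde a_\infty<\tfrac13$ and $\eta_k\to\eta_\infty:=(2-p)\psi\,\tilde a_\infty$. Let $b^\star:=\tfrac{\eta_\infty}{1-\gamma_\infty}=\tfrac{(2-p)\psi\,\tilde a_\infty}{1-p\tilde a_\infty}$, which is precisely the claimed $\tilde b_\infty$. Then $\tilde b_k-b^\star=\gamma_k(\tilde b_{k-1}-b^\star)+(\eta_k-\eta_\infty)+(\gamma_k-\gamma_\infty)b^\star$, so $|\tilde b_k-b^\star|\le\bar\gamma|\tilde b_{k-1}-b^\star|+\varepsilon_k$ with $\varepsilon_k:=|\eta_k-\eta_\infty|+|\gamma_k-\gamma_\infty|\,|b^\star|\to0$. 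The elementary fact that $x_k\le\rho x_{k-1}+\varepsilon_k$ with $0\le\rho<1$ and $\varepsilon_k\to0$ forces $\limsup_k x_k\le0$ then yields $\tilde b_k\to b^\star=\tilde b_\infty$.

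I expect the main obstacle to be the convergence in part (b): the $\tilde b$--recursion is not an autonomous affine map, since $\gamma_k$ and $\eta_k$ keep depending on $\tilde a_{k-1}$. The perturbed-contraction estimate above resolves this, the crucial input being the uniform bound $\gamma_k\le\bar\gamma<1$, which itself relies on the confinement $\tilde a_k\le\tfrac13$ established in part (a). Everything else reduces to monotone convergence together with the two elementary square-root inequalities pinning $\tilde a_\infty$ strictly between $\tfrac14$ and $\tfrac13$.
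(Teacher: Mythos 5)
Your proof is correct, and for part (a) it follows essentially the same route as the paper: iterate the increasing map $a\mapsto \tfrac{1}{2(2-pa)}$, observe the sequence is monotone increasing from $\tilde a_0=\tfrac14$, identify the limit as the smaller root of $2pL^2-4L+1=0$, and verify the bound $\tilde a_\infty<\tfrac13$ by the same squaring of $6-2p<3\sqrt{4-2p}$. You are, however, more careful than the paper in two places where its argument is terse or absent. First, the paper asserts convergence of the monotone sequence without exhibiting an upper bound; your invariance check that $f$ maps $[\tfrac14,\tfrac13]$ into itself supplies the missing boundedness and simultaneously yields the lower bound $\tilde a_\infty>\tfrac14$, which the paper states but never proves. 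Second, the paper's appendix contains no proof of part (b) at all; your treatment of the $\tilde b$-recursion as a nonautonomous affine iteration with coefficients $\gamma_k=p\tilde a_k\le\bar\gamma<1$, combined with the standard perturbed-contraction estimate $|\tilde b_k-b^\star|\le\bar\gamma|\tilde b_{k-1}-b^\star|+\varepsilon_k$, is a legitimate and complete way to close that gap (and mirrors the technique the authors themselves use for the analogous coefficient sequences elsewhere, e.g.\ in the proof of Lemma~\ref{lemma:bar-a-b-convergence}). In short, the proposal is correct and strictly more complete than the paper's own proof.
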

\begin{proof}
%See~\cite[Appendix~II-A]{b29}.
See~\ref{appendix:fixed-nash-properties-a-b}.
\end{proof}
\color{black}The following lemma states that $ \tilde{a}_{\infty} x+\tilde{b}_{\infty}$ is strictly positive and strictly less than $\psi$ for all $x \in [0,\psi]$. We use it later to show that under certain conditions, the symmetric Nash equilibria can be obtained via solving unconstrained optimization problems. \color{black}
\begin{lemma}
 $ \tilde{a}_{\infty} x+\tilde{b}_{\infty} \in (0,\psi)$ for all $ x \in [0,\psi]$.
\label{lemma:fixed-game-no-caps}
\end{lemma}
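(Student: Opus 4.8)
The plan is to verify the two one–sided bounds separately, each reducing to an elementary fact about $\tilde a_\infty$ and $\tilde b_\infty$ already recorded in Lemma~\ref{lemma:fixed-nash-properties-a-b}.

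For the \emph{lower} bound, I would first observe that the denominator appearing in $\tilde b_\infty$ is positive: since $\tilde a_\infty < \tfrac13$ by Lemma~\ref{lemma:fixed-nash-properties-a-b}(a) and $p \in (0,1)$, we have $p\tilde a_\infty < \tfrac13$, so $1 - p\tilde a_\infty > 0$. Hence $\tilde b_\infty = \frac{\tilde a_\infty(2-p)\psi}{1 - p\tilde a_\infty}$ is a quotient of strictly positive quantities, so $\tilde b_\infty > 0$. Together with $\tilde a_\infty > 0$ this gives $\tilde a_\infty x + \tilde b_\infty \ge \tilde b_\infty > 0$ for every $x \in [0,\psi]$.

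For the \emph{upper} bound, since $x \mapsto \tilde a_\infty x + \tilde b_\infty$ is nondecreasing it suffices to check $x = \psi$, i.e.\ to show $\tilde a_\infty + \tilde b_\infty/\psi < 1$. Substituting the formula for $\tilde b_\infty$ and clearing the positive denominator $1 - p\tilde a_\infty$, this inequality is equivalent to $p\tilde a_\infty^2 - 3\tilde a_\infty + 1 > 0$. At this point I would use the fixed–point identity for $\tilde a_\infty$: letting $k \to \infty$ in the recursion $\tilde a_k = \frac{1}{2(2 - p\tilde a_{k-1})}$ (whose convergence is Lemma~\ref{lemma:fixed-nash-properties-a-b}(a)) yields $2\tilde a_\infty(2 - p\tilde a_\infty) = 1$, i.e.\ $p\tilde a_\infty^2 = 2\tilde a_\infty - \tfrac12$. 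Plugging this in, $p\tilde a_\infty^2 - 3\tilde a_\infty + 1 = \tfrac12 - \tilde a_\infty$, which is strictly positive since $\tilde a_\infty < \tfrac13 < \tfrac12$. Combining the two bounds gives $\tilde a_\infty x + \tilde b_\infty \in (0,\psi)$ for all $x \in [0,\psi]$.

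The computation is routine; the only points requiring a little care are checking that $1 - p\tilde a_\infty > 0$ before cross–multiplying (so the inequality direction is preserved) and correctly extracting the quadratic identity $p\tilde a_\infty^2 = 2\tilde a_\infty - \tfrac12$ from the defining recursion — both immediate from Lemma~\ref{lemma:fixed-nash-properties-a-b}. I do not anticipate any substantive obstacle.
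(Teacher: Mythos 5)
Your proposal is correct and follows essentially the same route as the paper: positivity of $\tilde{b}_\infty$ gives the lower bound, and the upper bound is reduced to the endpoint check $\tilde{a}_\infty\psi+\tilde{b}_\infty<\psi$, which both arguments settle using the bound $\tilde{a}_\infty<\tfrac13$ from Lemma~\ref{lemma:fixed-nash-properties-a-b}. The only (cosmetic) difference is that the paper substitutes the closed form $1-p\tilde{a}_\infty=\tfrac{\sqrt{4-2p}}{2}$ and bounds $\sqrt{4-2p}+1\le 3$, whereas you invoke the fixed-point identity $p\tilde{a}_\infty^2=2\tilde{a}_\infty-\tfrac12$ to reduce the inequality to $\tilde{a}_\infty<\tfrac12$; both manipulations are equivalent uses of the defining relation for $\tilde{a}_\infty$.
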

\begin{proof}
%See~\cite[Appendix~II-B]{b29}.
See~\ref{appendix:fixed-game-no-caps}.
\end{proof}
\vspace{0.05in}
Let us also define
$ x_{\infty}=\frac{\sqrt{2\tilde{a}_{\infty}d}-\tilde{b}_{\infty}}{\tilde{a}_{\infty}}. $
%%%$0 < a'_k x+b'_k < \psi$ for all $0 \leq x \leq \psi,k \ge 0$.
%\label{lemma:fixed-game-no-caps}
%\end{lemma}
%\begin{proof}
%See Appendix~\ref{app:fixed-game-no-caps}.
%\end{proof}
%
\color{black}
The following lemma partially characterizes symmetric Nash equilibrium policies. \color{black} 
\begin{lemma}
\label{lemma:pi-prime-x-less-than-xinfinity}
\[{\pi}'(x)=0, \forall x \le x_{\infty}\].
\end{lemma}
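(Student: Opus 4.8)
The plan is to obtain the statement by value iteration on the finite-stage games, just as in the treatment of the optimal-scheduling problem: I work with $C_k(\cdot),\pi'_k(\cdot)$ of~\eqref{eqn:nash-c-0-cost-fixed}--\eqref{eqn:fixed-nash-j-prime-k} and use that $\pi'_k\to\pi'$ pointwise. The core of the argument is to show, by induction on $k$, that each $\pi'_k$ is a (generally discontinuous) threshold-linear control,
\[
\pi'_k(x)=
\begin{cases}
0,& 0\le x\le x_k,\\[2pt]
\tilde a_k x+\tilde b_k,& x_k< x\le\psi,
\end{cases}
\qquad
x_k:=\frac{\sqrt{2\tilde a_k d}-\tilde b_k}{\tilde a_k},
\]
with $\tilde a_k,\tilde b_k$ as in~\eqref{eqn:fixed-a-case2-nash}--\eqref{eqn:fixed-b-case2-nash} (and with the degenerate reading $\pi'_k\equiv 0$ when $x_k\ge\psi$). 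The base case is verified by directly minimizing the quadratic objective in~\eqref{eqn:nash-c-0-cost-fixed}.

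For the inductive step, substituting the form of $\pi'_{k-1}$ into the inner minimization in~\eqref{eqn:fixed-nash-j-prime-k} makes the ``deferral'' objective $u\mapsto(\psi-u)(\psi-u+x)+d+u\bigl(u+p(\psi-\pi'_{k-1}(u))\bigr)$ piecewise quadratic: it equals the convex quadratic $q_1(u):=(\psi-u)(\psi-u+x)+d+u(u+p\psi)$ on $[0,x_{k-1}]$ and the convex quadratic $q_2(u):=q_1(u)-pu(\tilde a_{k-1}u+\tilde b_{k-1})$ on $(x_{k-1},\psi]$, with $q_2\le q_1$ pointwise on $[0,\psi]$. Its unconstrained minimizer on the linear branch, $u_k^\star:=\tfrac{(2-p)\psi+x+p\tilde b_{k-1}}{2(2-p\tilde a_{k-1})}$, equals $\tilde a_k x+\tilde b_k$ by~\eqref{eqn:fixed-a-case2-nash}--\eqref{eqn:fixed-b-case2-nash}; once one checks $u_k^\star\in(x_{k-1},\psi]$ over the relevant range of $x$, the global minimum of the deferral objective is $q_2(u_k^\star)=(\psi+x)\psi+d-\tfrac{\tilde a_k}{2}\bigl(x+\tfrac{\tilde b_k}{\tilde a_k}\bigr)^2$. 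Comparing with the no-deferral cost $(\psi+x)\psi$ shows deferral is strictly cheaper exactly when $x>x_k$, which gives the claimed form (the ``strictly less than'' tie-break putting $\pi'_k(x_k)=0$); that $u_k^\star\in(0,\psi)$ over the range that matters is a finite-stage analogue of Lemma~\ref{lemma:fixed-game-no-caps}, via the monotonicity of $\tilde a_k,\tilde b_k$.

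With the finite-stage structure in hand I pass to the limit. By Lemma~\ref{lemma:fixed-nash-properties-a-b}, $\tilde a_k\to\tilde a_\infty$ and $\tilde b_k\to\tilde b_\infty$; moreover the recursions~\eqref{eqn:fixed-a-case2-nash}--\eqref{eqn:fixed-b-case2-nash} started from $\tilde a_{-1}=\tilde b_{-1}=0$ give that $\tilde a_k$ is increasing and that $\tilde b_k/\tilde a_k=(2-p)\psi+p\tilde b_{k-1}$ is increasing, so $x_k=\sqrt{2d/\tilde a_k}-\tilde b_k/\tilde a_k$ is decreasing, whence $x_k\ge x_\infty$ for every $k$. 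Consequently, for any $x\le x_\infty$ we have $x\le x_k$ for all $k$, so $\pi'_k(x)=0$ for all $k$; letting $k\to\infty$ yields $\pi'(x)=0$. (Only the ``flat'' part $\pi'_k(x)=0$ for $x\le x_k$ is used here, although the induction producing it must carry the full threshold-linear form.)

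The main obstacle is the bookkeeping in the inductive step — verifying that $u_k^\star$ actually lands in the linear branch $(x_{k-1},\psi]$, rather than inside the flat branch $[0,x_{k-1}]$ (where the objective is $q_1$) or at the cap $u=\psi$, so that the clean formula for $x_k$ is valid. I expect this to reduce to comparing $x_{k-1}$ with the value of $u_k^\star$ at $x=x_k$, namely $\sqrt{2\tilde a_k d}$, which should follow from $x_k\le x_{k-1}$ together with $\tilde a_k\tilde a_{k-1}<1$ (a consequence of $\tilde a_k<\tfrac13$ from Lemma~\ref{lemma:fixed-nash-properties-a-b}); the remaining steps are routine algebra with $q_1$ and $q_2$.
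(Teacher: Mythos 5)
There is a genuine gap in the inductive step, and it sits exactly at the point you flag as "the main obstacle." Your induction hypothesis asserts that $\pi'_k$ is threshold-linear with slope/intercept $(\tilde a_k,\tilde b_k)$ and threshold $x_k=\frac{\sqrt{2\tilde a_k d}-\tilde b_k}{\tilde a_k}$, and the step requires that the unconstrained minimizer $u_k^\star=\tilde a_k x+\tilde b_k$ of the branch quadratic $q_2$ lie in the linear branch $(x_{k-1},\psi]$ for every $x$ at which deferral is optimal. At $x=x_k$ this is the inequality $\sqrt{2\tilde a_k d}\ge x_{k-1}$, which is \emph{not} implied by $x_k\le x_{k-1}$ and $\tilde a_k<\tfrac13$: since $\tilde a_k<1$, even $\tilde a_k x_k+\tilde b_k\ge x_k$ already requires $x_k\le\frac{\tilde b_k}{1-\tilde a_k}$, which is precisely the extra hypothesis of Proposition~\ref{thm:fixed-nash-equilibrium} that this lemma is designed to dispense with (and you would in fact need the stronger bound against $x_{k-1}$, not $x_k$). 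Concretely, with the recursion started from $\tilde a_{-1}=\tilde b_{-1}=0$ and, say, $\psi=1$, $p=0.5$, $d=0.64$, one gets $x_0\approx 0.763<\psi$ while $\tilde a_1x+\tilde b_1\le 0.72<x_0$ for all $x\in[0,\psi]$, so at stage $1$ the minimizer of $q_2$ never enters the linear branch; the deferral minimum is then attained either at the interior minimizer of $q_1$ (inside the flat branch of $\pi'_0$) or at the breakpoint $x_0$, and $\pi'_1$ is \emph{not} of the claimed form. This is the same phenomenon that forces the multi-segment, discontinuous structure in Proposition~\ref{prop:pi-star}, and once the invariant breaks the formula for $x_k$, the monotonicity chain $x_k\ge x_\infty$, and hence the conclusion, are no longer supported. (There is also a bookkeeping inconsistency: \eqref{eqn:nash-c-0-cost-fixed} has inner cost $u^2$, which corresponds to $b_{-1}=\psi$, not to $\tilde a_{-1}=\tilde b_{-1}=0$ in \eqref{eqn:fixed-a-case2-nash}--\eqref{eqn:fixed-b-case2-nash}; this is repairable but compounds the problem above.)

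The paper's proof avoids tracking the exact form of $\pi'_k$ altogether. It runs value iteration started from the \emph{fixed-point} strategy $u\mapsto\tilde a_\infty u+\tilde b_\infty$, so that every quadratic branch $h_{a,b}$ that can appear in later stages has $(a,b)$ with $au+b\le\tilde a_\infty u+\tilde b_\infty$, hence $h_{\tilde a_\infty,\tilde b_\infty}(x,u)\le h_{a,b}(x,u)$ pointwise; combined with the single inequality $\psi(\psi+x)\le\min_u h_{\tilde a_\infty,\tilde b_\infty}(x,u)$ for $x\le x_\infty$ (Lemma~\ref{lemma:inequality-x-infinity}), this shows no-deferral beats \emph{every} branch on $[0,x_\infty]$ at every stage, regardless of how complicated $\pi'_k$ is on $(x_\infty,\psi]$. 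If you want to salvage your forward iteration, you would need to replace your invariant by a one-sided one of this type (e.g.\ $\pi'_k(u)\le\tilde a_\infty u+\tilde b_\infty$ together with $\pi'_k=0$ on $[0,x_\infty]$) rather than the exact threshold-linear form.
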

\begin{proof}
See~\ref{appendix:pi-prime-x-less-than-xinfinity}.
%See~\cite[Appendix~II-D]{b29}.
\end{proof}
The following proposition gives a symmetric Nash equilibrium in a special case. 
\begin{prop}
\label{thm:fixed-nash-equilibrium}
If $ \frac{\tilde{b}_{\infty}}{1-\tilde{a}_{\infty}} \ge x_{\infty}$, then $\bar{\pi}'=(\pi',\pi',...)$ is a symmetric Nash equilibrium where
\begin{equation}
\label{eqn:fixed-pi-prime}
{\pi}'(x)=
\begin{cases}
0, & \text{ if } x \le x_{\infty} \\
\tilde{a}_{\infty}x+\tilde{b}_{\infty}, &\text{ otherwise}.
\end{cases}
\end{equation}
\end{prop}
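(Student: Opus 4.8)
The plan is to verify directly the fixed-point (best-response) characterization of a symmetric Nash equilibrium: assuming every other agent uses the candidate $\pi'$, show that an agent with pending demand $x$ does at least as well by playing $\pi'(x)$ as by any $u\in[0,\psi]$. Write
\[
G(x,u):=(\psi-u)(\psi-u+x)+u\bigl(u+p(\psi-\pi'(u))\bigr)+d
\]
for the cost of deferring $u$, and recall the no-deferral cost is $\psi(\psi+x)$. By Lemma~\ref{lemma:fixed-game-no-caps} the affine map $u\mapsto\tilde a_\infty u+\tilde b_\infty$ stays strictly inside $(0,\psi)$, so the candidate, when substituted, is exactly the stated piecewise function, and hence $G(x,\cdot)$ is a convex quadratic on each of $[0,x_\infty]$ and $(x_\infty,\psi]$, with a downward jump at $u=x_\infty$. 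Two elementary identities drive the computation: the limit relations from \eqref{eqn:fixed-a-case2-nash}--\eqref{eqn:fixed-b-case2-nash} give $\tilde a_\infty=\tfrac1{2(2-p\tilde a_\infty)}$ and $\tilde a_\infty(2-p)\psi=\tilde b_\infty(1-p\tilde a_\infty)$, hence $2-p\tilde a_\infty=\tfrac1{2\tilde a_\infty}$; and the definition of $x_\infty$ rearranges to $\tilde a_\infty x_\infty+\tilde b_\infty=\sqrt{2\tilde a_\infty d}$, which makes the hypothesis $\tfrac{\tilde b_\infty}{1-\tilde a_\infty}\ge x_\infty$ equivalent to $\sqrt{2\tilde a_\infty d}\ge x_\infty$. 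I would then split into $x>x_\infty$ and $x\le x_\infty$.

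For $x>x_\infty$: on the branch $u>x_\infty$ one computes $G(x,u)-\psi(\psi+x)=(2-p\tilde a_\infty)u^2-[(2-p)\psi+x+p\tilde b_\infty]u+d$, and using the identities above its vertex is precisely $u^\ast(x):=\tilde a_\infty x+\tilde b_\infty$. Since $u^\ast$ is increasing with $u^\ast(x_\infty)=\sqrt{2\tilde a_\infty d}\ge x_\infty$ (the hypothesis) and $u^\ast(x)<\psi$ (Lemma~\ref{lemma:fixed-game-no-caps}), we get $u^\ast(x)\in(x_\infty,\psi)$, so $u^\ast(x)$ is the minimizer on this branch, with value $d-(2-p\tilde a_\infty)u^\ast(x)^2=d\bigl(1-(u^\ast(x)/\sqrt{2\tilde a_\infty d})^2\bigr)<0$ because $u^\ast(x)>u^\ast(x_\infty)=\sqrt{2\tilde a_\infty d}$. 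On the branch $u\in[0,x_\infty]$ we have $G(x,u)-\psi(\psi+x)=2u^2-[(2-p)\psi+x]u+d$, whose minimum over all of $\R$ is $d-\tfrac18[(2-p)\psi+x]^2$; and $d-(2-p\tilde a_\infty)u^\ast(x)^2=d-\tfrac{\tilde a_\infty}{2}[(2-p)\psi+x+p\tilde b_\infty]^2$ is strictly smaller than this because $\tilde a_\infty>\tfrac14$ (Lemma~\ref{lemma:fixed-nash-properties-a-b}) and $[(2-p)\psi+x+p\tilde b_\infty]^2\ge[(2-p)\psi+x]^2>0$. Hence the global minimizer of $G(x,\cdot)$ over $[0,\psi]$ is $u^\ast(x)$, and it beats the no-deferral cost, so the fixed-point rule forces $\pi'(x)=u^\ast(x)=\tilde a_\infty x+\tilde b_\infty$, as claimed.

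For $x\le x_\infty$: I would show $G(x,u)\ge\psi(\psi+x)$ for every $u\in[0,\psi]$, so $\pi'(x)=0$ is optimal. On $u>x_\infty$, $G(x,u)-\psi(\psi+x)\ge d-(2-p\tilde a_\infty)u^\ast(x)^2\ge d-(2-p\tilde a_\infty)\cdot2\tilde a_\infty d=0$, using $u^\ast(x)\le u^\ast(x_\infty)=\sqrt{2\tilde a_\infty d}$. On $u\le x_\infty$, $G(x,u)-\psi(\psi+x)=2u^2-[(2-p)\psi+x]u+d$ has a positive minimum: from the rearranged definition of $x_\infty$ and $\tilde b_\infty,p\ge0$ one gets $(2-p)\psi+x\le(2-p)\psi+x_\infty\le\sqrt{2\tilde a_\infty d}/\tilde a_\infty$, so $[(2-p)\psi+x]^2\le 2d/\tilde a_\infty<8d$ (again $\tilde a_\infty>\tfrac14$), whence the minimum $d-\tfrac18[(2-p)\psi+x]^2>0$; this is the concrete-policy analogue of Lemma~\ref{lemma:pi-prime-x-less-than-xinfinity}. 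Combining the two cases, $\pi'$ is reproduced by the best-response map, so $\bar\pi'=(\pi',\pi',\dots)$ is a symmetric Nash equilibrium.

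The step I expect to be most delicate is not any single computation but the bookkeeping forced by the discontinuity of $\pi'$ at $x_\infty$: $G(x,\cdot)$ is genuinely piecewise and drops at $u=x_\infty$, so one cannot simply minimize one smooth function and must instead compare the two branch minima against each other and against the no-deferral value. It is exactly here that the hypothesis $\tfrac{\tilde b_\infty}{1-\tilde a_\infty}\ge x_\infty$ is needed --- it is equivalent to the deferral action $u^\ast(x)$ never dropping below $x_\infty$, i.e.\ to the piecewise definition being self-consistent --- and it is the sharp bounds $\tilde a_\infty\in(\tfrac14,\tfrac13)$ from Lemma~\ref{lemma:fixed-nash-properties-a-b} that make the branch comparisons (and the sign of $d-\tfrac18[(2-p)\psi+x]^2$) come out in the right direction.
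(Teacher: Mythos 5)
Your proof is correct, and its computational core coincides with the paper's: the same decomposition of the best-response objective into two quadratic branches split at $u=x_{\infty}$, the same vertex identity $u^{\ast}(x)=\tilde{a}_{\infty}x+\tilde{b}_{\infty}$ derived from the fixed-point relations for $\tilde{a}_{\infty},\tilde{b}_{\infty}$, the same characterization of the threshold via $\tilde{a}_{\infty}x+\tilde{b}_{\infty}\lessgtr\sqrt{2\tilde{a}_{\infty}d}$ (this is exactly the paper's Lemma~\ref{lemma:inequality-x-infinity}), and the same use of the hypothesis $\frac{\tilde{b}_{\infty}}{1-\tilde{a}_{\infty}}\ge x_{\infty}$ to keep $u^{\ast}(x)$ inside the upper branch. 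The packaging differs in two ways. First, the paper establishes the result by running a value iteration ($C_k$, $\pi'_k$) from a specially chosen initial function and showing the iterates are stationary, invoking the convergence result of~\cite[Chapter~2, Proposition~1.2(b)]{b21}, whereas you verify the best-response fixed-point condition directly in one shot; since the paper's own definition of a symmetric Nash equilibrium is precisely that fixed-point condition, nothing is lost, and the appeal to value-iteration convergence is avoided. Second, for the comparison between the two branches the paper uses the pointwise domination $h_{\tilde{a}_{\infty},\tilde{b}_{\infty}}(x,u)\le h_{0,0}(x,u)$, while you compare the two branch minima numerically via $\tilde{a}_{\infty}>\tfrac{1}{4}$ and $((2-p)\psi+x+p\tilde{b}_{\infty})^2\ge((2-p)\psi+x)^2$; both are valid, the paper's being marginally slicker and yours making explicit why the bound $\tilde{a}_{\infty}\in(\tfrac{1}{4},\tfrac{1}{3})$ from Lemma~\ref{lemma:fixed-nash-properties-a-b} matters.
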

\begin{proof}
See~\ref{appendix:fixed-nash-equilibrium}.
%See Appendix B.
\end{proof}
\remove{
\begin{remark}
It has to be noted that $x_{\infty}<0$ implies $\tilde{a}_{\infty}x_{\infty}+\tilde{b}_{\infty}>x_{\infty}$ (as $\tilde{b}_{\infty}>0$,$1-\tilde{a}_{\infty}>0$). Therefore, from Lemma~\ref{lemma:pi-prime-x-less-than-xinfinity} we have complete characterization of Nash equilibrium when the system starts with zero pending.
\end{remark}
}

Notice that the Nash equilibrium as given by Proposition~\ref{thm:fixed-nash-equilibrium} can also have a discontinuity. This jump can be explained using a similar argument as for the jumps in optimal policies~(see the paragraph following Proposition~\ref{prop:pi-star}).

\subsection{Nash equilibrium for the general case} 
\color{black}
The following theorem completely characterizes Nash equilibrium policy.
\label{sec:fixed-waiting-cost-total-nash}
\begin{theorem}
\label{thm:nash}
The Nash equilibrium actions are given as follows %$\pi^\ast(x)$ is given by 
\begin{enumerate}
\item If $x_\infty \geq 0$, then $\pi'(0) = 0$, none of the requests have their services deferred.  
\item If $x_\infty < 0$, then $\pi'(x)$, Nash equilibrium actions are taken in accordance with  Proposition~\ref{thm:fixed-nash-equilibrium}.
\end{enumerate}
\end{theorem}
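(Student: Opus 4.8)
The plan is to follow the template of the proof of Theorem~\ref{thm:opt}: split on the sign of $x_\infty$ and, in each case, reduce to something already established---Proposition~\ref{thm:fixed-nash-equilibrium} when $x_\infty<0$, and Lemma~\ref{lemma:pi-prime-x-less-than-xinfinity} when $x_\infty\ge 0$. The only genuinely new content is a couple of elementary sign checks together with a short ``propagation'' argument in the second case.

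First I would dispose of the case $x_\infty<0$ by showing that the hypothesis $\tfrac{\tilde b_\infty}{1-\tilde a_\infty}\ge x_\infty$ of Proposition~\ref{thm:fixed-nash-equilibrium} holds automatically. Evaluating Lemma~\ref{lemma:fixed-game-no-caps} at $x=0$ gives $\tilde b_\infty\in(0,\psi)$, hence $\tilde b_\infty>0$; and Lemma~\ref{lemma:fixed-nash-properties-a-b}(a) gives $\tilde a_\infty<\tfrac13<1$, hence $1-\tilde a_\infty>0$. Therefore $\tfrac{\tilde b_\infty}{1-\tilde a_\infty}>0>x_\infty$, and Proposition~\ref{thm:fixed-nash-equilibrium} applies: $(\pi',\pi',\dots)$ with $\pi'$ as in~\eqref{eqn:fixed-pi-prime} is a symmetric Nash equilibrium. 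Since $x_\infty<0$, the branch $x\le x_\infty$ of~\eqref{eqn:fixed-pi-prime} is vacuous on $[0,\psi]$, so in fact $\pi'(x)=\tilde a_\infty x+\tilde b_\infty$ for all $x\in[0,\psi]$, and by Lemma~\ref{lemma:fixed-game-no-caps} this lies strictly inside $(0,\psi)$; this is exactly the assertion that the equilibrium actions are taken in accordance with Proposition~\ref{thm:fixed-nash-equilibrium}.

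Next, for $x_\infty\ge 0$, I would invoke Lemma~\ref{lemma:pi-prime-x-less-than-xinfinity}: every symmetric Nash equilibrium satisfies $\pi'(x)=0$ for all $x\le x_\infty$, and in particular $\pi'(0)=0$ because $0\le x_\infty$. The system starts at $x_1=0$, and the state recursion sets $x_{k+1}=\pi_k(x_k)$ when slot $k$ carries a request and $x_{k+1}=0$ otherwise; since $\pi'(0)=0$, an immediate induction gives $x_k=0$ for every $k$. Thus along the equilibrium path every agent sees pending demand $0$ and defers nothing, i.e.\ none of the requests has its service deferred. (If one wants a full equilibrium strategy rather than only its on-path description, one may note that the only reachable state is $0$, so the equilibrium inequality need only be checked at $x=0$, which is precisely Lemma~\ref{lemma:pi-prime-x-less-than-xinfinity} evaluated there.) Combining the two cases gives the complete characterisation.

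I expect the real difficulty to lie not in this theorem but upstream: in Lemma~\ref{lemma:pi-prime-x-less-than-xinfinity}, which must pin down the threshold $x_\infty$ below which no agent defers, and in Proposition~\ref{thm:fixed-nash-equilibrium}, whose proof has to solve the best-response fixed-point equation for the affine equilibrium and verify that the constraint $u\in[0,\psi]$ is inactive (the role of Lemma~\ref{lemma:fixed-game-no-caps}). Granting those, the present statement is just a dichotomy on $\operatorname{sign}(x_\infty)$ plus the observations $\tilde b_\infty>0$ and $\tilde a_\infty<1$; the one mildly delicate point is the propagation argument that $\pi'(0)=0$ collapses the reachable state set to $\{0\}$.
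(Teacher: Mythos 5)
Your proposal is correct and follows essentially the same route as the paper's proof: case~1 is Lemma~\ref{lemma:pi-prime-x-less-than-xinfinity} applied at $x=0$ plus the observation that the state then never leaves $0$, and case~2 is an appeal to Proposition~\ref{thm:fixed-nash-equilibrium}. Your explicit check that $x_\infty<0$ forces the hypothesis $\tfrac{\tilde b_\infty}{1-\tilde a_\infty}\ge x_\infty$ (via $\tilde b_\infty>0$ from Lemma~\ref{lemma:fixed-game-no-caps} and $\tilde a_\infty<\tfrac13$ from Lemma~\ref{lemma:fixed-nash-properties-a-b}) is a detail the paper's proof leaves implicit, and it is a worthwhile addition.
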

\begin{proof}
If $x_\infty \geq 0$, $\pi'(0) = 0$ from Lemma~\ref{lemma:pi-prime-x-less-than-xinfinity}. In this case, none of the agents defer any service as they do not see any pending service. On the other hand, if $x_\infty < 0$, Proposition~\ref{thm:fixed-nash-equilibrium} applies, giving the equilibrium scheduling decisions. We thus have complete characterization of the users' scheduling decisions in all the cases.
\end{proof}\color{black}

In Figure~\ref{fig:fixed-image2}, we illustrate symmetric Nash equilibria for the same parameters as used to illustrate the optimal policies in Section~\ref{sec:fixed-waiting-cost}. In all these examples, it turns out that $x_\infty < 0$, and hence, the equilibria are given by Proposition~\ref{thm:fixed-nash-equilibrium}. For the same reason the equilibria do not exhibit jumps. 
\color{black}
\begin{figure}[!ht]
	
	\centering
	
	\includegraphics[width=0.6\textwidth]{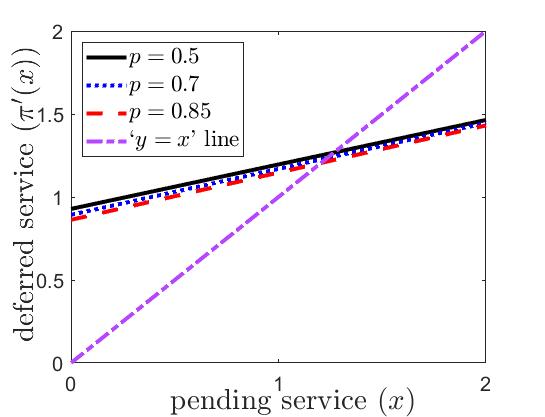}
	\caption{The symmetric Nash equilibria for $\psi=2,d=1,p\in \{0.5,0.7,0.85\}$.}
	\label{fig:fixed-image2}
\end{figure}

\remove{

\section{QUADRATIC WAITING COSTS: OPTIMAL SCHEDULING}
\label{sec:quad-waiting-cost}

In this section, we consider a scenario where a request's waiting cost is
a quadratic function of the portion of service that is deferred. More precisely, we consider the service request arrivals and sojourn times as in Section~\ref{sec:system-model}. We assume that the service costs are similar to Section~\ref{sec:system-model} but each request incurs a waiting cost $dx^2$ where $x$ is the portion of its demand deferred to the next slot.

We let $x_k$ and $u_k$ denote the remaining demand from slot $k-1$ to slot $k$
and the service offered in slot $k$, respectively. We frame the optimal scheduling problem as a stochastic shortest path problem as in Section~\ref{sec:fixed-waiting-cost}. In particular, we let $x_k$ be the system state in slot $k$ and $t$ be the {\it terminal state}
which is hit if there is no new request in a slot.
Let $u_k$ be the action in slot $k$ provided $x_k$ is
not a terminal state; $u_k \in [x_k, x_k + \psi]$.
Given the state-action pair in slot $k$, $(x_k,u_k)$, the next state is $x_{k+1} = x_k + \psi - u_k$ with probability $p$ and the terminal state with probability $1-p$. The single stage cost before hitting the terminal state is $u_k^2 + d x_k^2$  and the terminal cost is $x_{k+1}^2(1+d)$.

Unlike fixed waiting cost problems, we can cast the unconstrained problem as a standard linear quadratic
control Markov decision problem. Towards this, let us redefine the system state at slot $k$~(if it is not the terminal state) to be
\[y_k \coloneq \begin{bmatrix}
x_k & \psi
\end{bmatrix}^T.\] Clearly, the states evolve as
\[
y_{k+1} = \begin{cases}
A y_k+Bu_k & \text{if slot $k+1$ has a request}, \\
t, & \text{otherwise},
\end{cases}
\]
where
 \[A=\begin{bmatrix}
1 & 1 \\
0 & 1
\end{bmatrix} \text{ and }B=\begin{bmatrix}
-1 \\
0
\end{bmatrix}.
\]
The single stage cost and the terminal cost can be written as $y_k^TQy_k+u_k^TRu_k$ and $y_{k+1}^THy_{k+1}$, respectively, where
\[Q=\begin{bmatrix}
d & 0 \\
0 & 0
\end{bmatrix},
R=1
\text{ and } H=\begin{bmatrix}
d+1 & 0 \\
0 & 0
\end{bmatrix}
\]
Note that $Q$ and $H$ are positive semi-definite matrices whereas $R$ is positive definite as required in the standard framework of linear quadratic control problems~(see~\cite[Section~3.2]{b20}).\remove{\footnote{The framework in~\cite[Section~3.2]{b20} require that the system state evolve as $y_{k+1} = Ay_k + Bu_k + w_k$ where independent random vectors with zero mean and finite second moments. Moreover, $w_k$s must also be independent of $y_k$s and $u_k$s.  In our setup, the system evolves in deterministic fashion until it hits the terminal state. In particular, $w_k = 0$ for all $k$ until $y_{k+1} = t$. Hence the above requirement is met.}}

Standard framework~\cite[Section~3.2]{b20} requires the pairs $(A,B)$ and $(A,C)$, where $Q = C^TC$, are controllable and observable, respectively
~(see also~\cite[Proposition~4.1]{b21}). We can easily verify that $(A,C)$ is observable but $(A,B)$ is not controllable in our setup. Below, we explicitly obtain the optimal policy.

Let $J:[0, \psi] \rightarrow \R_+$ be the optimal cost function
for the problem. It is the solution of the following Bellman's equation: For all $x \in [0,\psi]$,
\begin{align}
\label{eqn:quad-delay-cost-bellman}
J(x)=\min_{u \in [0,\psi]}\left\{(x+\psi-u)^2+dx^2+pJ(u)\right.\nonumber\\
\left.+(1-p)u^2(1+d)\right\}
\end{align}
Let $\pi^\ast$ be the optimal stationary policy for this problem. Let us define the "$k$-stage problem" as in Section~\ref{sec:fixed-waiting-cost} and
\remove{
as the one that allows at most $k+1$  service requests. More precisely, here the system is {\it forced to enter}
 the terminal state after $k+1$ service requests if it has not already done so. }
 let $J_k(\cdot)$ be the optimal cost function
 of the $k$-stage problem.Clearly,
 \begin{equation}
J_0(x)=  \min_{u \in [0,\psi]}\left\{(\psi-u+x)^2+dx^2+u^2(1+d)\right\}
\label{eqn:J0}
 \end{equation}
 and
  \begin{align}
 J_k(x)= \min_{u \in [0,\psi]} &
\left\{(\psi-u+x)^2+dx^2+pJ_{k-1}(u)
\right. \nonumber \\
 & \left. +(1-p)u^2(1+d) \right\}  \label{eqn:Jk}.
 \end{align}
 for $k \geq 1$. We can express $J(\cdot)$ as the limit
of $J_k(\cdot)$ as $k$ approaches infinity.
Furthermore, we can express the desired optimal
policy also as the limit of the optimal controls
of $k$-stage problems~(i.e., optimal actions in~\eqref{eqn:J0}-\eqref{eqn:Jk}). This is the approach we follow to arrive at the
optimal scheduling policy.

\subsubsection*{Optimal Policy}
Let us define sequences $a^{\ast}_i,b^{\ast}_i, i \geq 0$
as follows.
\begin{align}
a^\ast_i &= \begin{cases}
1+d, & \text{if } i = 0, \\
1+d-\frac{p}{1+a^{\ast}_{i-1}}, & \text{otherwise,}
\end{cases} \label{eqn:ak-star} \\
b^\ast_i &= \begin{cases}
0, & \text{if } i = 0, \\
\frac{p(2a^{\ast}_{i-1}\psi+b^{\ast}_{i-1})}{1+a^{\ast}_{i-1}} & \text{otherwise.}
\end{cases} \label{eqn:bk-star}
\end{align}
We first state a few properties of the above sequences.
\begin{lemma} $(a)$ The sequence $a^\ast_k, k\geq 0$ is an increasing sequence and converges to $a_\infty := \frac{d+\sqrt{d^2+4(1+d-p)}}{2}$.\\
$(b)$ The sequence $b^\ast_k, k\geq 0$ converges to
\[
b_{\infty} := \frac{2pa_\infty\psi}{1+a_\infty-p}.
\]
Further, $b^\ast_k < 2\psi$ for all $k \geq 0$ and so, $b_{\infty} \le 2\psi$. \\
\label{lemma:monotonicity-ak-bk}
\end{lemma}
\begin{proof}
See~\cite[Appendix~III-A]{b29}.
%See Appendix~\ref{app:monotonicity-ak-bk}.
\end{proof}

\begin{lemma}
 $0 < \frac{x+\psi-\frac{b^\ast_{i}}{2}}{(1+a^\ast_{i})} < \psi$ for all $0 \leq x \leq \psi,i \ge 0$.
\label{lemma:monotonicity-xk}
\end{lemma}
\begin{proof}
See~\cite[Appendix~III-B]{b29}.
%See Appendix~\ref{app:monotonicity-xk}.
\end{proof}

The optimal scheduling policy is as follows.
\begin{theorem}
%\label{theorem:single-psi}
\[
\pi^\ast(x) = \frac{x+\psi-\frac{b_{\infty}}{2}}{(1+a_{\infty})}.
\]
\label{thm:optimal-policy}
\end{theorem}
\begin{proof}
%See Appendix~\ref{app:optimal-policy}.
See Appendix C.
\end{proof}
 
Notice that, unlike the fixed waiting cost case in Section~\ref{sec:fixed-waiting-cost}, the optimal policies here do not exhibit any discontinuity and are linear. When the pending service in a slot is $x$ and $u$ amount of service is deferred, the marginal service cost in the slot is lower bounded by $2(\psi-u+x)$ and the marginal waiting cost is upper bounded by $2du$.  Hence irrespective of the values of $x$, it is profitable to defer some amounts of service to the next slot.%This can be understood as follows. In the case of quadratic waiting costs, incremental waiting costs start at zero irrespective of the value of $d$ and increase with the amount of deferred service. On the other hand, marginal service costs are positive for any nonzero amount of offered service in a slot. Hence, for any given pending service and $d$, the optimal policies defer nonzero amounts of service.    

We illustrate the optimal policies via a few examples in Figure~\ref{fig:Quadratic-image1}. We choose $\psi = 2, d= 1,$ and $p = 0.5, 0.85$ and $1$ for illustration. As expected, for the same pending service, the deferred service decreases as p increases. For $p =1$, there is no pending service in the first slot and no amount of service is deferred in the subsequent slots either. 
\color{black}
\begin{figure}[!ht]
	
	\centering
	
	\includegraphics[width=0.4\textwidth]{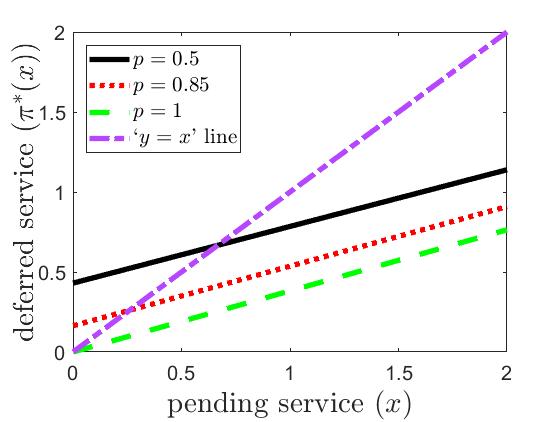}
	\caption{The optimal policies for $\psi=2,d=1,p=\{0.5,0.85,1\}$.}
	\label{fig:Quadratic-image1}
\end{figure}
\remove{
\section{QUADRATIC WAITING COSTS: OPTIMAL SCHEDULING FOR GENERAL SERVICE REQUIREMENTS}
We now generalize the service request process of Section~\ref{sec:quad-waiting-cost} to allow general service requirements. We assume that, in each slot an agent with demand $\psi_i$~($i = 1,2,\dots,N$) arrives with probability $p_i$ and there is no arrival with probability $1 - \bar{p}$ where $\bar{p} \coloneqq \sum_{i=1}^N p_i$. Without loss of generality we assume that $\psi_i$s are monotonically increasing.

Let us see the stochastic shortest path formulation of this problem. Let $J:\{\psi_1,\dots,\psi_N\} \times [0,\psi_N] \to \mathbb{R}_+$ be the optimal cost function and $\pi:\{\psi_1,\dots,\psi_N\} \times [0,\psi_N] \to [0,\psi_N]$  be the optimal policy for the problem~($\pi(\psi_i,\cdot) \in [0,\psi_i]$ for all $i$). The optimal cost function is solution of the following Bellman's equation: For all $x \in [0,\psi_N], i \in \{1,2,..,N\}$,
 \begin{align*}
 J(\psi_i,x)= \min_{u \in [0, \psi_i]}
\Bigg\{(\psi_i-u+x)^2 & +dx^2 + \sum_{j=1}^N p_j J(\psi_j,u)\\
 & + (1-\bar{p}) u^2(1+d)\Bigg\}
\end{align*}
    Using a procedure similar to~\cite[Section~V-A]{b27} we propose Algorithm~\ref{alg:two-service-requirements} which provides the optimal policy. The policy derived after $k$ runs of the {\em do-while} loop is the optimal policy, $\pi_k(\psi_i,\cdot)~(i = 1,2,\dots,N)$ of an appropriately defined $k$-stage problem. We see that the termination criterion of the loop is met after a few iterations in most of the cases. In other words, $\pi_k(\cdot,\cdot), k \geq 0$ converge to $\pi(\cdot,\cdot)$ in  a few iterations. Unlike the case of Bernoulli arrivals in Section~\ref{sec:quad-waiting-cost}, the optimal policies here can be piecewise linear though they do not exhibit discontinuities as the optimal policies in Section~\ref{sec:fixed-waiting-cost}.
    \color{black}
\begin{remark}
    We can propose an approximate policy along the lines of ~\cite[Section~V-B1]{b27}. We can easily check that the same approximation error bounds as in~\cite[Thereom~5.1]{b27} apply here as well.
\end{remark}

We illustrate the optimal policies for general service requirements via a few examples in Figure~\ref{fig:Quadratic-image6}. We choose $(\psi_1, \psi_2) = (1,3), d= 1,$ and  $(p_1,p_2) = (0.2,0.7)$ and $(0.7,0.2)$ for illustration. As expected, more service deferred in when quantum of service in the current slot is higher, and so, $\pi(\psi_1,\cdot) \leq \pi(\psi_2,\cdot)$. For both the $(p_1,p_2)$ combinations, $x^2_{k,0} < 0$, and so $\pi(\psi_2,0) > 0$. $\pi(\psi_1,\cdot)$ are capped at $\psi_1$. Moreover, for the same pending service, the deferred service decreases as the expected quantum of service in the next slot increases, i.e., for given $x$ and $i = 1,2$, $\pi(\psi_i,x)$ for $(p_1,p_2) = (0.2,0.7)$ are smaller than  $\pi(\psi_i,x)$ for $(p_1,p_2) = (0.7,0.2)$.

\color{black}

\begin{figure}[!ht]
	
	\centering
	
	\includegraphics[width=0.29\textwidth]{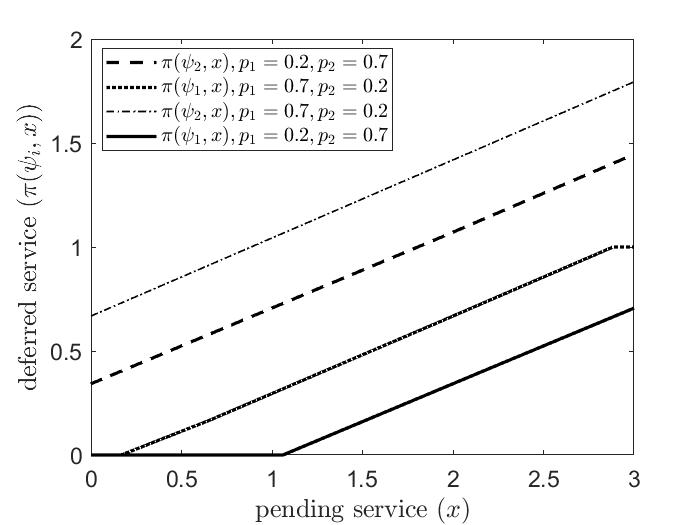}
	\caption{Optimal policies for $(\psi_1,\psi_2) = (1,3), d=1, (p_1,p_2)\in \{(0.2,0.7),(0.7,0.2)\}$.}
	\label{fig:Quadratic-image6}
\end{figure}
}
\remove{
As in Section~\ref{sec:system-model}, we assume that agents with service requests arrive according to an i.i.d. process. However, in each slot, with probability $p_i$ an agent with demand $\psi_i$ arrives, and with probability $1-\sum_{i=1}^N p_i$ there is no arrival. Further, each request can be met in at most two slots, i.e., a fraction of the demand arriving in a slot could be deferred to the next slot. We consider that $\psi_i$s are in ascending order without loss of generality. We assume that the cost structure is similar to that in Section~\ref{sec:system-model}. }
%\vspace{-0.2in}
\remove{
\begin{algorithm}[tb]
\caption{(General Service Requirements)}
\label{alg:two-service-requirements}
\begin{algorithmic}
\State Input: $p_1,p_2,..,p_N,\psi_1,\psi_2,...\psi_N,d$
\State $a_{k,-1}=\infty,b_{k,-1}=0~\forall k \ge 0$
\State $k=0$
\State $x_{0,0}=0,x_{0,1}=\psi_N,I_0=1$
\State $a_{0,0}=1+d,b_{0,0}=0$
\Do
\State $k=k+1$
\For{$i=1:N$}
\For{$j=0:I_{k-1}-1$}
\State \[x_{k,j}^i=\frac{2(1+a_{k-1,j})x_{k-1,j}+b_{k-1,j}}{2}-\psi_i\]
\EndFor
\EndFor
\For{$i=1:N-1$}
\vspace{-0.25in}
\State \begin{align*}&\bar{l}(i)=\max\{j:x_{k-1,j}< \psi_i\} \\
&x_{k,\bar{l}(i)+1}^1=\frac{2(1+a_{k-1,\bar{l}(i)})\psi_i+b_{k-1,\bar{l}(i)}}{2}-\psi_i\\
\end{align*}
\vspace{-0.25in}
\EndFor
\begin{align*}
&(x_{k,0},\dots,x_{k,I_{k}})=\\
&\ \ \ \text{order}(x^1_{k,0},\dots,x^1_{k,\bar{l}(1)+1}
,\dots,x^{N-1}_{k,0},\dots,x^{N-1}_{k,\bar{l}(N-1)+1},\\
&\ \ \ \ \ \ \ \ \ \ \ x^N_{k,0},\dots,x^N_{k,I_{k-1}-1},0,\psi_2)
\end{align*}
\indent\Comment{This function removes the values outside $(0,\psi_N)$ and puts the remaining in ascending order. }
\For{$j=0:I_{k}-1$}
\For{$i=1:N$}
\begin{equation*}
    j_i=
    \begin{cases}
    -1,&\text{if}\ x_{k,0}^i>x_{k,j}\\
    \max\{l:x^i_{k,l}\le x_{k,j}\},&\text{otherwise}
    \end{cases}
\end{equation*}
\EndFor
\vspace{-0.2in}
\State \begin{align*}
a_{k,j}=&1-\sum_{m=1}^{N-1}\frac{p_m}{1+a_{k-1,j_m}}\mathds{1}_{\{j_m\le \bar{l}(m)\}}-\frac{p_N}{1+a_{k-1,j_N}}\\
b_{k,j}=&\sum_{m=1}^{N-1}\frac{p_m(2\psi_ma_{k-1,j_m}+b_{k-1,j_m})}{1+a_{k-1,j_m}}\mathds{1}_{\{j_m\le \bar{l}(m)\}}\\&+\frac{p_N(2\psi_Na_{k-1,j_N}+b_{k-1,j_N})}{1+a_{k-1,j_N}}+d
\end{align*}
\EndFor
%\State $n=n+1$
\doWhile{$(x_k,a_k,b_k)\neq (x_{k-1},a_{k-1},b_{k-1})$}
\State Output: $\forall i \in \{1,2,..,N\}$
\begin{equation*}
\pi(\psi_i,x) =
%\label{equ:b1-expressions-2psi}
\begin{cases}

0, &\text{if}\ x \le x_{k,0}^i\\

\big[\frac{2(x+\psi_i)-b_{k,j}}{2(1+a_{k,j})}\big]^{\psi_i}, &\text{if} \  x\in (x_{k,j}^i,x_{k,j+1}^i]\\
&0 \leq j < I_k\\
\end{cases}
  \end{equation*}
\end{algorithmic}
\end{algorithm}
}
\section{QUADRATIC WAITING COSTS: NASH EQUILIBRIUM}
\label{sec:nash-equilibrium}
In this section we provide a Nash equilibrium for the non-cooperative game among the selfish agents~(see Section~\ref{sec:system-model}).  As in Section IV, we focus on symmetric Nash equilibria where each agent's strategy is a piece-wise linear function of the remaining demand of the previous player. Our notation for agents' strategies and costs and analysis closely follow those in Section IV. Now the optimal cost of a player as a function of the pending demand given that all other players use strategy, $\pi'(\cdot)$ is given by
\remove{

%We omit the proofs for lack of space.
Let $C:[0,\psi]\to \mathbb{R}_+$ give the optimal cost for a player as a function of the pending demand given that all other players use strategy $\pi':[0,\psi]\to[0,\psi]$. Clearly, $C(x)$ is given by the following equation for all $x\in[0,\psi]$.} \color{black}
\[
C(x)=\min_{u \in [0,\psi]}\{(\psi-u)(\psi-u+x)+u(u+p(\psi-\pi'(u)))+du^2\}
\]
Also, $\bar{\pi}'=(\pi',\pi',..)$ a symmetric Nash equilibrium 
\remove{
if ${\pi}'(x)$ attains the optimal cost in the above optimization problem for all $x $, i.e.,}if \color{black}
\[
\pi'(x)\in \argmin_{u \in [0,\psi]}\{(\psi-u)(\psi-u+x)+u(u+p(\psi-\pi'(u)))+du^2\},
\]
for all $x \in [0,\psi]$. We characterize one such Nash equilibrium in the following. We define $k$-stage problems as in Section IV. 
\color{black}

\subsubsection*{A symmetric Nash equilibrium} Let us define sequences $a'_k,b'_k,k \ge -1$ as follows
\begin{align}
a'_k &= \begin{cases}
0, & \text{if } k = -1 \\
\frac{1}{2(2+d-pa'_{k-1})}, & \text{otherwise}
\end{cases} \label{eqn:a-case2-nash} \\
b'_k &= \begin{cases}
0, & \text{if } k = -1 \\
\frac{(2-p)\psi+pb'_{k-1}}{2(2+d-pa'_{k-1})}, & \text{otherwise}
\end{cases} \label{eqn:b-case2-nash}
\end{align}
We first state a few properties of the above sequences.
%\vspace{-0.2in}
\begin{lemma}
\label{lemma:nash-properties-a-b}
%\begin{enumerate}
    $(a)$ The sequence $a'_k, k \ge -1$ converges to \[a'_{\infty}\coloneqq \frac{4+2d}{4p}-\frac{\sqrt{(2+d)^2-2p}}{2p}.\] Also,  $a'_{\infty}<\frac{1+\frac{d}{2}}{p}$.\\
   $(b)$ The sequence $b'_k, k \ge -1$ converges to \[b'_{\infty}\coloneqq \frac{a'_{\infty}(2-p)\psi}{1-a'_{\infty}p}.\]
%\end{enumerate}
\end{lemma}
\begin{proof}
See~\cite[Appendix~IV-A]{b29}.
%See Appendix~\ref{appendix:nash-properties-a-b}.
\end{proof}
\begin{lemma}
 $0 < a'_k x+b'_k < \psi$ for all $0 \leq x \leq \psi,k \ge 0$.
\label{lemma:quad-game-no-caps}
\end{lemma}
\begin{proof}
%See Appendix~\ref{app:quad-game-no-caps}.
See~\cite[Appendix~IV-B]{b29}.
\end{proof}
\begin{theorem}
\label{thm:quad-nash-equilibrium}
$\bar{\pi}'=(\pi',\pi',...)$ is a symmetric Nash equilibrium where\\
\[
\pi'(x)=a'_{\infty}x+b'_{\infty}, \forall x \in [0,\psi].
\]
\end{theorem}
\begin{proof}
%See Appendix~\ref{appendix:quad-nash-equilibrium}.
See Appendix D.
\end{proof}

\remove{
Observe that, similar to the optimal policies in Section~\ref{sec:quad-waiting-cost}, the symmetric Nash equilibria given by the above theorem are also linear. 
}
In Figure~\ref{fig:Quadratic-image2}, we illustrate symmetric Nash equilibria for the same parameters as used to illustrate the optimal policies in Section~\ref{sec:quad-waiting-cost}. For $p =1$, the system attains a steady state wherein each user observes a pending service $0.5231$ (the fixed point of $\pi'(x) = x$ in Figure~\ref{fig:Quadratic-image2}) and defers the same amount of service. Consequently, the amount of offered service in each slot equals $\psi$ in the steady state.
\color{black}
\begin{figure}[!ht]
	\centering
	\includegraphics[width=0.4\textwidth]{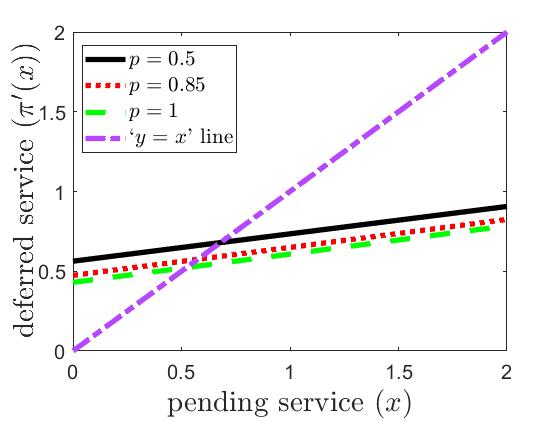}
	\caption{The Nash equilibrium policies for $\psi=2,d=1,p=\{0.5,0.85,1\}$.}
	\label{fig:Quadratic-image2}
\end{figure}
}
\section{Comparative Numerical Evaluation}
We now discuss the effect of the two waiting cost structures, fixed and quadratic, on the scheduling policies, deferred services and costs. For any given cost structure, we also compare the impact of performance criteria~(optimal scheduling vs strategic scheduling by selfish agents).   
%\subsection{Fixed waiting costs}

We begin with revisiting the optimal policies and Nash equilibria in Figures~\ref{fig:image1} and~\ref{fig:fixed-image2}. Recall that we had chosen $\psi = 2, d= 1,$ and  $p = 0.5, 0.7$ and $0.85$. \color{black}\textit{Notice that for the same parameters and pending service, e.g., for $p = 0.85$ and $x= 1$, the optimal policy may not defer any service whereas the Nash equilibrium may differ substantial amount~(larger than $1$). Also, the equilibria are not as sensitive to $p$ as the optimal policies.}\color{black} 

We show histograms of pending services seen by the jobs for both optimal policies and Nash equilibria in Figure~\ref{fig:fixed-image3,4}. We use $p  = 0.5$ and $p=0.85$ for left subfigure and right subfigure respectively. For $p = 0.85$, since $\pi^*(0)=0$, all the jobs see zero pending service under the optimal scheduling policy. When $\pi(0) > 0, (1-p)$ fraction of jobs see $y_0 = 0$ pending service, and for $k \geq 1$, $p^k (1-p)$ fraction of jobs see $y_k = \pi(y_{k-1})$ pending service~($\pi \equiv \pi^\ast$ for an optimal policy whereas $\pi \equiv \pi'$ for a Nash equilibrium). Notice that, for all $k \geq 0$, $y_k$ are upper bounded by, the fixed point of $\pi(x) = x$. \color{black}\textit{For $p =0.85$, under Nash equilibrium the system attains a steady state wherein each user observes a pending service $=1.2053$ (the fixed point of $\pi'(x) = x$ in Figure~\ref{fig:fixed-image2} and defers the same amount of service}\color{black}. Hence we see a mass $(1-p)\sum_{k = 9}^\infty p^k = p^9$ at $y_9 =1.2053$.
%For $p = 0.85$, owing to quantization of the pending services to arrive at a histogram, $y_k \approx 1.2053$ for $k \geq 9$, and hence we see a mass $(1-p)\sum_{k = 9}^\infty p^k = p^9$ at $y_9 =1.2053$.

Next, in Figure~\ref{fig:fixed-image6}(a), we show variation of time-average cost under both optimal policy and Nash equilibrium as $p$ is varied from 0 to 1. In Figure~\ref{fig:fixed-image6}(b), we show price of anarchy vs. $p$. We consider two sets of other parameters, $\psi = 2, d =1$ and $\psi = 2.5, d = 1.5$. For  $\psi = 2, d =1$ and $p \geq 0.58$, no service is deferred in any slot under the optimal policy. Hence, the optimal average cost is $p\psi^2$ in this regime. Under the Nash equilibrium for $p = 1$, the system attains a steady state wherein each user observes a pending service given by the fixed point of $\pi'(x) = x$ and defers the same amount of service. Consequently, the amount of offered service in each slot equals $\psi$ in the steady state, and the average cost equals $\psi^2+d$. \color{black}\textit{The ratio of the average cost under Nash equilibrium and the optimal cost, often termed as efficiency loss, is 1 for $p \gtrsim 0$ and $1 + \frac{d}{\psi^2}$  for $p =1$}\color{black}. We observe same phenomena for $\psi =2.5, d = 1.5$. 

\remove{
\subsection{Quadratic waiting costs}

Let us revisit the optimal policies and Nash equilibria in Figures~\ref{fig:Quadratic-image1} and~\ref{fig:Quadratic-image2}. Recall that we had chosen $\psi = 2, d= 1,$ and  $p = 0.5, 0.85$ and $1$.   Notice that for the same parameters, amount of deferred service under the optimal policy is more sensitive to pending service than amount of deferred service under the Nash equilibrium. As in the fixed delay cost case, the equilibria are not as sensitive to $p$ as the optimal policies. 

We show histograms of pending services seen by the jobs for both optimal policies and Nash equilibria in Figure~\ref{fig:Quadratic-image3,4}. We use $p  = 0.5$ and $p = 0.85$ for left and right subfigures respectively. In both the plots, $(1-p)$ fraction of jobs see $y_0 = 0$ pending service, and for $k \geq 1$, $p^k (1-p)$ fraction of jobs see $y_k = \pi(y_{k-1})$ pending service~($\pi \equiv \pi^\ast$ for an optimal policy whereas $\pi \equiv \pi'$ for a Nash equilibrium). As for the fixed waiting cost case, for all $k \geq 0$, $y_k$ are upper bounded by the fixed point of $\pi(x) = x$. Following the same reasoning as in the previous subsection, in the bottom subfigure of Figure~\ref{fig:Quadratic-image3,4} we see a big masses at the fixed points of $\pi^\ast(x) = x$ and $\pi'(x) = x$.

Finally, in Figure~\ref{fig:Quadratic-image5}, we show variation of time-average cost under both optimal policy and Nash equilibrium as $p$ is varied from $0$ to $1$. We consider two sets of other parameters, $\psi = 2, d =1$ and $\psi = 2.5, d = 1.5$. For  $p = 1$, no service is deferred in any slot under the optimal policy, and hence the optimal average cost is $\psi^2$. For $\psi = 2, d =1$ and $p =1$, under the Nash equilibrium, $\psi$ service is offered and $0.5232$ service is deferred in each slot, and hence the average cost is $2^2 + 0.5232^2$. As in the fixed waiting cost case, the efficiency loss is $1$ for $p \gtrsim 0$ and $1 + \frac{0.5232^2}{2^2}=1.0684$ for $p =1$. We make similar observations for $\psi =2.5, d = 1.5$. 
\color{black}
}
%\remove{
%\begin{subfigures}
%	\centering
\begin{figure}[!ht]	
\centering
	\includegraphics[width=0.95\textwidth]{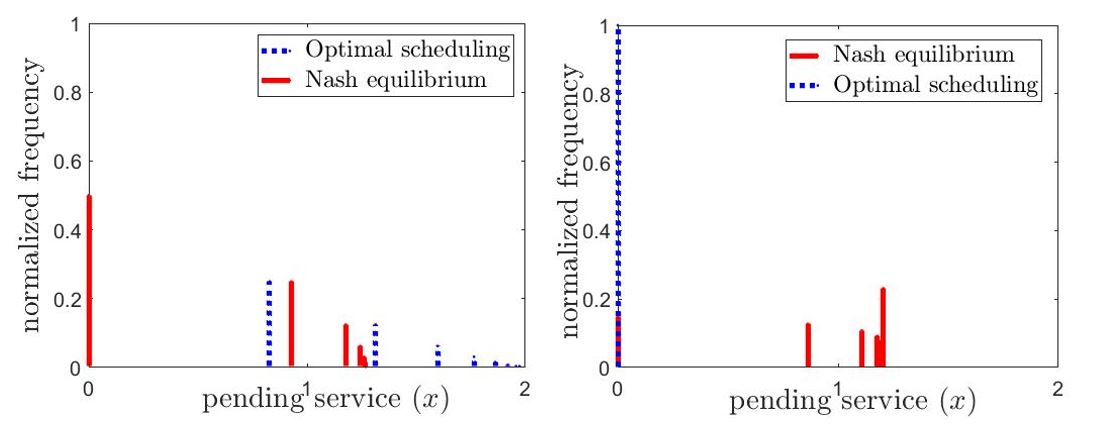}
	\caption{Fixed waiting costs: histogram of the pending services seen by the jobs for $\psi=2,d=1,p=0.5$ (left subfigure) and $p=0.85$ (right subfigure).}
	\label{fig:fixed-image3,4}
\end{figure}
\begin{figure}[!ht]
	\begin{subfigure}{0.5\textwidth}
	\centering
	\includegraphics[width=1\linewidth]{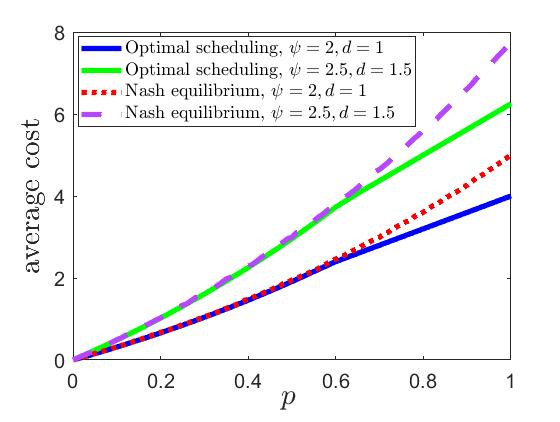}
%		\caption{Average cost vs. $p$ for $\psi=2,d=1$ and \\ ~~~$\psi=2.5,d=1.5$.}
	%	\label{fig:fixed-image5}
	\end{subfigure}
	\begin{subfigure}{0.5\textwidth}
	\centering
	\includegraphics[width=1\linewidth]{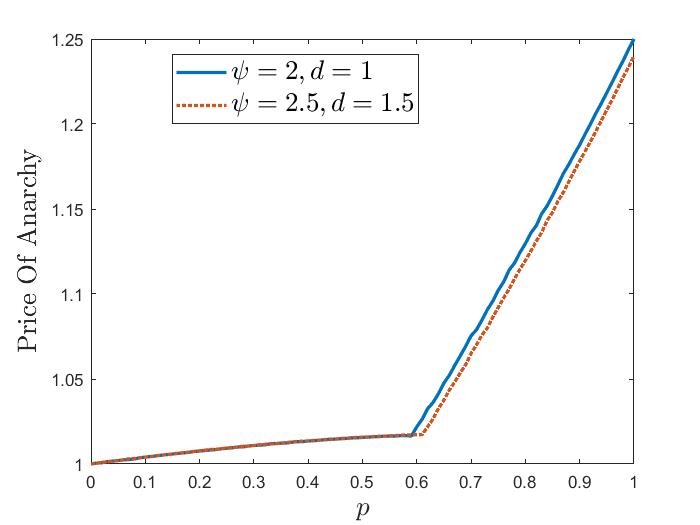}
%	\caption{Price of Anarchy vs. $p$ for $\psi=2,d=1$ and ~~~$\psi=2.5,d=1.5$.}

	\end{subfigure}
	\vspace{-0.1in}
	\caption{(a)Average cost vs. $p$ for $\psi=2,d=1$ and $\psi=2.5,d=1.5$.\\~~~(b)Price of Anarchy vs. $p$ for $\psi=2,d=1$ and $\psi=2.5,d=1.5$.}
			\label{fig:fixed-image6}
\end{figure}
%\begin{subfigures}
\remove{
\begin{figure}[!ht]
	
	\centering
	
	\includegraphics[width=0.65\textwidth]{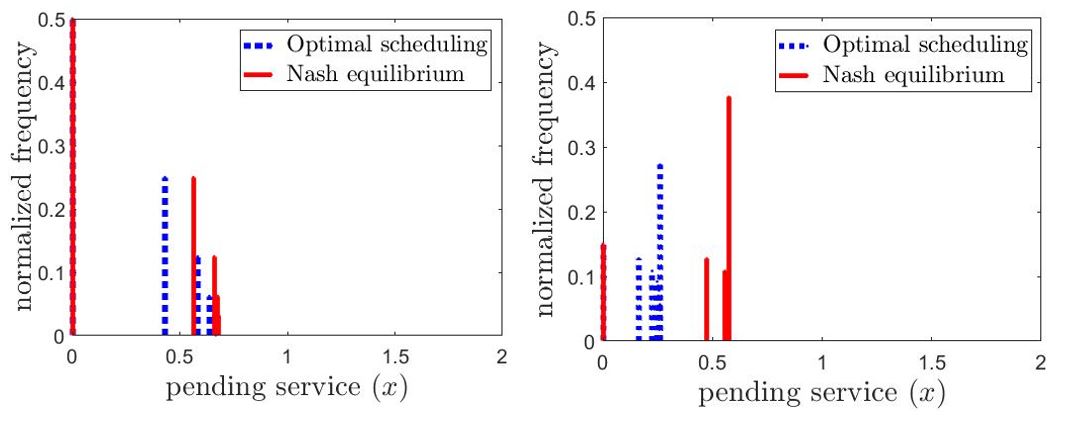}
	\caption{Quadratic waiting costs: histogram of the pending services seen by the jobs for $\psi=2,d=1,p=0.5$(left subfigure) and $p=0.85$(right subfigure).}
	\label{fig:Quadratic-image3,4}
\end{figure}
%\begin{figure}[!ht]
	
%	\centering
	
%	\includegraphics[width=0.29\textwidth]{Qimage4.jpg}
%	\caption{Quadratic waiting costs: histogram of the pending services seen by the jobs for $\psi=2,d=1,p=0.85$.}
%	\label{fig:Quadratic-image4}
%\end{figure}
%\end{subfigures}
\begin{figure}[!ht]
	
	\centering                           
	
	\includegraphics[width=0.4\textwidth]{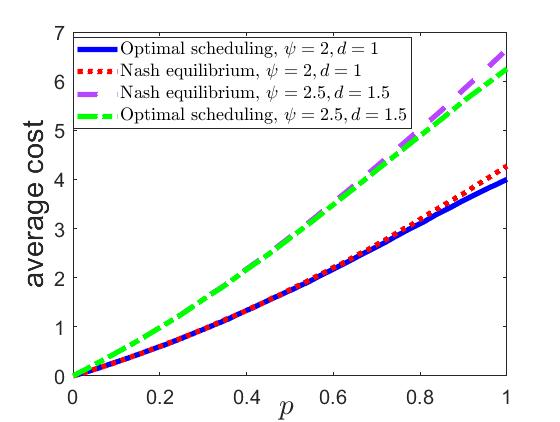}
	\caption{Quadratic waiting costs: average cost vs. $p$ for $\psi=2,d=1$ and $\psi=2.5,d=1.5$.}
	\label{fig:Quadratic-image5}
\end{figure}
%\vspace{-0.4in}
}
\section{Conclusion}
We studied service scheduling in slotted systems with Bernoulli request arrivals, quadratic service costs, fixed and quadratic waiting costs and
service delay guarantee of two slots. In the case of fixed waiting cost, we obtained optimal policy in special cases~(Proposition~\ref{prop:exact-solution-2-regions}). We proposed an approximate policy that is an upper bound on the optimal policy~(Proposition~\ref{lemma:optimal-approx-0}). Finally, we characterize the optimal policy for all cases in Theorem~\ref{thm:opt}. Subsequently, we also provided a symmetric Nash equilibrium when the parameters satisfy certain conditions~(Proposition~\ref{thm:fixed-nash-equilibrium}). And the total characterization of Nash equilibrium can be found in Theorem~\ref{thm:nash}.

\remove{
In case of quadratic waiting costs we provided explicit optimal policy~(Theorem~\ref{thm:optimal-policy}). For competing requests, we derived a symmetric Nash equilibrium~(Theorem~\ref{thm:quad-nash-equilibrium}).}

Our future work entails extending the results to the scenario where service delay guarantee is of three or more slots. We would also like to derive online algorithms for the cases where service request statistics are unknown.
%\subsubsection*{Acknowledgments:}
%\color{black}
%\remove{
\subsubsection*{Acknowledgments}
The first and second authors acknowledge support from Research Fellowships of Visvesvaraya PhD Scheme and INSPIRE Faculty Research Grant (DSTO-1363).
%}
\remove{

\bibliography{main}
}
%\section*{References}
\bibliography{references}
%\bibliographystyle{ieeetr}
%\bibliography{references}
\appendix
%\def\thesubsectiondis{\Roman{subsection}.}
%\newtheorem{apptheorem}{Theorem}[subsection]
%\newtheorem{applemma}[apptheorem]{Lemma}
%\newpage
%\appendices
%\def\thesectiondis{\Roman{section}}
%\newtheorem{apptheorem}{Theorem}[subsection]
%\newtheorem{applemma}[apptheorem]{Lemma}
%\vspace{-0.15in}
%%%%%%%%%%%%%%%%%%%%%%%%%%%%%%%%%%%%%%%%%%%%%%%%%%%%%%%%%%%%%%%%%%%%%%%%%%%%
%\section{}
%\remove{
\section{Proof of Lemma~\ref{lemma:expectation-Ai}}
\label{Appendix:expectation-Ai}
 Recall that a new renewal epoch commences in a slot if the slot has a service request but the preceding slot does not. Hence $A_{i+1} - A_i \geq 2$. Further, $A_{i+1} - A_i = j$ if and only if for some $k \in \{1,1,\dots,j-1\}$, slots $A_i+1,\dots,A_i+k-1$ have service requests,  slots $A_i+k,\dots,A_i+j-1$ have no service requests and slot $,A_i+j$ has a service request. So
Therefore,
 \[
 P(A_{i+1}-A_{i}=j)=\sum_{k=1}^{j-1}p^j(1-p)^{j-k}.
 \]
Hence,
\begin{align*}
     \E(A_{i+1}-A_i)=&\sum_{j=2}^{\infty}j\sum_{k=1}^{j-1}p^k(1-p)^{j-k}\\
     =&\sum_{j=2}^{\infty}j(1-p)^j\sum_{k=1}^{j-1}\bigg(\frac{p}{1-p}\bigg)^k\\
     =&\sum_{j=2}^{\infty}j\frac{p(1-p)}{1-2p}((1-p)^{j-1}-p^{j-1})\\
     =&\frac{1}{1-2p}\bigg[p\sum_{j=2}^{\infty}j(1-p)^j-(1-p)\sum_{j=2}^{\infty}jp^j\bigg]\\
     =&\frac{1}{1-2p}\bigg[\frac{1-p}{p}-\frac{p}{1-p}\bigg]\\
     =&\frac{1}{p(1-p)}.
 \end{align*}
 \remove{

 Following the definition of $\{A_i\}_{i \ge 1}$, to obtain the difference of $A_{i+1}$ and $A_i$ as $j$, slots $A_i+j-1$ and $A_i+j$ should have no arrival and an arrival respectively. Slots $A_i+1$ to $A_i+j-2$ can contain a sequence of arrivals followed by no arrivals or all the above-mentioned slots may be with/without an arrival . However, by definition of $A_i$s the above-mentioned slots cannot have a sequence of no arrivals followed by a sequence of arrivals. Therefore, $j-2$ slots can have $k-2$ different sequences. Those sequences are
 \begin{enumerate}
     \item All $k-2$ have a sequence of no arrivals.
     \item Slot $A_i+1$ has an arrival and slots $A_i+2,\cdots,A_i+j-2$ have no arrivals.
     \item Slots $A_i+1,A_i+2$ have an arrival and slots $A_i+3,\cdots,A_i+j-2$ have no arrivals.
     \item Similarly, other sequences can be tabulated.
     \item All $k-2$ have a sequence of arrivals.
 \end{enumerate}
 Therefore,
 \[
 P(A_{i+1}-A_{i}=j)=\sum_{k=1}^{j-1}p^j(1-p)^{j-k}
 \]
 For example, the event $A_{i+1}-A_i=4$ would arise in one of the following three cases
 \begin{enumerate}
     \item Slots $A_i+1,A_i+2,A_i+3$ have no arrivals and slot $A_i+4$ has an arrival.
     \item Slots $A_i+2,A_i+3$ have no arrivals and slots $A_i+1,A_i+4$ have an arrival.
     \item Slot $A_i+3$ has no arrivals and slots $A_i+1,A_i+2,A_i+4$ have an arrival.
 \end{enumerate}
 Let us now calculate expectation.
 \begin{align*}
     \E(A_{i+1}-A_i)=&\sum_{j=2}^{\infty}j\sum_{k=1}^{j-1}p^k(1-p)^{j-k}\\
     =&\sum_{j=2}^{\infty}j(1-p)^j\sum_{k=1}^{j-1}\bigg(\frac{p}{1-p}\bigg)^k\\
     =&\sum_{j=2}^{\infty}j\frac{p(1-p)}{1-2p}((1-p)^{j-1}-p^{j-1})\\
     =&\frac{1}{1-2p}\bigg[p\sum_{j=2}^{\infty}j(1-p)^j-(1-p)\sum_{j=2}^{\infty}jp^j\bigg]\\
     =&\frac{1}{1-2p}\bigg[\frac{1-p}{p}-\frac{p}{1-p}\bigg]\\
     =&\frac{1}{p(1-p)}
 \end{align*}
 }
%\section{}
\subsection{Proof of Lemma~\ref{lma:control-fixed-delay-cost}}
 \label{appendix:control-fixed-delay-cost}
Let us define
\[
%\label{eqn:approx-cost}
g(x,u)=(\psi+x-u)^2+\bar{d}+au^2+bu+c
\]
and
\[
\mu(x)=\argmin_{u\in [0,\psi]}g(x,u).
\]
We see that
\[
\mu(x)=\left[\frac{x+\psi-\frac{b}{2}}{1+a}\right]^\psi_0,
\]
and
\[
\pi(x) = \begin{cases}
\mu(x),&\text{ if } g(x,\mu(x)) \le (\psi+x)^2 + c \\
0,&\text{ otherwise}.
\end{cases}
\]
We divide the rest of the proof in the following three cases.
{\it Case (a)~$(1-a)\psi \leq \frac{b}{2} \leq \psi$:} In this case
$\psi - \frac{b}{2} \geq 0$ and $\frac{2\psi-b/2}{1+a} \leq \psi$, and hence
\[
\mu(x)=\frac{x+\psi-\frac{b}{2}}{1+a}.
\]
Further, following simple algebra, we can verify that $g(x,\mu(x)) \le (\psi+x)^2 + c$ if and only if $x \geq \theta(a,b)$, implying that
\[
\pi(x) = \begin{cases}
\frac{x+\psi-\frac{b}{2}}{1+a},&\text{ if } x \geq \theta(a,b)\\
0,&\text{ otherwise}.
\end{cases}
\]
{\it Case (b)~$\frac{b}{2} > \psi$:} Here
\[
\mu(x) = \begin{cases}
\frac{x+\psi-\frac{b}{2}}{1+a},&\text{ if } x \ge \frac{b}{2} - \psi\\
0,&\text{ otherwise}.
\end{cases}
\]
But $\frac{b}{2} - \psi < \theta(a,b)$. Clearly, $\pi(x) = 0$ for all $x <\frac{b}{2} - \psi$. In fact, following similar arguments as in Case~(a),
\[
\pi(x) = \begin{cases}
\frac{x+\psi-\frac{b}{2}}{1+a},&\text{ if } x \geq \theta(a,b)\\
0,&\text{ otherwise}.
\end{cases}
\]
{\it Case (c)~$\frac{b}{2} < (1-a)\psi$:} Now
\[
\mu(x) = \begin{cases}
\frac{x+\psi-\frac{b}{2}}{1+a},&\text{ if } x \leq a\psi+\frac{b}{2}\\
\psi,&\text{ otherwise}.
\end{cases}
\]
Let us divide this case into two subcases. Let us first assume that $a\psi+\frac{b}{2} \geq \theta(a,b)$. In this case, following similar arguments as in Case~(a),
\[
\pi(x) = \begin{cases}
0,&\text{ if } x < \theta(a,b) \\
\frac{x+\psi-\frac{b}{2}}{1+a},&\text{ if } \theta(a,b) \le x \le a\psi+\frac{b}{2} \\
\psi,&\text{ otherwise}.
\end{cases}
\]
Now let us consider that $a\psi+\frac{b}{2} < \theta(a,b)$. In this case,
following similar arguments as in Case~(a), $\pi(x) = 0$ for all $x \leq a\psi + \frac{b}{2}$.   Further, for $x > a\psi + \frac{b}{2}$,
\[
\pi(x) = \begin{cases}
0,&\text{ if } g(x,\psi) \geq (\psi+x)^2 + c \\
\psi,&\text{ otherwise}.
\end{cases}
\]
It can be easily checked that $g(x,\psi) \ge (\psi+x)^2 + c$ if and only if
$x \leq \frac{(a-1)\psi+b}{2}+\frac{d}{2\psi}$. Expectedly,
$\frac{(a-1)\psi+b}{2}+\frac{d}{2\psi} > \theta(a,b)$. Hence,
\[
\pi(x) = \begin{cases}
0,&\text{ if } x \leq  \frac{(a-1)\psi+b}{2}+\frac{d}{2\psi}\\
\psi,&\text{ otherwise}.
\end{cases}
\]
Combining Cases~(a),(b) and (c) yields the desired expressions for $\pi(x)$ in various scenarios.

\remove{
\subsection{Proof of Lemma~\ref{lemma:approx-policy-1}}
 \label{appendix:approx-policy-1}
 Using same format as in~\eqref{eqn:jk-general-form}, $\tilde{J}(x)$ can be written as
 \begin{align}
%\label{eqn:approx-cost}
\tilde{J}(x)=&\min\{(\psi+x)^2+c'_{\infty},\min_{u\in [0,\psi]}(\psi+x-u)^2\\ \nonumber
&+d+a'_{\infty}u^2+b'_{\infty}u+c'_{\infty}\}
\end{align}
Clearly, $\pi'(x)$ given by~\eqref{eqn:pi-prime} is the optimal solution of the second objective. Therefore to find $x$ at which minimizer changes from $0$ to a positive value we look at the following
\begin{equation*}
    (x+\psi)^2+c'_{\infty}<\frac{a'_{\infty}(x+\psi)^2-\frac{{b'_{\infty}}^2}{4}+b'_{\infty}(x+\psi)}{1+a'_{\infty}}+d+c'_{\infty}.
\end{equation*}
Equivalently,
\begin{equation*}
    (x+\psi)^2<-\frac{{{b'_{\infty}}^2}}{4}+b'_{\infty}(x+\psi)+d(1+a'_{\infty}).
\end{equation*}
Or,
\begin{equation}
\label{eqn:x-inf-stage}
    x<\sqrt{d(1+a'_{\infty})}+\frac{b'_{\infty}}{2}-\psi=\sqrt{2d}-\psi
\end{equation}
}
\section{Proof of Lemma~\ref{lemma:bar-a-b-convergence}}
\label{appendix:bar-a-b-convergence}
 $(a)$ Notice the mapping $a \mapsto 1 - \frac{p}{1+a}$ is monotonically increasing.
Further, $\bar{a}_0 = 1$ and $\bar{a}_1 = 1-\frac{p}{2} < \bar{a}_0$.
Therefore the sequence $\bar{a}_k, k\geq0$ is monotonically decreasing.
%It is also nonnegative, and so, lower bounded.
Hence it converges to $a_\infty$, the positive fixed point
of $a = 1 - \frac{p}{1+a}$.\\
  $(b)$  We prove via induction that $\bar{b}_i \bar{b}_i < \bar{b}_{i-1}$ for all $i \ge 1$. Recall that $\bar{b}_0=2p\psi$. Hence,
  from~\eqref{eqn:bar-b-def}, $\bar{b}_1 = p(1+p)\psi$.
Clearly, $\bar{b}_1 < \bar{b}_0$. Now
assume that $\bar{b}_{i} < \bar{b}_{i-1}$ for some $i \geq 1$. This implies~(see~\eqref{eqn:bar-b-def})
\[
\bar{b}_{i-1} > \frac{p(2\psi \bar{a}_{i-1}+\bar{b}_{i-1})}{1+\bar{a}_{i-1}},
\]
or equivalently,
\begin{equation}
\label{eqn:i-1-expression}
\bar{b}_{i-1} > \frac{2p\psi \bar{a}_{i-1}}{1+\bar{a}_{i-1}-p}.
\end{equation}
Hence
\begin{align*}
\bar{b}_i &=  \frac{p(2\psi \bar{a}_{i-1}+\bar{b}_{i-1})}{1+\bar{a}_{i-1}}\\
&>\frac{2p\psi\bar{a}_{i-1}}{1+a_{i-1}}\left(1+\frac{p}{1+\bar{a}_{i-1}-p}\right),\\
&= \frac{2p\psi \bar{a}_{i-1}}{1+\bar{a}_{i-1}-p}\\
&> \frac{2p\psi \bar{a}_{i}}{1+\bar{a}_{i}-p},
\end{align*}
where the first inequality follows from~\eqref{eqn:i-1-expression} and the last one from the fact that $\bar{a}_{i-1}>\bar{a}_i$. The resulting inequality is equivalent to~(again see~\eqref{eqn:bar-b-def})
\[
\bar{b}_i > \frac{p(2\psi \bar{a}_i+\bar{b}_i)}{1+\bar{a}_i} = \bar{b}_{i+1}.
\]
This completes the induction.

Next note that $\bar{b}_i, i\geq0$ are also nonnegative. Therefore, since $\bar{a}_k, k \geq 0$ converge to $\bar{a}_\infty$, $\bar{b}_i, i\geq0$ converge to $b_\infty$, the unique fixed point of
\[
b =\frac{p(2\bar{a}_{\infty}\psi+b)}{1+\bar{a}_{\infty}}.
\]

\subsection{Proof of Proposition~\ref{prop:pi-star}}
\label{appendix:pi-star}
Recall that $J_{-1}(x) = x^2$. Substituting $k=0$ in~\eqref{eqn:reference-fixed-cost}, ${a}_0 = 1$ and ${b}_0 = 0$. Observe that  $a_0\psi+\frac{b_0}{2}=\psi$. Therefore, using~\eqref{eqn:control-fixed-delay-cost},
\[
\pi_0(x)=
\begin{cases}
0, &\text{ if } x \leq \theta(a_0,b_0)\\
\frac{x+\psi-\frac{b_0}{2}}{1+a_0}, &\text{ otherwise }.
\end{cases}
\]
%where $x'' = \sqrt{d(1+a_0)}+\frac{b_0}{2}-\psi$, and so,
Let us define
\[
J_{00}(x)\coloneqq(\psi+x)^2 \text{ and }\\
J_{01}(x)\coloneqq(\psi+x-\pi_0(x))^2+d+\pi_0(x)^2.
\]
So,
\[
J_0(x)=
\begin{cases}
J_{00}(x), &\text{ if } x \leq \theta(a_0,b_0)\\
J_{01}(x), &\text{ otherwise}.
\end{cases}
\]
Note that the function $J_0(x)$ can be one of the two quadratic functions $J_{00}(x),J_{01}(x)$ depending upon $x$. Observe that $\pi_0(x)$ is piece-wise linear but discontinuous with a jump at $\theta(a_0,b_0)$. However, by definition of $\theta(a_0,b_0)$
\[
J_{00}(\theta(a_0,b_0))=J_{01}(\theta(a_0,b_0)).
\]
For all $x<\theta(a_0,b_0)$, $J_{00}(\theta(a_0,b_0))<J_{01}(\theta(a_0,b_0))$ and for all $x>\theta(a_0,b_0)$, $J_{00}(x)>J_{01}(x)$. We define $a_{1,0},b_{1,0},c_{1,0},a_{1,1},b_{1,1}$ and $c_{1,1}$  as follows 
\begin{equation*}
%\label{eqn:reference-fixed-cost}
pJ_{00}(u)+(1-p)u^2={a}_{1,0} u^2+{b}_{1,0} u+{c}_{1,0}.
\end{equation*}
and
\begin{equation*}
%\label{eqn:reference-fixed-cost}
pJ_{01}(u)+(1-p)u^2={a}_{1,1} u^2+{b}_{1,1} u+{c}_{1,1}.
\end{equation*}
%Note that we are using the second term under minimization in~\eqref{eqn:j-prime-k} with $k=1$ to define $\pi''_1(x)$
%Let $\pi'_1(x)$ be defined as
%%\pi'_1(x)= \argmin_{u\in [0,\psi]}\{(\psi+x-u)^2 d+pJ_{0}(u)+(1-p)u^2\}\}. \]
Using~\eqref{eqn:reference-fixed-cost} for $k=1$, we obtain the following,
\begin{align}
\label{eqn:fixed-opt-a-1}
a_1&=
\begin{cases}
a_{1,0}=1, &\text{if } u \leq \theta(a_0,b_0)\\
a_{1,1}=1-\frac{p}{1+a_0}, &\text{otherwise},
\end{cases}
\end{align}
\begin{align}
\label{eqn:fixed-opt-b-1}
b_1&=
\begin{cases}
b_{1,0}=2p\psi, &\text{if } u \leq \theta(a_0,b_0)\\
b_{1,1}=\frac{p(2a_0\psi+b_0)}{1+a_0}, &\text{otherwise},
\end{cases}
\end{align}
\begin{align}
\label{eqn:fixed-opt-c-1}
\text{and } c_1&=
\begin{cases}
c_{1,0}=p\psi^2,&\text{if } u \leq \theta(a_0,b_0)\nonumber\\
c_{1,1}=p\big(\frac{a_0 \psi^2 +b_0\psi- \left(\frac{b_0}{2}\right)^2}{1+a_0}+d\big),&\text{otherwise}.\nonumber
\end{cases}
\end{align}
%As $a_{1,0}\psi+\frac{b_{1,0}}{2}>\psi$ and $a_{1,1}\psi+\frac{b_{1,1}}{2}=\psi$, it can be noted that $\pi_1(x)$ can be either $\frac{x+\psi-\frac{b_{1,0}}{2}}{1+a_{1,0}} \text{ or }
%\frac{x+\psi-\frac{b_{1,1}}{2}}{1+a_{1,1}}\text{ or } 0$.
Let us now define the following fictitious cost function.
\[
J'_1(x)=\min_{u\in [0,\psi]}\{(\psi+x-u)^2+d
+a_1u^2+b_1u+c_1\}\},
\]
Let $\pi'_1(x)$ be the optimal action in the R.H.S . $J'_1(x)$ can be equivalently written as
\begin{align*}
    J'_1(x)=&\min\bigg\{\min_{u\in [0,\theta(a_0,b_0)]}\{(\psi+x-u)^2+d
+a_{1,0}u^2+b_{1,0}u+c_{1,0}\},\\&
\min_{u\in [\theta(a_0,b_0),\psi]}\{(\psi+x-u)^2+d
+a_{1,1}u^2+b_{1,1}u+c_{1,1}\}\bigg\}
\end{align*}
Let us define
\begin{align*}
    h_{a,b,c}(x,u)=(\psi+x-u)^2+d +au^2+b^2+c\\
    \text{and }\pi_{a,b}(x)=\argmin_{u \in [0,\psi]}h_{a,b,c}(x,u)
\end{align*}
$(a)$ If $\pi_{a_{10},b_{10}}(x)<\theta(a_0,b_0)$ and $\pi_{a_{11},b_{11}}(x)<\theta(a_0,b_0)$ then
\begin{align*}
\min_{u\in[0,\psi]}h_{a_{10},b_{10}}(x,u)&=\min_{u \in [0,\theta_(a_0,b_0)]}h_{a_{10},b_{10}}(x,u)\\
&<\min_{u \in [0,\psi]}h_{a_{11},b_{11}}(x,u)\\
&\le\min_{u \in [\theta_(a_0,b_0),\psi]}h_{a_{11},b_{11}}(x,u).\\
\end{align*}
Hence in this case $\pi'_1(x)=\pi_{a_{10},b_{10}}(x)$.\\
$(b)$ If $\pi_{a_{10},b_{10}}(x)\ge\theta(a_0,b_0)$ and $\pi_{a_{11},b_{11}}(x)\ge\theta(a_0,b_0)$ then
\begin{align*}
\min_{u\in[0,\psi]}h_{a_{11},b_{11}}(x,u)&=\min_{u \in [\theta_(a_0,b_0),\psi]}h_{a_{11},b_{11}}(x,u)\\
&<\min_{u \in [0,\psi]}h_{a_{10},b_{10}}(x,u)\\
&\le\min_{u \in [0,\theta_(a_0,b_0)]}h_{a_{10},b_{10}}(x,u).\\
\end{align*}
Hence in this case $\pi'_1(x)=\pi_{a_{11},b_{11}}(x)$.
\remove{
\[\pi_{a_{10},b_{10}}(x)\coloneqq \frac{x+\psi-\frac{b_{1,0}}{2}}{1+a_{1,0}} \text{ and }
\pi_{a_{11},b_{11}}(x)\coloneqq\frac{x+\psi-\frac{b_{1,1}}{2}}{1+a_{1,1}}.\]
As $0<\pi_{a_{10},b_{10}}(x)<\psi$ and $0<\pi_{a_{11},b_{11}}(x)\le \psi, \forall x \in [0,\psi]$ (or)As $a_{1,0}\psi+\frac{b_{1,0}}{2}>\psi$ , $\psi>\frac{b_{1,0}}{2}>\frac{b_{1,1}}{2}$ and $a_{1,1}\psi+\frac{b_{1,1}}{2}=\psi$, it can be noted that $\pi'_1(x)$ can be either $\pi_{a_{10},b_{10}}(x)\text{ or } \pi_{a_{11},b_{11}}(x)$. By~\eqref{eqn:fixed-opt-a-1} and~\eqref{eqn:fixed-opt-b-1} we can see that
\begin{align}
\label{eqn:fixed-pi-prime-general}
\pi'_1(x)=
\begin{cases}
\pi_{a_{10},b_{10}}(x), &\text{ if } \pi_{a_{10},b_{10}}(x)\le \theta(a_0,b_0)\text{ and }\pi_{a_{11},b_{11}}(x)\le \theta(a_0,b_0)\\
\pi_{a_{11},b_{11}}(x), &\text{ if } \pi_{a_{10},b_{10}}(x)> \theta(a_0,b_0)\text{ and }\pi_{a_{11},b_{11}}(x)> \theta(a_0,b_0)
\end{cases}
\end{align}
}
Using~\eqref{eqn:fixed-opt-a-1} and~\eqref{eqn:fixed-opt-b-1}, it can be easily verified that there exist $\tilde{x}<0\text{ and }y<0$ such that $\pi_{a_{10},b_{10}}(\tilde{x})=\pi_{a_{11},b_{11}}(\tilde{x})=y$. Let us look at the figure~\ref{fig:pi-prime-graph} that illustrates the same. Please note that the solid line and dashed line are plots corresponding to $\pi_{a_{10},b_{10}}(x)$ and $\pi_{a_{11},b_{11}}(x)$ respectively. Thicker lines refer to the piece-wise linear discontinuous function $\pi_1'(x)$.
\begin{figure}[!ht]
	
	\centering
	
	\includegraphics[width=0.75\textwidth]{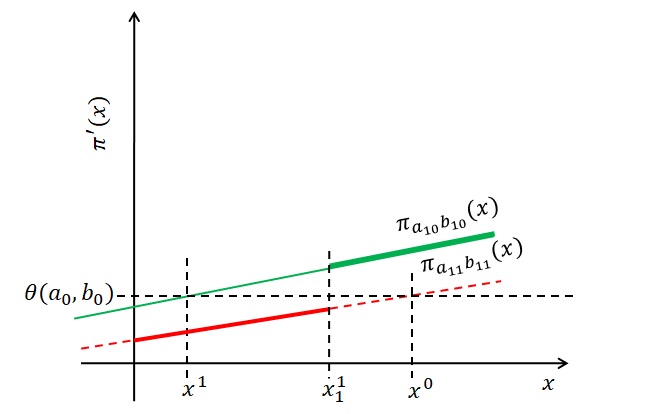}
	\caption{Sample plot of $\pi'(x)$ vs $x$. $x^1,x^0$ in the image are referred to as $x',x''$ in the text}
	\label{fig:pi-prime-graph}
\end{figure}
Let us define $x'$ and $x''$ as follows
\[
x' \coloneqq \max\{x:\pi_{a_{10},b_{10}}(x)<\theta(a_0,b_0) \text{ and }\pi_{a_{11},b_{11}}(x)<\theta(a_0,b_0)\}
\]
\[
x'' \coloneqq \max\{x:\pi_{a_{10},b_{10}}(x)\ge\theta(a_0,b_0) \text{ and } \pi_{a_{11},b_{11}}(x)\ge\theta(a_0,b_0)\}
\]
 Using figure~\ref{fig:pi-prime-graph}, case$(a)$ and case$(b)$, we can see that $\pi'(x)$ can be written as
\begin{align}
\label{eqn:fixed-pi-prime-partial}
\pi'_1(x)=
\begin{cases}
\pi_{a_{10},b_{10}}(x), &\text{ if } x<x'\\
\pi_{a_{11},b_{11}}(x), &\text{ if } x> x''
\end{cases}
\end{align}
Using~\eqref{eqn:fixed-pi-prime-partial} we can write the following
\begin{align}
\label{eqn:fixed-J-prime-partial}
J'_1(x)=
\begin{cases}
\frac{a_{1,0}}{1+a_{1,0}}x^2+(\frac{2\psi a_{1,0}+b_{1,0}}{1+a_{1,0}})x
+\frac{a_{1,0}\psi^2+b_{1,0}\psi-\frac{b_{1,0}^2}{4}}{(1+a_{1,0})}\\+c_{1,0}+d \eqqcolon
A_0x^2+B_0x+C_0, \text{ if } x<x'\\
\frac{a_{1,1}}{1+a_{1,1}}x^2+(\frac{2\psi a_{1,1}+b_{1,1}}{1+a_{1,1}})x
+\frac{a_{1,1}\psi^2+b_{1,1}\psi-\frac{b_{1,1}^2}{4}}{(1+a_{1,1})}\\+c_{1,1}+d\eqqcolon A_1x^2+B_1x+C_1, \text{ if } x> x''.
\end{cases}
\end{align}
Also,
 \begin{align}
\label{eqn:fixed-pi-prime-conditional}
\pi'_1(x)=
\begin{cases}
\pi_{a_{10},b_{10}}(x), &\text{ if } x' \ge \psi\\
\pi_{a_{11},b_{11}}(x), &\text{ if }  x'' \le 0
\end{cases}
\end{align}
Therefore, we only discuss the case $x' > 0$ and $x''<\psi$\footnote{When $x'\ge \psi$, $[\bar{x}_1,\psi]$ is an empty set. Similarly, when $x''\le 0$, then $[\bar{x}_0,\bar{x}_1]$ is an empty set.}. Let us observe that solution to the following equation gives us $\bar{x}_1 \in [x',x''] $
\begin{align}
\label{eqn:comparision-roots}
A_0x^2+B_0x+C_0=A_1x^2+B_1x+C_1
\end{align}
As $x'>0$, $C_0<C_1$. Also, $A_0>A_1$ as $a_{1,0}>a_{1,1}$. Thus the product of roots of~\eqref{eqn:comparision-roots} is negative. Hence there exists a $\bar{x}_1 \in [x',x'']$ such that
\[
\pi'_1(x)=
\begin{cases}
\pi_{a_{10},b_{10}}(x), &\text{ if } 0 < x
\leq \bar{x}_1\\
\pi_{a_{11},b_{11}}(x), &\text{ if } \bar{x}_1 < x \leq \psi
\end{cases}
\]
It can be noted that $\bar{x}_1$ is a function of $a_{1,0},a_{1,1},b_{1,0},b_{1,1},c_{1,0}$ and $c_{1,1}$ but not easy to determine. %\footnote{Here we assume that $0 \leq x^1_1 < \psi$. We may have a different expression for a different ordering. For instance, if $0 > x^1_1$, we may have
%\[
%\pi'_1(x)=\bigg[\frac{x+\psi-\frac{b_{1,1}}{2}}{1+a_{1,1}}\bigg]_0^\psi
%\]	
%where $0 \leq x^1_1 < \psi$.}	
 Further, we study the optimal control for $1$-stage problem.
\begin{align*}
J_1(x)=\min\{(\psi+x)^2+pJ_{0}(0), \min_{u\in [0,\psi]}\{(\psi&+x-u)^2+d\\
+a_1u^2&+b_1u^2+c_1\}\}.
\end{align*}
\remove{
where
\begin{align*}
a_1&=
\begin{cases}
a_{1,0}=1, &\text{if } u \leq x''\\
a_{1,1}=1-\frac{p}{1+a_0}, &\text{otherwise},
\end{cases} \\
b_1&=
\begin{cases}
b_{1,0}=2p\psi, &\text{if } u \leq x''\\
b_{1,1}=\frac{p(2a_0\psi+b_0)}{1+a_0}, &\text{otherwise},
\end{cases}\\
\text{and } c_1&=
\begin{cases}
c_{1,0}=p\psi^2, &\text{if } u \leq x''\\
c_{1,1}=p\left(\frac{a_0 \psi^2 + \left(\frac{b_0}{2}\right)^2}{1+a_0}+d\right), &\text{otherwise}.
\end{cases}
\end{align*}

Note that the expressions for $a_1$ and $b_1$ can be deduced by
}
Also $pJ_{0}(0) = c_{1,0}$ if $0 \leq x''$ and $c_{1,1}$ otherwise. As we are discussing a case where $x' > 0$ and $x''<\psi$, $pJ_{0}(0) = c_{1,0}$. Let us consider the following fictitious cost functions.
\begin{align*}
J_{1,0}(x)=\min\big\{(\psi+x)^2&+c_{1,0}, \min_{u\in [0,\theta(a_0,b_0)]}\{(\psi+x-u)^2\\& +d
+a_{1,0}u^2+b_{1,0}u^2+c_{1,0}\}\big\}
\end{align*}
and
\begin{align*}
J_{1,1}(x)=\min\big\{(\psi+x)^2&+c_{1,0}, \min_{u\in [\theta(a_0,b_0),\psi]}\{(\psi+x-u)^2\\ &+d
+a_{1,1}u^2+b_{1,1}u^2+c_{1,1}\}\big\}.
\end{align*}
Let $\pi_{1,0}(x)\text{ and }\pi_{1,1}(x)$ be the optimal functions of $J_{1,0}(x)\text{ and }J_{1,1}(x)$ respectively. It can be noted that $a_{1,0}\psi+b_{1,0}=\psi$ and $a_{1,1}\psi+b_{1,1}=\psi$. Therefore using~\eqref{eqn:control-fixed-delay-cost},
\begin{equation}
\label{eqn:pi-10}
\pi_{1,0}(x)=
\begin{cases}
0, &\text{ if } 0\leq x \leq \theta(a_{1,0},b_{1,0})\\
\pi_{a_{10},b_{10}}(x), &\text{ if } \theta(a_{1,0},b_{1,0}) < x \leq \psi
\end{cases}
\end{equation}
and
\begin{equation}
\label{eqn:pi-11}
\pi_{1,1}(x)=
\begin{cases}
0, &\text{ if } 0\leq x \leq \theta(a_{1,1},b_{1,1})\\
\pi_{a_{11},b_{11}}(x), &\text{ if } \theta(a_{1,1},b_{1,1}) < x \leq \psi
\end{cases}
\end{equation}
From~\eqref{eqn:pi-10},~\eqref{eqn:pi-11} we see that optimal function of $\pi_1(x)$ can be written as
\[
\pi_1(x)=
\begin{cases}
0, &\text{ if } 0\leq x \leq \bar{x}_0\\
\frac{x+\psi-\frac{b_{1,0}}{2}}{1+a_{1,0}}, &\text{ if } \bar{x}_0 < x
\leq \bar{x}_1\\
\frac{x+\psi-\frac{b_{1,1}}{2}}{1+a_{1,1}}, &\text{ if } \bar{x}_1 < x \leq \psi
\end{cases}
\]
where $\bar{x}_0=\min\{\theta(a_{1,0},b_{1,0}),\theta(a_{1,1},b_{1,1})\}$. We can similarly argue that the optimal policy $\pi^\ast(\cdot)$ is of the form~(a few of the intervals $(\bar{x}_i,\bar{x}_{i+1}]$ can be empty sets)
\begin{equation*}
%\label{eqn:exact-policy-fixed-delay}
{\pi^*}(x)=
\begin{cases}
0, &\text{ if } 0 \leq x \leq \bar{x}_0  \\
\frac{x+\psi-\frac{\bar{b}_i}{2}}{1+\bar{a}_i}, &\text{  if } \bar{x}_i < x \leq \bar{x}_{i+1}, i \ge 0.
\end{cases}
\end{equation*}
\remove{
as $\pi'_1(x)$ is increasing in $x$, using arguments similar to~\eqref{eqn:control-fixed-delay-cost}, we see that the optimal control for this problem has the form
\remove{It can be seen that can also be written as
\begin{align*}
J_1(x)=&\min\{(\psi+x)^2+pJ_{0}(0), \min_{u\in [0,x'']}\{(\psi+x-u)^2+d\\
&+a_{1,0}u^2+b_{1,0}u^2+c_{1,0}\},\min_{u\in [x'',\psi]}\{(\psi+x-u)^2+d\\
&+a_{1,1}u^2+b_{1,1}u^2+c_{1,1}\}\}.
\end{align*}

using~\eqref{eqn:control-fixed-delay-cost}, we see that the optimal control
for this problem has the form}
\[
\pi_1(x)=
\begin{cases}
0, &\text{ if } 0\leq x \leq x'_0\\
\pi'_1(x), &\text{ if } x'_0 < x \leq \psi
\end{cases}
\]
where $x'_0$ is a function of $a_{1,0},a_{1,1},b_{1,0},b_{1,1},c_{1,0}$ and $c_{1,1}$ but cannot be recursively obtained. More precisely if $0<x^1_0<x^1_1<\psi$ we can write $\pi_1(x)$ as follows
\[
\pi_1(x)=
\begin{cases}
0, &\text{ if } 0\leq x \leq x^1_0\\
\frac{x+\psi-\frac{b_{1,0}}{2}}{1+a_{1,0}}, &\text{ if } x^1_0 < x
\leq x^1_1\\
\frac{x+\psi-\frac{b_{1,1}}{2}}{1+a_{1,1}}, &\text{ if } x^1_1 < x \leq \psi
\end{cases}
\]
}
\remove{
where $x^1_0$ and $x^1_1$ are functions of $a_{1,0},a_{1,1},b_{1,0},b_{1,1},c_{1,0}$ and $c_{1,1}$ but cannot be recursively obtained as for linear delay costs in Appendix~\ref{app:optimal-policy}.\footnote{Here we assume that $0 \leq x^1_0 < x^1_1 < \psi$. We may have a different expression for a different ordering. For instance, if the above ordering is not true, we may have
\[
\pi_1(x)=
\begin{cases}
0, &\text{ if } 0 \leq x \leq x^1_1\\
\frac{x+\psi-\frac{b_{1,1}}{2}}{1+a_{1,1}}, &\text{ if } x^1_1 < x \leq \psi
\end{cases}
\]	
where $0 \leq x^1_1 < \psi$.}	
}
%where
%\begin{equation}
%\label{eqn:bar-a-def}
%\bar{a}_i = \begin{cases}
%1, & \text{if } i = 0, \\
%1-\frac{p}{1+\bar{a}_{i-1}}, & \text{otherwise,}
%\end{cases}
%\end{equation}
%and
%\begin{equation}
%\label{eqn:bar-b-def}
%\bar{b}_i = \begin{cases}
%%2p\psi, & \text{if } i = 0, \\
%\frac{p(2\bar{a}_{i-1}\psi+\bar{b}_{i-1})}{1+\bar{a}_{i-1}} & %\text{otherwise.}
%\end{cases}
%\end{equation}

%%%%%%%%%%%%%%%%%%%%%%%%%%%%%%%%%%%%%%%%%%%%%%%%%%%%%%%%%%%%%%%%%%%%
\section{Proof of Proposition~\ref{prop:exact-solution-2-regions}}
\label{appendix:exact-solution-2-regions}
 $(a)$ Let us analyze value iteration starting with function $J_0(x) = (x+\psi)^2$. Substituting $k=1$ in~\eqref{eqn:reference-fixed-cost},  ${a}_1=\bar{a}_0$,  ${b}_1=\bar{b}_0$ and $c_1=\psi^2$. Following~\eqref{eqn:j-prime-k},
\begin{align*}
%\label{eqn:j-prime-k}
    J_1(x)&=\min\{(\psi+x)^2+{c}_1, \\
    &\min_{u\in [0,\psi]}(\psi+x-u)^2+{d}+\bar{a}_0u^2+\bar{b}_0 u+{c}_1\}
\end{align*}
Observe that $\bar{a}_0\psi+\frac{\bar{b}_0}{2}>\psi$. Hence, from~\eqref{eqn:control-fixed-delay-cost}, the optimal control in the $1$-stage problem, $\pi_1(x)$, can be written as
\begin{equation*}
    \pi_1(x)=
\begin{cases}
0,&\text{ if } x \le \theta(\bar{a}_0,\bar{b}_0)=\sqrt{2{d}}+p\psi-\psi \\
\frac{x+\psi-p\psi}{2},&\text{ otherwise.}
\end{cases}
\end{equation*}
Note that when $\sqrt{2{d}} \ge (2-p)\psi$, the second case does not arise, i.e., $\pi_1(x)=0$ for all $x\in[0,\psi]$. It implies that
\[
J_1(x)=(x+\psi)^2+p\psi^2
\]
for all $x \in [0,\psi]$. Again using~\eqref{eqn:reference-fixed-cost} for $k=2$, we see that ${a}_2={a}_1,{b}_2={b}_1$. Hence, following similar arguments as before, $\pi_2(x)=0$ for all $x \in [0,\psi]$. Continuing in this fashion we see that for all $k \geq 1$, $\pi_k(x)= 0$ for all $x \in [0,\psi]$. Therefore
$\pi^*(x)=0$ $x \in [0,\psi]$.\\
$(b)$
%\subsection{Proof of Proposition~\ref{prop:pi-star-non-zero}}
%\label{appendix:pi-star-non-zero}
Now we analyze value iteration starting with function $J_0(u)$ that satisfies  \[pJ_0(u)+(1-p)u^2 = \bar{a}_{\infty}u^2+\bar{b}_{\infty}u+\bar{c}_{\infty},\]
where $\bar{a}_{\infty},\bar{b}_{\infty}$ are as defined in Lemma~\ref{lemma:bar-a-b-convergence} and $\bar{c}_{\infty}$ is a certain constant. Substituting $k=1$ in~\eqref{eqn:reference-fixed-cost},  ${a}_1=\bar{a}_{\infty}$ and ${b}_1=\bar{b}_{\infty}$. Following~\eqref{eqn:j-prime-k},
\begin{align*}
%\label{eqn:j-prime-k}
    J_1(x)&=\min\{(\psi+x)^2+\bar{c}_{\infty}, \\
    &\min_{u\in [0,\psi]}(\psi+x-u)^2+{d}+\bar{a}_{\infty} u^2+\bar{b}_{\infty} u+\bar{c}_{\infty}\}.
\end{align*}
Using definitions of $\bar{a}_{\infty}$ and $\bar{b}_{\infty}$, $\bar{a}_\infty\psi+\frac{\bar{b}_\infty}{2}=\psi$. Hence, from~\eqref{eqn:control-fixed-delay-cost},
\begin{equation*}
    \pi_1(x)=
\begin{cases}
 0,&\text{ if } x \le \theta(\bar{a}_\infty,\bar{b}_\infty) \\
\frac{x+\psi-\frac{\bar{b}_{\infty}}{2}}{1+\bar{a}_{\infty}},&\text{ otherwise.}
\end{cases}
\end{equation*}
Further, when $\psi \bar{a}_{\infty} > \sqrt{{d}(1+a_{\infty})}$,
\begin{align*}
\theta(\bar{a}_\infty,\bar{b}_\infty) &= \sqrt{{d}(1+\bar{a}_{\infty})}+\frac{\bar{b}_{\infty}}{2}-\psi\\ &< \frac{\bar{b}_{\infty}}{2} - \psi(1-\bar{a}_\infty)= 0,
\end{align*}
implying that $\pi_1(x)=\frac{x+\psi-\frac{\bar{b}_{\infty}}{2}}{1+\bar{a}_{\infty}}$ for all $x\in[0,\psi]$. It further implies that
\[
J_1(x)=(\psi+x-\pi_1(x))^2+{d}+\bar{a}_{\infty} \pi_1(x)^2+\bar{b}_{\infty} \pi_1(x)+\bar{c}_{\infty}
\]
for all $x\in[0,\psi]$. Again using~\eqref{eqn:reference-fixed-cost} for $k=2$, we see that
\[
{a}_2=1-\frac{p}{1+\bar{a}_{\infty}} \text{ and } {b}_2=\frac{p(2\bar{a}_{\infty}\psi+\bar{b}_{\infty})}{1+\bar{a}_{\infty}}.
\]
Following Lemma~\ref{lemma:bar-a-b-convergence}, ${a}_2=\bar{a}_{\infty}$ and ${b}_2=\bar{b}_{\infty}$. Hence, following similar arguments as before, $\pi_2(x) =\pi_1(x)$ for all $x \in [0,\psi]$. Continuing in this fashion we see that for all $k \ge 1$,  $\pi_k(x) =\pi_1(x)$ for all $x \in [0,\psi]$. Therefore $\pi^\ast(x)=\frac{x+\psi-\frac{\bar{b}_{\infty}}{2}}{1+\bar{a}_{\infty}}$ for all $x \in [0,\psi]$.
\remove{
\begin{equation*}
 % \label{eqn:pi-two-prime}
    \pi_2(x)=
\begin{cases}
& 0,\text{ if } x \le \sqrt{\bar{d}(1+\bar{a}_2)}+\frac{\bar{b}_2}{2}-\psi \\
& \frac{x+\psi-\frac{\bar{b}_2}{2}}{1+\bar{a}_2},\text{ otherwise}\\
\end{cases}
\end{equation*}
Clearly, when $\psi \bar{a}_{\infty} > \sqrt{\bar{d}(1+a_{\infty})}$,
\[
\pi_2(x)=\frac{x+\psi-\frac{\bar{b}_2}{2}}{1+\bar{a}_2}
\]
Now, a similar argument can be extended and it can be shown that when $\psi \bar{a}_{\infty} > \sqrt{\bar{d}(1+a_{\infty})}$,
\[
\pi_k(x)=\frac{x+\psi-\frac{\bar{b}_k}{2}}{1+\bar{a}_k}
\]
where $\bar{a}_k=\bar{a}_{\infty},\bar{b}_k=\bar{b}_{\infty}$. }
%\vspace{-0.2in}
%\remove{
\section{Proof of Proposition~\ref{lemma:optimal-approx-0}}
\label{appendix:optimal-approx-0}
Following Proposition~\ref{prop:exact-solution-2-regions} and~\eqref{eqn:approx-policy} we see that  $\bar{\pi}(x)$ is either $\pi^\ast(x)$ or $\tilde{\pi}(x)$ depending on the parameters. Therefore, it is enough to argue that
\[
\tilde{\pi}(x) \geq \pi^\ast(x)~\forall x \in [0,\psi]
\]
irrespective of the parameters. We prove this by considering the following two cases separately.

{\it Case 1) $x \le  \theta(\bar{a}_\infty,\bar{b}_\infty)$:}
We assume $\theta(\bar{a}_\infty,\bar{b}_\infty) \ge 0$ else this case does not arise. In this case, $\tilde{\pi}(x) = 0$. We will argue that $\pi^\ast(x)$ also equals zero in this case. We will do this via iteratively  showing that $\pi_k(x)  = 0$ for all $k \geq 0$. First recall that $\bar{a}_\infty\psi+\frac{\bar{b}_\infty}{2}=\psi$~(see Section~\ref{sec:fixed-waiting-cost-approx},~\eqref{eqn:a-b-infinity-psi}). From Lemma~\ref{lemma:monotonicity-ak-bk-bar}, $\bar{a}_k \ge \bar{a}_{\infty}$ and $\bar{b}_k \ge \bar{b}_{\infty}$ for all $k \geq 0$. Hence   $\bar{a}_k\psi+\frac{\bar{b}_k}{2}>\psi$ for all $k \geq 0$ and also,
$\theta(\bar{a}_k, \bar{b}_k) > \theta(\bar{a}_\infty, \bar{b}_\infty)$ for all $k \geq 0$.

Let us now consider value iteration starting with function $J_0(x) = (x+\psi)^2$ as in the proof of Proposition~\ref{prop:exact-solution-2-regions}$(a)$. Recall that ${a}_1=1=\bar{a}_0$, ${b}_1=2p\psi=\bar{b}_0$ and
\begin{equation*}
\pi_1(x)=
\begin{cases}
0,&\text{ if } x \le \theta(\bar{a}_0,\bar{b}_0) \\
\frac{x+\psi-p\psi}{2},&\text{ otherwise.}
\end{cases}
\end{equation*}
Clearly, $\pi_1(x) = 0$ for all $x \leq \theta(\bar{a}_\infty,\bar{b}_\infty)$.
Next we analyze $\pi_2(x)$. Using~\eqref{eqn:reference-fixed-cost} for $k=2$,
\[
pJ_1(u)+(1-p)u^2={a}_{2}u^2+{b}_{2}u+{c}_{2},
\]
where
\begin{align*}
 {a}_2 &=
\begin{cases}
{a}_{21}=\bar{a}_0,&\text{ if } u \le \theta(\bar{a}_0,\bar{b}_0) \\
{a}_{22}=1-\frac{p}{1+\bar{a}_0},&\text{ otherwise}
\end{cases}\\
{b}_2 &=
\begin{cases}
{b}_{21}=\bar{b}_0,&\text{ if } u \le \theta(\bar{a}_0,\bar{b}_0) \\
{b}_{22}=\frac{p(2\bar{a}_0\psi+\bar{b}_0)}{1+\bar{a}_0},&\text{ otherwise}
\end{cases}\\
{c}_2&=
\begin{cases}
{c}_{21}=p(\psi^2+\bar{c}_1),&\text{ if } u \le \theta(\bar{a}_0,\bar{b}_0) \\
{c}_{22}=p(\frac{\bar{a}_0\psi^2+\bar{b}_0\psi-\frac{\bar{b}_0^2}{4}}{1+\bar{a}_0}+\bar{c}_1+d),&\text{ otherwise.}
\end{cases}
\end{align*}
\remove{
If $\theta(\bar{a}_1,\bar{b}_1) \ge \psi$, the second case does not arise, i.e., ${a}_2  = \bar{a}_1$ and ${b}_2  = \bar{b}_1$ for all $u \in [0,\psi]$. In this case,
\begin{align*}
J_2(x) &=\min\Big\{(\psi+x)^2+{c}_{21},\\
&\min_{u\in [0,\psi]}(\psi+x-u)^2+{d}+\bar{a}_{1}u^2+\bar{b}_{1}u+{c}_{21}\Big\},
\end{align*}
and hence, $\pi_2(x) = \pi_1(x) = 0$ for all $x \le  \theta(\bar{a}_\infty,\bar{b}_\infty)$. Now, let $\theta(\bar{a}_1,\bar{b}_1) < \psi$. }Note that
\[
{a}_{21}u^2+{b}_{21}u+{c}_{21} < {a}_{22}u^2+{b}_{22}u+{c}_{22}
\]
for all $u \in [0,\theta(\bar{a}_0,\bar{b}_0))$, implying that ${c}_{21} < {c}_{22}$. Moreover,
\begin{align*}
J_2(x) &=\min\Big\{(\psi+x)^2+{c}_{21},\min_{u\in [0,\theta(\bar{a}_{0},\bar{b}_{0})]}(\psi+x-u)^2+{d}+{a}_{21}u^2+{b}_{21}u+{c}_{21},\\
&\min_{u\in [\theta(\bar{a}_{0},\bar{b}_{0}),\psi]}(\psi+x-u)^2+{d}+{a}_{22}u^2+{b}_{22}u+{c}_{22}\Big\}.
\end{align*}
Let us define functions
\begin{align*}
J_{21}(x)&=\min\{(\psi+x)^2+{c}_{21},\min_{u\in [0,\psi]}(\psi+x-u)^2+{d}+{a}_{21}u^2+{b}_{21}u+{c}_{21}\}
\end{align*}
and
\begin{align*}
J_{22}(x)&=\min\{(\psi+x)^2+{c}_{21},\min_{u\in [0,\psi]}(\psi+x-u)^2+{d}+{a}_{22}u^2+{b}_{22}u+{c}_{22}\}.
\end{align*}
The optimal controls in the above optimization problems are
\begin{equation*}
{\pi_{21}}(x)=
\begin{cases}
0,& \text{ if } x \le \theta({a}_{21},{b}_{21})\\
\frac{x+\psi-\frac{{b}_{21}}{2}}{1+{a}_{21}},& \text{ otherwise}
\end{cases}
\end{equation*}
and
\vspace{-0.1in}
\begin{equation*}
{\pi_{22}}(x)=
\begin{cases}
0,& \text{ if } x \le \sqrt{({d}+{c}_{22}-{c}_{21})(1+{a}_{22})}\\
&~~~~~~~~~~+\frac{{b}_{22}}{2}-\psi \\
\frac{x+\psi-\frac{{b}_{22}}{2}}{1+{a}_{22}},& \text{ otherwise}
\end{cases}
\end{equation*}
respectively. Note that,  since ${c}_{22} > {c}_{21}$,  $\sqrt{({d}+{c}_{22}-{c}_{21})(1+{a}_{22})}+\frac{{b}_{22}}{2}-\psi > \theta({a}_{22},b_{22})$,  and hence, $\pi_{22}(x) = 0$ for all $x \in [0, \theta({a}_{22},{b}_{22})]$. Finally, comparing $J_2$, $J_{21}$ and $J_{22}$, we see that when both $\pi_{21}(x)$ and $\pi_{22}(x)$ equal zero, $\pi_2(x)$ also equals zero. In other words, $\pi_2(x) = 0$ for all $x \le \min\{\theta({a}_{21},b_{21}),\theta({a}_{22},{b}_{22})\}$. In particular, $\pi_2(x) = 0$ for all $x \le \theta(\bar{a}_\infty,\bar{b}_\infty)$.

We can similarly argue that, for all $k \ge 1$, $\pi_k(x) = 0$ for all $x \le \theta(\bar{a}_\infty,\bar{b}_\infty)$ as desired.\\
{\it Case 2) $x >  \theta(\bar{a}_\infty,\bar{b}_\infty)$:}
In this case
\[
\tilde{\pi}(x)=\frac{x+\psi-\frac{\bar{b}_{\infty}}{2}}{1+\bar{a}_{\infty}}.
\]
From Lemma~\ref{lemma:monotonicity-ak-bk-bar}, $\bar{a}_k \ge \bar{a}_{\infty}$ and $\bar{b}_k \ge \bar{b}_{\infty}$ for all $k \geq 0$, and hence,
\[
\tilde{\pi}(x) \ge \frac{x+\psi-\frac{\bar{b}_k}{2}}{1+\bar{a}_k}
\]
for all $x >  \theta(\bar{a}_\infty,\bar{b}_\infty)$. Therefore, following~\eqref{eqn:exact-policy-fixed-delay},
$\tilde{\pi}(x) \ge \pi^\ast(x)$ for all $x >  \theta(\bar{a}_\infty,\bar{b}_\infty)$.

Combining Cases 1) and 2) we see that $\tilde{\pi}(x) \ge \pi^\ast(x)$ for all $x \in [0,\psi]$ as desired.%}
\remove{
\subsection{Proof of Lemma~\ref{lemma:apprx-inequality}}
\label{appendix:apprx-inequality}
Using Lemma~\ref{lemma:optimal-approx-0} and~\ref{lemma:pi-k-max} we can establish the following result
\[
\tilde{\pi}(x) \ge \pi^*(x)
\]
}
\vspace{-0.1in}
%\section{}
%\remove{
\section{Proof of Lemma~\ref{lemma:fixed-nash-properties-a-b}}
\label{appendix:fixed-nash-properties-a-b}
$(a)$ Notice that the mapping $a \mapsto \frac{1}{4-2pa}$ is monotonically increasing.
Further, $a_0 > a_{-1}$. Therefore the sequence $a_k, k\geq -1$ is monotonically increasing.
%It is also nonnegative, and so, lower bounded.
Hence it converges to $a_\infty$, the smallest fixed point
of $a = \frac{1}{4-2pa}$. \\
In the following we argue that $\tilde{a}_{\infty}<\frac{1}{3}$. By definition of $\tilde{a}_{\infty}$, it is enough to argue  \[6-2p<3\sqrt{4-2p}.\]
The above equation implies $p<\frac{3}{2}$, which is true always. Therefore, $a_{\infty}<\frac{1}{3}$.
\section{Proof of Lemma~\ref{lemma:fixed-game-no-caps}}
\label{appendix:fixed-game-no-caps}
 Using $\frac{1}{4}<\tilde{a}_{\infty}<\frac{1}{3}$, from Lemma~\ref{lemma:fixed-nash-properties-a-b} it can be seen that $\tilde{b}_{\infty}>0$. Therefore, $0<\tilde{a}_{\infty}x+\tilde{b}_{\infty}$.\\
 Let us now argue that $\tilde{a}_{\infty}x+\tilde{b}_{\infty}<\psi,~x\in[0,\psi]$. It is enough to show that $\tilde{a}_{\infty}\psi+\tilde{b}_{\infty}<\psi$. By definition of $\tilde{b}_{\infty}$ and $\tilde{a}_{\infty}$ it is equivalent to argue that the following holds
 \begin{align*}
     \tilde{b}_{\infty}&<\psi(1-\tilde{a}_{\infty})\\
     \tilde{a}_{\infty}\frac{2\psi(2-p)}{\sqrt{4-2p}}&<\psi(1-\tilde{a}_{\infty})\\
     \tilde{a}_{\infty}&<\frac{1}{{\sqrt{4-2p}}+1}
 \end{align*}
 It can be observed that
 \[
 \max{\sqrt{4-2p}+1}=3.
 \]
 Therefore, it enough to argue that $\tilde{a}_{\infty}<\frac{1}{3}$, which clearly holds true from Lemma~\ref{lemma:fixed-nash-properties-a-b}. Hence, the lemma holds.
 %}

\section{Proof of Lemma~\ref{lemma:pi-prime-x-less-than-xinfinity}}
\label{appendix:pi-prime-x-less-than-xinfinity}
Let us recollect the following result of Case~1 from  Appendix~\ref{appendix:fixed-nash-equilibrium}. 
\[
\pi'_0(x)=0, \forall x \le x_{\infty}
\]
In this subsection we do not have any constraint ($\tilde{a}_{\infty}x_{\infty}+\tilde{b}_{\infty}>x_{\infty}$) as in Case~2 of Appendix~\ref{appendix:fixed-nash-equilibrium}. Therefore, $\pi_1'(x),\forall x>x_{\infty}$ can be either $\pi_{\tilde{a}_{\infty},\tilde{b}_{\infty}}(x)$ or $\pi_{0,0}(x)$. Using~\eqref{eqn:pi-a-b(x)} we infer $\pi'_1(x),x>x_{\infty}$ will be either $\tilde{a}_{\infty}x+\tilde{b}_{\infty}$ or $\tilde{a}_{0}x+\tilde{b}_{0}$~(see~\eqref{eqn:fixed-a-case2-nash},~\eqref{eqn:fixed-b-case2-nash}). 

It should be realized that from the proof of Lemma~\ref{lemma:fixed-nash-properties-a-b} it is clear that $\tilde{a}_k<\tilde{a}_{\infty},\forall k \ge -1$. Using definition of $\tilde{b}_0,\tilde{b}_{\infty}$ it can be observed that $\tilde{b}_0<\tilde{b}_{\infty}$ is equivalent to 
\[
4+p>\frac{1}{\tilde{a}_{\infty}}
\]
As Lemma~\ref{lemma:fixed-nash-properties-a-b} states that $\frac{1}{4}<\tilde{a}_{\infty}<\frac{1}{3}$, the above inequality holds. This implies 
\[
\tilde{a}_{\infty}x+\tilde{b}_{\infty}>\tilde{a}_{0}x+\tilde{b}_{0},\forall x \in [0,\psi].
\]
Using~\eqref{eqn:h-a-b-definition}, we infer the following
\begin{equation}
\label{eqn:h00-hab-relation2}
h_{\tilde{a}_{\infty},\tilde{b}_{\infty}}(x,u) \le h_{\tilde{a}_{0},\tilde{b}_{0}}(x,u),\forall x,u \in [0,\psi]
\end{equation}
Using $\pi'_1(x)$ we can write $C_2(x)$ as follows \begin{align*}
C_2(x)=&\min\{(\psi+x)\psi,\min_{u \in [0,x_{\infty}]}h_{0,0}(x,u),\\
&\min_{u \in \mathbb{A}}h_{\tilde{a}_{0},\tilde{b}_{0}}(x,u),\min_{u \in \mathbb{B}}h_{\tilde{a}_{\infty},\tilde{b}_{\infty}}(x,u)\},
\end{align*}
where $\mathbb{A},\mathbb{B} \subset (x_{\infty},\psi]$.
Let us now determine $\pi'_2(x), \forall x \le x_{\infty}$. 
From Lemma~\ref{lemma:inequality-x-infinity}, we infer the following
\begin{align*}
    \psi(\psi+x) &\le \min_{u \in [0,\psi]} h_{\tilde{a}_{\infty},\tilde{b}_{\infty}}(x,u),\\
    & < \min\left\{\min_{u \in [0,x_{\infty}]}h_{0,0}(x,u),\min_{u \in \mathbb{B}}h_{\tilde{a}_{\infty},\tilde{b}_{\infty}}(x,u), \min_{u \in \mathbb{A}} h_{\tilde{a}_{0},\tilde{b}_{0}}(x,u) \right\},
\end{align*}
where the second inequality follows from~\eqref{eqn:h00-hab-relation} and~\eqref{eqn:h00-hab-relation2}.
Hence $\pi'_2(x)=0,\forall x \le x_{\infty}$. Similar argument can be followed to prove $\pi'_k(x)=0,\forall x \le x_{\infty}, k \ge 0$.

\section{Proof of Proposition~\ref{thm:fixed-nash-equilibrium}}
\label{appendix:fixed-nash-equilibrium}
%Let us recollect the following result of Case~1 from  Appendix~\ref{appendix:fixed-nash-equilibrium}. 
%\[
%\pi'_0(x)=0, \forall x \le x_{\infty}
%\]
Let us define
\begin{align}
\label{eqn:h-a-b-definition}
    &h_{a,b}(x,u)=(\psi+x-u)(\psi-u)+d+u(u+p(\psi-au-b)),
 \end{align}
  \begin{align*}
    %&\text{and }\\
\text{and }    &~~~~~\pi_{a,b}(x)=\argmin_{u \in [0,\psi]}h_{a,b}(x,u).
\end{align*}
The following can be verified
\begin{align}
\label{eqn:pi-a-b(x)}
    \pi_{a,b}(x)=\frac{x+(2-p)\psi+pb}{2(2-pa)}
\end{align}
\begin{align}
\label{eqn:expansion_hab}
h_{a,b}(x,u)=&\bigg(\frac{u^2}{2}-u\frac{(2-p)\psi+x+pb}{2(2-ap)}\nonumber\\&+\frac{d}{2(2-ap)}\bigg)2(2-ap)
+\psi(\psi+x).
\end{align}
From~\cite[Chapter~2, Proposition~1.2(b)]{b21}, $C_k(\cdot)$s converges to the optimal cost function $C(\cdot)$ and $\pi'_k(\cdot)$ converge to $\pi'(\cdot)$ irrespective of the initial function $C_0(x)$ in the value iteration. Now we analyze value iteration starting with a different function.
\remove{
\begin{align*}
  %  \label{eqn:nash-c-0-cost-fixed}
C_0(x)=&\min\{(\psi+x)\psi,\nonumber\\
&\min_{u \in [0,\psi]}\{(\psi-u)(\psi-u+x)+d\nonumber\\&+u(u+p(\psi
-\tilde{a}_{\infty}u-\tilde{b}_{\infty}))\}\}.
\end{align*}
It can be observed that the above equation can be written as follows}
\[
C_0(x)=\min\{(\psi+x)\psi,h_{\tilde{a}_{\infty},\tilde{b}_{\infty}}(x,\pi_{\tilde{a}_{\infty},\tilde{b}_{\infty}}(x))\}.
\]
Recall that $\pi'_0(x)$ is the solution to $C_0(x)$. To determine $\pi'_0(x)$, we need to find $ \argmin h_{\tilde{a}_{\infty},\tilde{b}_{\infty}}(x,\pi_{\tilde{a}_{\infty},\tilde{b}_{\infty}}(x))$. Realize that $\pi_{\tilde{a}_{\infty},\tilde{b}_{\infty}}(x)$ is $\argmin h_{\tilde{a}_{\infty},\tilde{b}_{\infty}}(x,\pi_{\tilde{a}_{\infty},\tilde{b}_{\infty}}(x))$. Using~\eqref{eqn:pi-a-b(x)}, it can be seen that
\begin{align*}
%\label{eqn:pi-a-b(x)}
    \pi_{\tilde{a}_{\infty},\tilde{b}_{\infty}}(x)=\frac{x+(2-p)\psi+p\tilde{b}_{\infty}}{2(2-p\tilde{a}_{\infty})}
\end{align*}
As the sequences $\tilde{a}_k,\tilde{b}_k,k \ge -1$ converge to $\tilde{a}_{\infty},\tilde{b}_{\infty}$ (see~\eqref{eqn:fixed-a-case2-nash},\eqref{eqn:fixed-b-case2-nash}). From Lemma~\ref{lemma:fixed-game-no-caps} we know that $\pi_{\tilde{a}_{\infty},\tilde{b}_{\infty}}(x) \in (0,\psi)$, therefore  we infer that
\begin{equation}
\label{eqn:pi-a-infinity-b-infinity}
\pi_{\tilde{a}_{\infty},\tilde{b}_{\infty}}(x)=\tilde{a}_{\infty}x+\tilde{b}_{\infty}.
\end{equation}
Now to determine $\pi'_0(x)$, we need the following lemma which is proved at the end of this proof.%~\cite[Appendix~II-C]{b29}.
\begin{lemma}
\label{lemma:inequality-x-infinity}
The following inequality holds if and only if $x \le x_{\infty}$.
\begin{equation}
\label{eqn:find-pi_0-prime}
 \psi(\psi+x) \le h_{\tilde{a}_{\infty},\tilde{b}_{\infty}}(x,\pi_{\tilde{a}_{\infty},\tilde{b}_{\infty}}(x)).
\end{equation}
\end{lemma}
Using Lemma~\ref{lemma:inequality-x-infinity} and Lemma~\ref{lemma:fixed-game-no-caps}, we infer
\begin{align}
\label{eqn:fixed-partial-game}
\pi'_0(x)=
\begin{cases}
0, &\text{ if } x\le x_{\infty}\\
\tilde{a}_{\infty}x+\tilde{b}_{\infty}, &\text{ otherwise}
\end{cases}
\end{align}
Using~\eqref{eqn:h-a-b-definition}, $\tilde{a}_{\infty}>0, \tilde{b}_{\infty}>0$ we infer the following
\begin{equation}
\label{eqn:h00-hab-relation}
h_{\tilde{a}_{\infty},\tilde{b}_{\infty}}(x,u) \le h_{0,0}(x,u),\forall x,u \in [0,\psi]
\end{equation}
Now from~\eqref{eqn:fixed-nash-j-prime-k} and~\eqref{eqn:fixed-partial-game}, the following can be written
%\vspace{-0.1in}
\begin{align*}
C_1(x)=\min\{(\psi+x)\psi,\min_{u \in [0,x_{\infty}]}&h_{0,0}(x,u),\\
&\min_{u \in [x_{\infty},\psi]}h_{\tilde{a}_{\infty},\tilde{b}_{\infty}}(x,u)\}.
\end{align*}
We would now determine $\pi'_1(x)$. Let us study the following two cases separately.
%\begin{enumerate}
%\item
\paragraph*{Case 1}~~$x \le x_{\infty}$

From Lemma~\ref{lemma:inequality-x-infinity}, we infer the following when $x \le x_{\infty}$
\begin{align*}
    \psi(\psi+x) &\le \min_{u \in [0,\psi]} h_{\tilde{a}_{\infty},\tilde{b}_{\infty}}(x,u),\\
    & < \min\left\{\min_{u \in [0,x_{\infty}]}h_{0,0}(x,u),\min_{u \in [x_{\infty},\psi]}h_{\tilde{a}_{\infty},\tilde{b}_{\infty}}(x,u)\right\},
\end{align*}
where the second inequality follows from~\eqref{eqn:h00-hab-relation}.
Hence $\pi'_1(x)=0,\forall x \le x_{\infty}$.
%\item
\paragraph*{Case 2}~~$x>x_{\infty}$

Note that $\frac{\tilde{b}_{\infty}}{1-\tilde{a}_{\infty}} \ge x_{\infty}$ implies $\tilde{a}_{\infty}x_{\infty}+\tilde{b}_{\infty}>x_{\infty}$. When $\tilde{a}_{\infty}x_{\infty}+\tilde{b}_{\infty}>x_{\infty}$ the following holds
\begin{align*}
    \min_{u \in [x_{\infty},\psi]}h_{\tilde{a}_{\infty},\tilde{b}_{\infty}}(x,u) &= \min_{u \in [0,\psi]} h_{\tilde{a}_{\infty},\tilde{b}_{\infty}}(x,u)\\
    & < \min\left\{(\psi+x)\psi,\min_{u \in [0,x_{\infty}]}h_{0,0}(x,u)\right\}
\end{align*}
Last inequality follows from Lemma~\ref{lemma:inequality-x-infinity} and~\eqref{eqn:h00-hab-relation}. Hence, $\pi'_1(x)=\tilde{a}_{\infty}x+\tilde{b}_{\infty}, \forall x > x_{\infty}$.
%\end{enumerate}

Combining both the cases $\pi'_1(x)=\pi'_0(x), \forall x \in [0,\psi]$. We can iteratively show that $\pi'_k(x)=\pi'_0(x), \forall x \in [0,\psi]$. Hence $\pi'(x)=\pi'_0(x)$.
%\remove{
\paragraph*{Proof of Lemma~\ref{lemma:inequality-x-infinity}}
%\label{proof:inequality-x-infinity}
Using~\eqref{eqn:expansion_hab}, we see that~\eqref{eqn:find-pi_0-prime} is equivalent to the following expression being greater than zero.
\[
\frac{\pi_{\tilde{a}_{\infty},\tilde{b}_{\infty}}(x)^2}{2}-\pi_{\tilde{a}_{\infty},\tilde{b}_{\infty}}(x)\frac{(2-p)\psi+x+p\tilde{b}_{\infty}}{2(2-\tilde{a}_{\infty}p)}+\frac{d}{2(2-\tilde{a}_{\infty}p)}
\]
As the sequences $\tilde{a}_k,\tilde{b}_k,k \ge -1$ converge to $\tilde{a}_{\infty},\tilde{b}_{\infty}$ (see~\eqref{eqn:fixed-a-case2-nash},\eqref{eqn:fixed-b-case2-nash}) the above inequality can be written as
\[
\frac{\pi_{\tilde{a}_{\infty},\tilde{b}_{\infty}}(x)^2}{2}-\pi_{\tilde{a}_{\infty},\tilde{b}_{\infty}}(x)(\tilde{a}_{\infty}x+\tilde{b}_{\infty})+d\tilde{a}_{\infty}>0
\]
Using~\eqref{eqn:pi-a-infinity-b-infinity}, the above equation can be written as
\begin{align*}
\frac{(\tilde{a}_{\infty}x+\tilde{b}_{\infty})^2}{2}-(\tilde{a}_{\infty}x+\tilde{b}_{\infty})^2+d\tilde{a}_{\infty}&>0,\\
\sqrt{2d\tilde{a}_{\infty}}&>\tilde{a}_{\infty}x+\tilde{b}_{\infty},\\
x&<\frac{\sqrt{2d\tilde{a}_{\infty}}-\tilde{b}_{\infty}}{\tilde{a}_{\infty}}.
\end{align*}
Recollect that $x_{\infty}=\frac{\sqrt{2d\tilde{a}_{\infty}}-\tilde{b}_{\infty}}{\tilde{a}_{\infty}}$. Therefore~\eqref{eqn:find-pi_0-prime} if and only if $x \le x_{\infty}$.%}
\remove{
Let us define
\begin{align}
\label{eqn:h-a-b-definition}
    &h_{a,b}(x,u)=(\psi+x-u)(\psi-u)+d+u(u+p(\psi-au-b)),
 \end{align}
 and
 \begin{align*}
    %&\text{and }\\
    &\pi_{a,b}(x)=\argmin_{u \in [0,\psi]}h_{a,b}(x,u).
\end{align*}
The following can be verified
\begin{align}
\label{eqn:pi-a-b(x)}
    \pi_{a,b}(x)=\frac{x+(2-p)\psi+pb}{2(2-pa)}
\end{align}
\begin{align}
\label{eqn:expansion_hab}
h_{a,b}(x,u)=&\bigg(\frac{u^2}{2}-u\frac{(2-p)\psi+x+pb}{2(2-ap)}\nonumber\\&+\frac{d}{2(2-ap)}\bigg)2(2-ap)
+\psi(\psi+x).
\end{align}
From~\cite[Chapter~2, Proposition~1.2(b)]{b21}, $C_k(\cdot)$s converge to the optimal cost function $C(\cdot)$ and $\pi'_k(\cdot)$ converge to $\pi'(\cdot)$ irrespective of the initial function $C_0(x)$ in the value iteration. Now we analyze value iteration starting with a different function.
\remove{
\begin{align*}
  %  \label{eqn:nash-c-0-cost-fixed}
C_0(x)=&\min\{(\psi+x)\psi,\nonumber\\
&\min_{u \in [0,\psi]}\{(\psi-u)(\psi-u+x)+d\nonumber\\&+u(u+p(\psi
-\tilde{a}_{\infty}u-\tilde{b}_{\infty}))\}\}.
\end{align*}
It can be observed that the above equation can be written as follows}
\[
C_0(x)=\min\{(\psi+x)\psi,h_{\tilde{a}_{\infty},\tilde{b}_{\infty}}(x,\pi_{\tilde{a}_{\infty},\tilde{b}_{\infty}}(x))\}.
\]
Recall that $\pi'_0(x)$ is the solution to $C_0(x)$. To determine $\pi'_0(x)$, we need to find $ \argmin h_{\tilde{a}_{\infty},\tilde{b}_{\infty}}(x,\pi_{\tilde{a}_{\infty},\tilde{b}_{\infty}}(x))$. Realize that $\pi_{\tilde{a}_{\infty},\tilde{b}_{\infty}}(x)$ is $\argmin h_{\tilde{a}_{\infty},\tilde{b}_{\infty}}(x,\pi_{\tilde{a}_{\infty},\tilde{b}_{\infty}}(x))$. Using~\eqref{eqn:pi-a-b(x)}, it can be seen that
\begin{align*}
%\label{eqn:pi-a-b(x)}
    \pi_{\tilde{a}_{\infty},\tilde{b}_{\infty}}(x)=\frac{x+(2-p)\psi+p\tilde{b}_{\infty}}{2(2-p\tilde{a}_{\infty})}
\end{align*}
As the sequences $\tilde{a}_k,\tilde{b}_k,k \ge -1$ converge to $\tilde{a}_{\infty},\tilde{b}_{\infty}$ (see~\eqref{eqn:fixed-a-case2-nash},\eqref{eqn:fixed-b-case2-nash}) we infer that
\begin{equation}
\label{eqn:pi-a-infinity-b-infinity}
\pi_{\tilde{a}_{\infty},\tilde{b}_{\infty}}(x)=\tilde{a}_{\infty}x+\tilde{b}_{\infty}.
\end{equation}
Now to determine $\pi'_0(x)$, we need the following lemma which is proved at the end of this subsection.
\begin{lemma}
\label{lemma:inequality-x-infinity}
The following inequality holds if and only if $x \le x_{\infty}$.
\begin{equation}
\label{eqn:find-pi_0-prime}
 \psi(\psi+x) \le h_{\tilde{a}_{\infty},\tilde{b}_{\infty}}(x,\pi_{\tilde{a}_{\infty},\tilde{b}_{\infty}}(x)).
\end{equation}
\end{lemma}
Using Lemma~\ref{lemma:inequality-x-infinity}, we infer
\begin{align}
\label{eqn:fixed-partial-game}
\pi'_0(x)=
\begin{cases}
0, &\text{ if } x\le x_{\infty}\\
\tilde{a}_{\infty}x+\tilde{b}_{\infty}, &\text{ otherwise}
\end{cases}
\end{align}
Using~\eqref{eqn:h-a-b-definition}, $\tilde{a}_{\infty}>0, \tilde{b}_{\infty}>0$ we infer the following
\begin{equation}
\label{eqn:h00-hab-relation}
h_{\tilde{a}_{\infty},\tilde{b}_{\infty}}(x,u) \le h_{0,0}(x,u),\forall x,u \in [0,\psi]
\end{equation}
Now from~\eqref{eqn:fixed-nash-j-prime-k} and~\eqref{eqn:fixed-partial-game}, the following can be written
%\vspace{-0.1in}
\begin{align*}
C_1(x)=\min\{(\psi+x)\psi,\min_{u \in [0,x_{\infty}]}&h_{0,0}(x,u),\\
&\min_{u \in [x_{\infty},\psi]}h_{\tilde{a}_{\infty},\tilde{b}_{\infty}}(x,u)\}.
\end{align*}
We would now determine $\pi'_1(x)$. Let us study the following two cases separately.
%\begin{enumerate}
%\item
\paragraph*{Case 1}~~$x \le x_{\infty}$

From Lemma~\ref{lemma:inequality-x-infinity}, we infer the following when $x \le x_{\infty}$
\begin{align*}
    \psi(\psi+x) &\le \min_{u \in [0,\psi]} h_{\tilde{a}_{\infty},\tilde{b}_{\infty}}(x,u),\\
    & < \min\left\{\min_{u \in [0,x_{\infty}]}h_{0,0}(x,u),\min_{u \in [x_{\infty},\psi]}h_{\tilde{a}_{\infty},\tilde{b}_{\infty}}(x,u)\right\},
\end{align*}
where the second inequality follows from~\eqref{eqn:h00-hab-relation}.
Hence $\pi'_1(x)=0,\forall x \le x_{\infty}$.
%\item
\paragraph*{Case 2}~~$x>x_{\infty}$

In this subsection we do not assume any constraint therefore, $\pi_1'(x),\forall x>x_{\infty}$ can be either $\pi_{\tilde{a}_{\infty},\tilde{b}_{\infty}}(x)$ or $\pi_{0,0}(x)$. Using~\eqref{eqn:pi-a-b(x)} we infer $\pi'_1(x),x>x_{\infty}$ will be either $\tilde{a}_{\infty}x+\tilde{b}_{\infty}$ or $\tilde{a}_{0}x+\tilde{b}_{0}$~(see~\eqref{eqn:fixed-a-case2-nash},~\eqref{eqn:fixed-b-case2-nash}). 

It should be realized that from the proof of Lemma~\ref{lemma:fixed-nash-properties-a-b} it is clear that $\tilde{a}_k<\tilde{a}_{\infty},\forall k \ge -1$. Using definition of $\tilde{b}_0,\tilde{b}_{\infty}$ it can be observed that $\tilde{b}_0<\tilde{b}_{\infty}$ is equivalent to 
\[
4+p>\frac{1}{\tilde{a}_{\infty}}
\]
As Lemma~\ref{lemma:fixed-nash-properties-a-b} states that $\frac{1}{4}<\tilde{a}_{\infty}<\frac{1}{3}$, the above inequality holds. This implies 
\[
\tilde{a}_{\infty}x+\tilde{b}_{\infty}>\tilde{a}_{0}x+\tilde{b}_{0},\forall x \in [0,\psi].
\]
Using~\eqref{eqn:h-a-b-definition}, we infer the following
\begin{equation}
\label{eqn:h00-hab-relation2}
h_{\tilde{a}_{\infty},\tilde{b}_{\infty}}(x,u) \le h_{\tilde{a}_{0},\tilde{b}_{0}}(x,u),\forall x,u \in [0,\psi]
\end{equation}
Using $\pi'_1(x)$ we can write $C_2(x)$ as follows \begin{align*}
C_2(x)=&\min\{(\psi+x)\psi,\min_{u \in [0,x_{\infty}]}h_{0,0}(x,u),\\
&\min_{u \in \mathbb{A}}h_{\tilde{a}_{0},\tilde{b}_{0}}(x,u),\min_{u \in \mathbb{B}}h_{\tilde{a}_{\infty},\tilde{b}_{\infty}}(x,u)\},
\end{align*}
where $\mathbb{A},\mathbb{B} \subset (x_{\infty},\psi]$.
Let us now determine $\pi'_2(x), \forall x \le x_{\infty}$. 
From Lemma~\ref{lemma:inequality-x-infinity}, we infer the following
\begin{align*}
    \psi(\psi+x) &\le \min_{u \in [0,\psi]} h_{\tilde{a}_{\infty},\tilde{b}_{\infty}}(x,u),\\
    & < \min\left\{\min_{u \in [0,x_{\infty}]}h_{0,0}(x,u),\min_{u \in \mathbb{B}}h_{\tilde{a}_{\infty},\tilde{b}_{\infty}}(x,u)\right.,\\
    &\left. \min_{u \in \mathbb{A}} h_{\tilde{a}_{0},\tilde{b}_{0}}(x,u) \right\},
\end{align*}
where the second inequality follows from~\eqref{eqn:h00-hab-relation} and~\eqref{eqn:h00-hab-relation2}.
Hence $\pi'_2(x)=0,\forall x \le x_{\infty}$. Similar argument can be followed to prove $\pi'_k(x)=0,\forall x \le x_{\infty}, k \ge 0$.
\paragraph*{Proof of Lemma~\ref{lemma:inequality-x-infinity}}
%\label{proof:inequality-x-infinity}
Using~\eqref{eqn:expansion_hab}, we see that~\eqref{eqn:find-pi_0-prime} is equivalent to the following inequality.
\[
\frac{\pi_{\tilde{a}_{\infty},\tilde{b}_{\infty}}(x)^2}{2}-\pi_{\tilde{a}_{\infty},\tilde{b}_{\infty}}(x)\frac{(2-p)\psi+x+p\tilde{b}_{\infty}}{2(2-\tilde{a}_{\infty}p)}+\frac{d}{2(2-\tilde{a}_{\infty}p)}>0
\]
As the sequences $\tilde{a}_k,\tilde{b}_k,k \ge -1$ converge to $\tilde{a}_{\infty},\tilde{b}_{\infty}$ (see~\eqref{eqn:fixed-a-case2-nash},\eqref{eqn:fixed-b-case2-nash}) the above inequality can be written as
\[
\frac{\pi_{\tilde{a}_{\infty},\tilde{b}_{\infty}}(x)^2}{2}-\pi_{\tilde{a}_{\infty},\tilde{b}_{\infty}}(x)(\tilde{a}_{\infty}x+\tilde{b}_{\infty})+d\tilde{a}_{\infty}>0
\]
Using~\eqref{eqn:pi-a-infinity-b-infinity}, the above equation can be written as
\begin{align*}
\frac{(\tilde{a}_{\infty}x+\tilde{b}_{\infty})^2}{2}-(\tilde{a}_{\infty}x+\tilde{b}_{\infty})^2+d\tilde{a}_{\infty}&>0,\\
\sqrt{2d\tilde{a}_{\infty}}&>\tilde{a}_{\infty}x+\tilde{b}_{\infty},\\
x&<\frac{\sqrt{2d\tilde{a}_{\infty}}-\tilde{b}_{\infty}}{\tilde{a}_{\infty}}.
\end{align*}
Recollect that $x_{\infty}=\frac{\sqrt{2d\tilde{a}_{\infty}}-\tilde{b}_{\infty}}{\tilde{a}_{\infty}}$. Therefore~\eqref{eqn:find-pi_0-prime} if and only if $x \le x_{\infty}$.
 \subsection{Proof of Proposition~\ref{thm:fixed-nash-equilibrium}}
\label{appendix:fixed-nash-equilibrium}
Following arguments exactly similar to Appendix~\ref{appendix:pi-prime-x-less-than-xinfinity}, we obtain results of Case 1 as follows.
\paragraph*{Case 1}~~$x \le x_{\infty}$

From Lemma~\ref{lemma:inequality-x-infinity}, we infer the following when $x \le x_{\infty}$
\begin{align*}
    \psi(\psi+x) &\le \min_{u \in [0,\psi]} h_{\tilde{a}_{\infty},\tilde{b}_{\infty}}(x,u),\\
    & < \min\left\{\min_{u \in [0,x_{\infty}]}h_{0,0}(x,u),\min_{u \in [x_{\infty},\psi]}h_{\tilde{a}_{\infty},\tilde{b}_{\infty}}(x,u)\right\},
\end{align*}
where the second inequality follows from~\eqref{eqn:h00-hab-relation}.
Hence $\pi'_1(x)=0,\forall x \le x_{\infty}$.
%\item

\paragraph*{Case 2}~~$x>x_{\infty}$

Note that $\frac{\tilde{b}_{\infty}}{1-\tilde{a}_{\infty}} \ge x_{\infty}$ implies $\tilde{a}_{\infty}x_{\infty}+\tilde{b}_{\infty}>x_{\infty}$. When $\tilde{a}_{\infty}x_{\infty}+\tilde{b}_{\infty}>x_{\infty}$ the following holds
\begin{align*}
    \min_{u \in [x_{\infty},\psi]}h_{\tilde{a}_{\infty},\tilde{b}_{\infty}}(x,u) &= \min_{u \in [0,\psi]} h_{\tilde{a}_{\infty},\tilde{b}_{\infty}}(x,u)\\
    & < \min\left\{(\psi+x)\psi,\min_{u \in [0,x_{\infty}]}h_{0,0}(x,u)\right\}
\end{align*}
Last inequality follows from Lemma~\ref{lemma:inequality-x-infinity} and~\eqref{eqn:h00-hab-relation}. Hence, $\pi'_1(x)=\tilde{a}_{\infty}x+\tilde{b}_{\infty}, \forall x > x_{\infty}$.
%\end{enumerate}

Combining both the cases $\pi'_1(x)=\pi'_0(x), \forall x \in [0,\psi]$. We can iteratively show that $\pi'_k(x)=\pi'_0(x), \forall x \in [0,\psi]$. Hence $\pi'(x)=\pi'_0(x)$.
}
\remove{
In the following, we analyze value iteration starting with function $\pi'_{-1}(u)$ which satisfies  \[\pi'_{-1}(u))={a}_{\infty}u+{b}_{\infty} \]
where $\bar{a}_{\infty},\bar{b}_{\infty}$ are as defined in Lemma~\ref{lemma:fixed-nash-properties-a-b}. %and $\bar{c}_{\infty}$ is a certain constant.
Following~\eqref{eqn:fixed-nash-j-prime-k}, we see that ${a}_1={a}_{\infty}, {b}_1={b}_{\infty}$. Following~\eqref{eqn:fixed-nash-j-prime-k},
\begin{align}
%\label{eqn:j-prime-k}
    J_0(x)=&\min\{(\psi+x)\psi,
    \min_{u\in [0,\psi]}\{(\psi+x-u)(\psi-u)+d\nonumber\\
    &+u(u+p(\psi-{a}_{\infty}u-{b}_{\infty}))\}
\end{align}
Let us define
\begin{align*}
h(x,u)=\{(\psi+x-u)(\psi-u)+d\\
    +u(u+p(\psi-{a}_{\infty}u-{b}_{\infty}))\}
\end{align*}
and
\begin{align*}
\pi''_0(x)=\argmin_{u\in [0,\psi]}h(x,u).
\end{align*}
It can be noted using the definition of $\tilde{a}_{\infty},\tilde{b}_{\infty}$ and Lemma~\ref{lemma:fixed-game-no-caps}
\begin{align}
\label{eqn:fixed-game-pi-0-prime}
\pi''_0(x)= \tilde{a}_{\infty}x+\tilde{b}_{\infty}.
\end{align}
Further, following simple algebra, we can verify that $h(x,\pi''_0(x)) \le (\psi+x)\psi$ if and only if $x \geq \gamma(\tilde{a}_{\infty},\tilde{b}_{\infty})$, implying that
\remove{
Let us now find out $x$ such that the following holds
\begin{align*}
    (\psi+x)\psi&>(\psi+x-\pi'_0(x))(\psi-\pi'_0(x))+d\nonumber\\
    &+\pi'_0(x)(\pi'_0(x)+p(\psi-{a}_{\infty}\pi'_0(x)-{b}_{\infty}))\\
    \end{align*}
Or, equivalently,
\begin{align*}
    -\pi'_0(x)(x+(2-p)\psi+p\tilde{b}_{\infty})+(2-\tilde{a}_{\infty}p)\pi'_0(x)^2+d<0\\
    -\pi'_0(x)\frac{(x+(2-p)\psi+p\tilde{b}_{\infty})}{4-2\tilde{a}_{\infty}p}+\frac{\pi'_0(x)^2}{2}<0
\end{align*}
Using the definition of $\tilde{a}_{\infty}\text{ and }\tilde{b}_{\infty}$ and ~\eqref{eqn:fixed-game-pi-0-prime} the following holds
\begin{align*}
   -(\tilde{a}_{\infty}x+\tilde{b}_{\infty})^2+\frac{(\tilde{a}_{\infty}x+\tilde{b}_{\infty})^2}{2}+\tilde{a}_{\infty}d <0\\
   x>\frac{\sqrt{2\tilde{a}_{\infty}d}-\tilde{b}_{\infty}}{\tilde{a}_{\infty}}
  \end{align*}
  Note that by definition $\gamma(\tilde{a}_{\infty},\tilde{b}_{\infty})=\frac{\sqrt{2\tilde{a}_{\infty}d}-\tilde{b}_{\infty}}{\tilde{a}_{\infty}}$. Hence, the optimal policy of $J_0(x)$ can be written as }
  \begin{equation*}
%\label{eqn:fixed-pi-prime}
{\pi}_0(x)=
\begin{cases}
0, & \text{ if } x \le \gamma(\tilde{a}_{\infty},\tilde{b}_{\infty}) \\
\tilde{a}_{\infty}x+\tilde{b}_{\infty}, &\text{ otherwise}.
\end{cases}
\end{equation*}
 Using~\eqref{eqn:game-reference-fixed-cost}, we see that
 \begin{equation}
 \label{eqn:game-fixed-a-0}
     a_0=
     \begin{cases}
     a_{0,0}=0 &\text{ if }u \le \gamma(\tilde{a}_{\infty},\tilde{b}_{\infty})\\
     a_{0,1}=\tilde{a}_{\infty}, &\text{ otherwise}.
     \end{cases}
     \end{equation}
      \begin{equation}
      \label{eqn:game-fixed-b-0}
      b_0=
     \begin{cases}
     b_{0,0}=0 &\text{ if }u \le \gamma(\tilde{a}_{\infty},\tilde{b}_{\infty})\\
     b_{0,1}=\tilde{b}_{\infty}, &\text{ otherwise}.
     \end{cases}
 \end{equation}
Following~\eqref{eqn:fixed-nash-j-prime-k},
\begin{align}
%\label{eqn:j-prime-k}
    J_1(x)=&\min\{(\psi+x)\psi,
    \min_{u\in [0,\psi]}\{(\psi+x-u)(\psi-u)+d\nonumber\\
    &+u(u+p(\psi-{a}_{0}u-{b}_{0}))\}.
\end{align}
From~\eqref{eqn:game-fixed-a-0} and~\eqref{eqn:game-fixed-b-0} it can be observed that for all $u \in [0,\psi]$ \begin{align*}
(\psi+x-u)(\psi-u)+d+u(u+p(\psi-{a}_{0,0}u-{b}_{0,0}))>\\
(\psi+x-u)(\psi-u)+d+u(u+p(\psi-{a}_{0,1}u-{b}_{0,1})).
\end{align*}
Hence,
\begin{align*}
\min_{u \in [0,\psi]}(\psi+x-u)(\psi-u)+d+u(u+p(\psi-{a}_{0,0}u-{b}_{0,0}))>\\
\min_{u \in [0,\psi]}(\psi+x-u)(\psi-u)+d+u(u+p(\psi-{a}_{0,1}u-{b}_{0,1})).
\end{align*}
Therefore,
\begin{align*}
\min_{u \in [0,\gamma(\tilde{a}_{\infty},\tilde{b}_{\infty})]}(\psi+x-u)(\psi-u)+d+u(u+p(\psi-{a}_{0,0}u-{b}_{0,0}))>\\
\min_{u \in [0,\psi]}(\psi+x-u)(\psi-u)+d+u(u+p(\psi-{a}_{0,1}u-{b}_{0,1})).
\end{align*}
Now, let us look at the following cost function.
\begin{align}
%\label{eqn:j-prime-k}
    \tilde{J}_1(x)=&\min\{(\psi+x)\psi,
    \min_{u\in [0,\psi]}\{(\psi+x-u)(\psi-u)+d\nonumber\\
    &+u(u+p(\psi-{a}_{0,1}u-{b}_{0,1}))\}
\end{align}
It can be observed that the optimal function to the above cost function is
 \begin{equation}
\label{eqn:fixed-pi-prime-1-game}
\tilde{\pi}_1(x)=
\begin{cases}
0, & \text{ if } x \le \gamma(\tilde{a}_{\infty},\tilde{b}_{\infty}) \\
\tilde{a}_{\infty}x+\tilde{b}_{\infty}, &\text{ otherwise}.
\end{cases}
\end{equation}
As $\tilde{a}_{\infty}\gamma(\tilde{a}_{\infty},\tilde{b}_{\infty})+\tilde{b}_{\infty}>\gamma(\tilde{a}_{\infty},\tilde{b}_{\infty})$, it can be observed that~\eqref{eqn:fixed-pi-prime-1-game} is also a solution to
\begin{align}
%\label{eqn:j-prime-k}
    J_1(x)=&\min\{(\psi+x)\psi,
    \min_{u\in [\gamma(\tilde{a}_{\infty},\tilde{b}_{\infty}),\psi]}\{(\psi+x-u)(\psi-u)+d\nonumber\\
    &+u(u+p(\psi-{a}_{0,1}u-{b}_{0,1}))\}
\end{align}
Hence,
 \begin{equation*}
\label{eqn:fixed-pi-1-game}
{\pi}'_1(x)=
\begin{cases}
0, & \text{ if } x \le \gamma(\tilde{a}_{\infty},\tilde{b}_{\infty}) \\
\tilde{a}_{\infty}x+{b}_{\infty}, &\text{ otherwise}.
\end{cases}
\end{equation*}
Similar argument can be extended to show that
for all $k \ge 0$, $\pi'_k(x)=\pi'_0(x)$. Hence,
\begin{equation*}
\label{eqn:fixed-pi-1-game}
{\pi}'(x)=
\begin{cases}
0, & \text{ if } x \le \gamma(\tilde{a}_{\infty},\tilde{b}_{\infty}) \\
\tilde{a}_{\infty}x+{b}_{\infty}, &\text{ otherwise}.
\end{cases}
\end{equation*}}

\color{black}
%\section{}
%\subsection{Proof of Lemma~\ref{lemma:monotonicity-ak-bk}}
\remove{
\label{app:monotonicity-ak-bk}
$(a)$ Notice the mapping $a \mapsto 1+d-\frac{p}{1+a}$ is monotonically increasing.
Further, $a^\ast_0 = 1+d$ and $a^\ast_1 = 1+d-\frac{p}{2+d} < a^\ast_0$.
Therefore the sequence $a^\ast_k, k\geq 0$ is monotonically decreasing.
It is also non negative, and so, lower bounded. There two solutions to the fixed point of $a = 1+d-\frac{p}{1+a}$ are as follows.
\[
\frac{d+\sqrt{(d+2)^2-p}}{2},\frac{d-\sqrt{(d+2)^2-p}}{2}.
\]
As $p \ge 0,d>0$ the following holds
\[
\frac{d-\sqrt{(d+2)^2-p}}{2}<\frac{d+\sqrt{(d+2)^2-p}}{2}\le 1+d
\]
Hence it converges to $a_\infty$, the largest fixed point
of $a = 1+d-\frac{p}{1+a}$.\\
$(b)$ We first show that $b^\ast_i, i \geq 0$ are bounded. Towards this, observe that
$b^\ast_i \leq {pb^\ast_{i-1} }+ {2p(1+d)\psi}$ for all $i \geq 1$. In particular,
$b^\ast_1 \leq {pb^\ast_{0} }+ {2p(1+d)\psi},~b^\ast_2 \leq {pb^\ast_{1} }+ {2p(1+d)\psi} \leq {p^2b^\ast_{0} }+ {2p^2(1+d)\psi}+ 2p(1+d)\psi $, and in general,
$b^\ast_i \leq \frac{2p(1+d)\psi}{1-p}$. This proves the claim.

Next, we observe that $b_\infty$ as defined in the statement of the lemma
is the fixed point of
\[
b =\frac{p(2a_{\infty}\psi+b)}{1+a_{\infty}}.
\]
Now, we define $\delta_i=b^\ast_i-b_{\infty}$ and show that $|\delta_i| \to 0$, which
yields the desired result. Note that
\begin{align*}
\delta_{i+1} &= b^\ast_{i+1} - b_\infty\\
             &=\frac{p(2a^\ast_i\psi+b^\ast_i)}{1+a^\ast_i} - \frac{p(2a_\infty\psi+b_\infty)}{1+a_\infty}\\
             &=  2p\psi\left(\frac{a^\ast_i}{1+a^\ast_i}- \frac{a_\infty}{1+a_\infty}\right)+\frac{pb^\ast_i}{1+a^\ast_i} - \frac{pb^\ast_i}{1+a_\infty}\\
              & \ \ \ + \frac{pb^\ast_i}{1+a_\infty} - \frac{pb_\infty}{1+a_\infty}\\
              &= \Delta_i + \bar{p}\delta_i,
\end{align*}
where
\[
\Delta_i = 2p\psi\left(\frac{a^\ast_i}{1+a^\ast_i}-\frac{a_{\infty}}{1+a_{\infty}}\right) + pb^\ast_i\left(\frac{1}{1+a^\ast_i} - \frac{1}{1+a_\infty}\right)
\]
and $\bar{p} = \frac{p}{1+a_\infty} < 1$. From triangle inequality,
$|\delta_{i+1}| \leq |\Delta_i|+\bar{p}|\delta_i|$. Moreover, since $a_i \to a_\infty$
and $b_i, i\geq0$, are bounded, $\Delta_i \to 0$. Hence, for any $\epsilon > 0$,
there exits a $i_\epsilon$ such that for all $i \geq i_\epsilon$, $\Delta_i \leq \epsilon$.
Hence $|\delta_{i_\epsilon + 1}| \leq  \bar{p}|\delta_{i_\epsilon}| + \epsilon$,
$|\delta_{i_\epsilon + 2}| \leq  \bar{p}^2|\delta_{i_\epsilon}| + \bar{p}\epsilon +
\epsilon$. In general,
\[
|\delta_i| \leq  \bar{p}^{(i-i_\epsilon)}|\delta_{i_\epsilon}| + \frac{\epsilon}{1-\bar{p}}
\]
for all $i \geq i_\epsilon$. So, $\lim_{i \to \infty}|\delta_i| \leq \frac{\epsilon}{1-\bar{p}}$.
Since $\epsilon$ can be chosen arbitrarily close to $0$,
$\lim_{i \to \infty}|\delta_i| = 0$.\\
$(c)$We have $b^\ast_0 = 0 < 2\psi$. Now, assuming
$b^\ast_i < 2\psi$ for some $i$,
\[
b^\ast_{i+1} = \frac{p(2a^\ast_i\psi + b^\ast_i)}{1+a^\ast_i}  < \frac{2p\psi(1+a^\ast_i)}{1+a^\ast_i}<2\psi.
\]
Hence, by induction, $b^\ast_i < 2\psi$ for all $i \geq 0$.
\subsection{Proof of Lemma~\ref{lemma:monotonicity-xk}}
\label{app:monotonicity-xk}
$(a)$To prove $\frac{2(x+\psi)-b_k}{2(1+a_k)}>0$, it suffices to prove the claim for $x = 0$. From Lemma~\ref{lemma:monotonicity-ak-bk}$(c)$, the claim holds.\\
$(b)$Since $a_0=1+d$ and $b_0=0$, we clearly see that $\frac{2(x+\psi)-b_0}{2(1+a_0)}<\psi, \forall x \in [0,\psi]$.
We inductively prove that $\frac{2(x+\psi)-b_k}{2(1+a_k)}<\psi,\forall k \geq 0$. Let the result hold for the $k$-stage problem,
\begin{equation}
\label{eqn:pi-k-less-shi}
\frac{2(x+\psi)-b_k}{2(1+a_k)}<\psi, \forall x \in [0,\psi].
\end{equation}
We argue that
\begin{equation*}
\frac{2(x+\psi)-b_{k+1}}{2(1+a_{k+1})}<\psi, \forall x \in [0,\psi].
\end{equation*}
Since the left hand side is increasing in $x$, it suffice to show
that
\begin{align*}
\frac{4\psi-b_{k+1}}{2(1+a_{k+1})} &< \psi \\
\text{or, } 2\psi a_{k+1}+b_{k+1} &> 2\psi.
\end{align*}
Using~\eqref{eqn:ak-star} and ~\eqref{eqn:bk-star},
\begin{align*}
\lefteqn{2\psi a_{k+1}+b_{k+1}} \\
& = 2\psi \left(1+d-\frac{p}{1+a_k}\right)+\frac{p(2a_k\psi+b_k)}{(1+a_k)}\\
& = 2\psi(1+d)+\frac{(b_kp+2pa_k\psi-2p\psi)}{(1+a_k)}\\
& = 2\psi(1+d)+p\frac{2\psi a_k+b_k-2\psi}{(1+a_k)}\\
& > 2\psi,
\end{align*}
where the last inequality is obtained by setting $x = \psi$ in~\eqref{eqn:pi-k-less-shi}.
This completes the induction step.
}
\remove{
\section*{Appendix C.Proof of Theorem~\ref{thm:optimal-policy}}
\label{app:optimal-policy}
Let us first recall the notions of $k$-stage problems and $k$-stage
optimal cost functions $J_k$. For all $k \geq 0$, we will express
$J_k$ as
\begin{equation}
\label{eqn:jk-general-form}
J_{k}(x)=\min_{u \in [0,\psi]}\left\{(\psi-u+x)^2+dx^2+a_k u^2+b_k u +c_k\right\}.\hspace{-0.1in}
\end{equation}
Comparing with~\eqref{eqn:J0}, $a_{0}=a^\ast_{0},b_{0}=b^\ast_{0},c_{0}=0$.

Considering the form of $J_k$ in~\eqref{eqn:jk-general-form},
the optimal policy for the $k$-stage problem
\begin{equation}
\label{pi-k-bounded}
\pi_{k}(x) =\min\left\{\max\left\{\frac{2(x+\psi)-b_k}{2(1+a_k)},0\right\},\psi\right\}.
\end{equation}
Using Lemma~\ref{lemma:monotonicity-xk} for $k = 0$, ~\eqref{pi-k-bounded} can be written as
\[
\pi_{0}(x) =\frac{2(x+\psi)-b^\ast_{0}}{2(1+a^\ast_{0})},
\]
and hence
\begin{align*}
J_0(x)=(\frac{2a^\ast_{0}(x+\psi)+b^\ast_{0}}{2(1+a^\ast_{0})})^2+dx^2 +a^\ast_{0}(\frac{2(x+\psi)-b^\ast_{0}}{2(1+a^\ast_{0})})^2\\
\ \ \ \ \ \ \ \ + b^\ast_{0}(\frac{2(x+\psi)-b^\ast_{0}}{2(1+a^\ast_{0})})+c^\ast_{0}
\end{align*}
where $c^\ast_{0}$ is a certain constant. Therefore, using~\eqref{eqn:Jk}, $a_{1}=a^\ast_{1},b_{1}=b^\ast_{0}$. Therefore again using~\eqref{pi-k-bounded}, Lemma~\ref{lemma:monotonicity-xk} for $k = 1$, it can be shown that
\[
\pi_{1}(x) =\frac{2(x+\psi)-b^\ast_{1}}{2(1+a^\ast_{1})}.
\]
Continuing in the same fashion, we see that for all $k \ge 1$
\[
\pi_{k}(x) =\frac{2(x+\psi)-b^\ast_{k}}{2(1+a^\ast_{k})}.
\]
Further, from Lemma~\ref{lemma:monotonicity-xk} and {\it $a^\ast_{k} \to a_\infty,b^\ast_k \to b_{\infty}$ as $k \to \infty$} it can be observed that
\[
\pi^\ast(x) =\frac{2(x+\psi)-b_\infty}{2(1+a_\infty)}.
\]
% \section{}
\remove{
\subsection{Proof of Lemma~\ref{lemma:nash-properties-a-b}}
\label{appendix:nash-properties-a-b}
$(a)$ Notice that the mapping $a \mapsto \frac{1}{4+2d-2pa}$ is monotonically increasing.
Further, $a'_0 > a'_{-1}$. Therefore the sequence $a'_k, k\geq -1$ is monotonically increasing.
%It is also nonnegative, and so, lower bounded.
Hence it converges to $a'_\infty$, the smallest fixed point
of $a = \frac{1}{4+2d-2pa}$.

Using $p \in [0,1]$ and the definition of $a'_{\infty}$ the following holds
\[
a'_{\infty}<\frac{1+\frac{d}{2}}{p}.
\]
$(b)$ As $\frac{4+2d}{2p}>\frac{1+\frac{d}{2}}{p}$, observe that $a'_{\infty} < \frac{4+2d}{2p}$. Hence $4+2d-2pa'_k$ is decreasing but strictly positive.
We now show that $b'_i, i \geq 0$ are bounded. Towards this, observe that
$b'_i \leq pb'_{i-1} + (2-p)\psi $ for all $i \geq 1$. In particular,
$b'_1 \leq pb'_0 + (2-p)\psi $, $b'_2 \leq p^2 b'_0 + p((2-p)\psi) +(2-p)\psi$, and in general,
$b'_i \leq b'_0 + \frac{(2-p)\psi}{1-p}$. This proves the claim.

Next, we observe that $b_\infty$ as defined in the statement of the lemma
is the fixed point of
\[
    b=\frac{(2-p)\psi+pb}{4+2d-2pa'_{\infty}}.
    \]
Now, we define $\delta_i=b^\ast_i-b_{\infty}$ and show that $|\delta_i| \to 0$, which
yields the desired result. Note that
\begin{align*}
\delta_{i+1} &= b'_{i+1} - b'_\infty\\
             &=\frac{(2-p)\psi+pb'_i}{4+2d-2pa'_{i}} - \frac{(2-p)\psi+pb' _{\infty}}{4+2d-2pa'_{\infty}}\\
             &= ((2-p)\psi+pb'_i)(\frac{1}{4+2d-2pa'_{i}}-\frac{1}{4+2d-2pa'_{\infty}})\\
              & \ \ \ +\frac{p}{4+2d-2pa'_{\infty}}(b'_i-b'_{\infty})\\
              &= \Delta_i + \bar{p}\delta_i,
\end{align*}
where
\[
\Delta_i = ((2-p)\psi+pb'_i)(\frac{1}{4+2d-2pa'_{i}}-\frac{1}{4+2d-2pa'_{\infty}})
\]
and $\bar{p} = \frac{p}{4+2d-2pa'_{\infty}} < 1$. From triangle inequality,
$|\delta_{i+1}| \leq |\Delta_i|+\bar{p}|\delta_i|$. Moreover, since $a'_i \to a'_\infty$
and $b'_i, i\geq 0$, are bounded, $\Delta_i \to 0$. Hence, for any $\epsilon > 0$,
there exits a $i_\epsilon$ such that for all $i \geq i_\epsilon$, $\Delta_i \leq \epsilon$.
Hence $|\delta_{i_\epsilon + 1}| \leq  \bar{p}|\delta_{i_\epsilon}| + \epsilon$,
$|\delta_{i_\epsilon + 2}| \leq  \bar{p}^2|\delta_{i_\epsilon}| + \bar{p}\epsilon +
\epsilon$. In general,
\[
|\delta_i| \leq  \bar{p}^{(i-i_\epsilon)}|\delta_{i_\epsilon}| + \frac{\epsilon}{1-\bar{p}}
\]
for all $i \geq i_\epsilon$. So, $\lim_{i \to \infty}|\delta_i| \leq \frac{\epsilon}{1-\bar{p}}$.
Since $\epsilon$ can be chosen arbitrarily close to $0$,
$\lim_{i \to \infty}|\delta_i| = 0$.
\subsection{Proof of Lemma~\ref{lemma:quad-game-no-caps}}
\label{app:quad-game-no-caps}
To prove $a'_kx+b'_k>0,\forall x \in [0,\psi],\forall k \ge 0$,
it suffices to prove for $x=0$. It in turn implies $b'_k \ge0$. From~\eqref{eqn:b-case2-nash} it can be seen that
$b'_{-1}>0$. Also from Lemma~\ref{lemma:nash-properties-a-b} we can observe that $a_i \le a'_\infty<\frac{1+\frac{d}{2}}{p},\forall i\ge 0$. Therefore, the following holds true
\begin{align*}
    a'_i<\frac{4+2d}{2p}.
\end{align*}
Hence,
\[
    4+2d-2pa'_i>0.
\]
Also, $2-p \ge 0$, thus $b'_i>0,\forall i \ge 0$. Hence,
\[
a'_kx+b'_k>0,\forall x \in [0,\psi],\forall k \ge 0.
\]
To prove $a'_kx+b'_k<\psi,\forall x \in [0,\psi]$ it suffices to prove $a'_k\psi+b'_k<\psi$. From~\eqref{eqn:a-case2-nash},\eqref{eqn:b-case2-nash} it can  be verified that
\[
a'_0\psi+b'_0<\psi.
\]
We inductively prove that $a'_k\psi+b'_k<\psi,\forall k \ge 0$. Let the following result hold
\begin{equation}
\label{eqn:induction-step-game-quad}
a'_k\psi+b'_k<\psi.
\end{equation}
We argue that
\[
a'_{k+1}\psi+b'_{k+1}<\psi.
\]
Using~\eqref{eqn:a-case2-nash} and~\eqref{eqn:b-case2-nash} we have
\begin{align*}
    a'_{k+1}\psi+b'_{k+1}=\frac{\psi+(2-p)\psi+pb'_k}{2(2+d-a'_kp)}
    \end{align*}
    Hence, it suffices to prove the following
    \begin{align*}
   \frac{\psi+(2-p)\psi+pb'_k}{2(2+d-a'_kp)}&<\psi\\
   2a'_kp\psi+pb'_k&<(1+p+2d)\psi
    \end{align*}
Using~\eqref{eqn:induction-step-game-quad} it is enough to show the following
    \begin{align*}
      a'_kp\psi+p\psi&<(1+p+2d)\psi\\
      a'_k &<\frac{1+2d}{p}
    \end{align*}
    Last inequality holds true from Lemma~\ref{lemma:nash-properties-a-b}$(a)$. This completes the induction step. Hence the lemma follows.
    }
    \section*{Appendix D. Proof of Theorem~\ref{thm:quad-nash-equilibrium}}
 \label{appendix:quad-nash-equilibrium}
 Let us first recall the notion of $k$-stage problems and the corresponding optimal strategies. For all $k \ge 0$, we will express $\pi'_k(\cdot)$ as
 \[
 \pi'_k(x)=a_kx+b_k.
 \]
 Recall the functions $C_k(\cdot), k \ge 0$%~(see~\eqref{eqn:nash-c-0-cost}-\eqref{eqn:nash-c-k-cost});
 for $k \geq 0$,
 \begin{align*}
 C_k(x)=\min_{u \in [0,\psi]}\left\{(\psi-u)(\psi-u+x)+u(u(1+d) \right.\nonumber \\
 \left.+p(\psi-a_{k-1}u - b_{k-1}))\right\}
 \end{align*}
 From~\cite[Chapter~2, Proposition~1.2(b)]{b21}, $C_k(\cdot)$s converge to the optimal cost function $C(\cdot)$ and $\pi'_k(\cdot)$ converge to $\pi'(\cdot)$ irrespective of the initial function $C_0(x)$ in the value iteration. Now, we analyze value iteration starting with $a_{-1} = a'_{-1}$ and $b_{-1} = b'_{-1}$. Considering the form of $C_k(x)$,
 \[
 \pi'_k(x)=\min\left\{\max\left\{\frac{x+(2-p)\psi+pb_{k-1}}{4+2d-2pa_{k-1}},0\right\},\psi\right\}.
 \]
 for all $k \ge 0$. From Lemma~\ref{lemma:quad-game-no-caps}, it can be seen that $0<\pi'_0<\psi$. Hence,
 \[
 \pi'_0(x)=\frac{x+(2-p)\psi+pb'_{-1}}{4+2d-2pa'_{-1}}.
  \]
Using~\eqref{eqn:a-case2-nash} and~\eqref{eqn:b-case2-nash}, it can be seen that $a_0=a'_0,b_0=b'_0$. Hence,
 \[
 \pi'_1(x)=\min\left\{\max\left\{\frac{x+(2-p)\psi+pb'_{0}}{4+2d-2pa'_{0}},0\right\},\psi\right\}.
  \]
Thus again using Lemma~\ref{lemma:quad-game-no-caps}, it can be seen that $0<\pi'_1<\psi$.
\[
 \pi'_1(x)=\frac{x+(2-p)\psi+pb'_{0}}{4+2d-2pa'_{0}}.
\]
Similarly it can be argued that for all $k \ge 0$
\[
 \pi'_k(x)=\frac{x+(2-p)\psi+pb'_{k-1}}{4+2d-2pa'_{k-1}}.
\]
From Lemma~\ref{lemma:nash-properties-a-b} as $\{a'_k\},\{b'_k\}$ converge to $a'_\infty,b'_\infty$ respectively, optimal policy $\pi'(x)$ can be written as
\[
 \pi'(x)=\frac{x+(2-p)\psi+pb'_{\infty}}{4+2d-2pa'_{\infty}}=a'_{\infty}x+b'_{\infty}.
\]
}
%\bibliographystyle{IEEEtran}
%\bibliography{references}
\end{document}